\pgfplotsset{compat=1.5}
\newtheorem{lemma}{Lemma}
\newtheorem{proposition}{Proposition}
\newtheorem{theorem}{Theorem}
\newtheorem*{theorem*}{Theorem}
\newtheorem{assumption}{Assumption}
\newtheorem{remark}{Remark}
\newcommand{\aF}[1][x]{\alpha^*_F(#1)}
\newcommand{\bF}[1][x]{\beta^*_F(#1)}
\newcommand{\aFs}[1][x]{\tilde{\alpha}(#1)}
\newcommand{\bFs}[1][x]{\tilde{\beta}(#1)}
\DeclareMathOperator*{\argmax}{arg max}
\date{Current version: January 27, 2026\\Original version: February 10, 2025}
\title{Randomly Wrong Signals:\\ Bayesian Auction Design with ML Predictions\footnote{A previous version of this paper was entitled ``Auction Design using Value Prediction with Hallucinations.''}}
\begin{document}

\author[1]{Ilan Lobel}
\author[2]{Humberto Moreira}
\author[3]{Omar Mouchtaki}
\affil[1]{NYU Stern School of Business, \texttt{ilobel@stern.nyu.edu}}
\affil[2]{FGV/EPGE Escola Brasileira de Economia e Finança, \texttt{humberto.moreira@fgv.br}}
\affil[3]{NYU Stern School of Business, \texttt{om2166@stern.nyu.edu}}

\maketitle

\vspace{-.1in}
\begin{abstract}
We study auction design when a seller relies on machine-learning predictions of bidders’ valuations that may be unreliable. Motivated by modern ML systems that are often accurate but occasionally fail in a way that is essentially uninformative, we model predictions as randomly wrong: with high probability the signal equals the bidder’s true value, and otherwise it is a hallucination independent of the value. We analyze revenue-maximizing auctions when the seller publicly reveals these signals. A central difficulty is that the resulting posterior belief combines a continuous distribution with a point mass at the signal, so standard Myerson techniques do not directly apply. We provide a tractable characterization of the optimal signal-revealing auction by providing a closed-form characterization of the appropriate ironed virtual values. This characterization yields simple and intuitive implications. With a single bidder, the optimal mechanism reduces to a posted-price policy with a small number of regimes: the seller ignores low signals, follows intermediate signals, caps moderately high signals, and may again follow very high signals. With multiple bidders, we show that a simple eager second-price auction with signal-dependent reserve prices performs nearly optimally in numerical experiments and substantially outperforms natural benchmarks that either ignore the signal or treat it as fully reliable.

\textbf{Keywords:} Mechanism design, advice, hallucination, pricing, ad auctions
\end{abstract}

\doparttoc 
\faketableofcontents 

\section{Introduction}\label{sec:intro}

In this work, we study a Bayesian auction design problem where a seller aims to design a revenue-maximizing mechanism to sell an indivisible good to $n$ buyers. In the classical version of the problem \citep{myerson1981optimal}, each buyer's private value is independently drawn from a prior distribution, which is common knowledge to all agents. A classical feature of this problem is information asymmetry: while buyers know their own private values, the seller has no direct access to this information beyond the prior. However, in many practical applications of mechanism design, such as in advertising auctions, the seller often does possess additional information about buyers' private values. In particular, sellers can train a machine learning models to predict buyers' valuations. To do so, they can often rely on a wealth of data: past interactions with the same buyer, contextual information, and even bids by similar buyers.

To leverage these data, the auctioneer may rely on machine learning systems that output predictions intended to help assess each buyer's value. A central modeling choice is then how to represent prediction error. A classical approach treats the prediction as a statistically noisy measurement, centered around the true value with variance capturing estimation error and data noise. By contrast, modern predictors estimate conditional value distributions using high-dimensional contextual features learned from past data. When the relevant context is well represented in the training dataset, this conditioning can be informative, but when the context is novel or poorly learned, the resulting output can become weakly correlated with the true value. Even worse, such systems typically lack any sort of uncertainty quantification, making it difficult to decide whether the ML output should be used or discarded. They can appear confident in their predictions even when such predictions are completely erroneous. This failure mode motivates our focus on signals that are occasionally accurate but sometimes entirely uninformative. We refer to such signals as hallucination-prone.

Motivated by the increasing importance of such hallucination-prone models, we study a Bayesian auction design problem in which the seller observes
a machine learning prediction (signal) for each buyer that may be randomly wrong.
Specifically, in our framework, each buyer's private value is independently drawn from a known prior distribution, and the seller observes a signal for each buyer. This signal either equals the buyer's private value or, with some probability, is independently sampled and uncorrelated with the buyer's value. We call such uncorrelated signals  \textit{hallucinations}. 
This ``randomly wrong'' formulation is intentionally distinct from adversarial signal models. We do not assume an adversary chooses when or how the prediction fails. Instead, the seller faces a probabilistic reliability risk, which captures the practical concern that complex
ML systems can be highly informative on familiar inputs while occasionally producing confident but uninformative outputs on poorly learned or novel contexts. As such, while our paper is inspired by the recent literature on learning-augmented algorithms, our approach is quite different from prior work (see \Cref{sec:lit_review}). Instead of the two-objective approach common in the computer science literature, we propose using a classical Bayesian framework to analyze our problem. 

\subsection{Contributions}

\paragraph{Optimal signal-revealing mechanism}
We first establish that, in our model, allowing arbitrary mechanisms enables the seller to extract the full surplus via highly non-credible designs that condition on unverifiable private signals. Motivated by the resulting implementability concerns, we therefore focus on optimal \emph{signal-revealing} direct mechanisms, in which the seller publicly discloses the signals.
Our main technical contribution is a characterization of the optimal signal-revealing mechanism when the seller observes signals about the buyers' private values. In this setting, the seminal characterization of \citet{myerson1981optimal} does not apply, because the posterior distribution induced by the signal contains an atom and therefore does not admit a continuous density. We instead build on the more general formalism developed by \citet{monteiro2010optimal} for optimal auctions under arbitrary distributions. While their approach characterizes ironed virtual values via infinitely many semi-infinite linear optimization problems, we show in \Cref{thm:main} that in our setting this characterization admits a closed-form solution. Even when the prior distribution is regular, the posterior necessarily requires ironing due to the point mass at the signal. Our characterization reveals an almost decomposable structure: the ironing for values below and above the realized signal can be performed nearly independently.

\paragraph{Structure of the optimal mechanism for a single buyer.}
We then specialize the general characterization in \Cref{thm:main} to the single-buyer environment, which allows us to fully solve the corresponding pricing problem. In this case, the optimal signal-revealing mechanism reduces to a posted-price rule whose structure depends on the realized signal. We show in \Cref{prop:optimal_price_corrected} that when the prior distribution is regular and log-concave, the optimal price follows a simple four-regime pattern as a function of the signal.
Depending on the signal level, the seller optimally (i) ignores very low signals and posts the prior monopoly price, (ii) follows intermediate signals and posts a price equal to the signal, (iii) caps high signals by posting a price below the signal to hedge against the possibility that the signal is a hallucination, and (iv) resumes following for very high signals, where the risk of an error is offset by considerable gains.

We numerically illustrate that these regime changes are not a minor quantitative adjustment. They are qualitatively different from
the structure induced by classical statistical noise models, where optimal policies typically resemble smooth shrinkage toward the prior. This contrast highlights that the structure of the optimal auction depends critically on the chosen error model, not only on a prediction's average accuracy. We also observe that the four-regime structure is robust when Gaussian noise is added to the signal.

\paragraph{Designing simple auctions with personalized reserves.}
In multi-bidder settings, the optimal signal-revealing auction characterized in \Cref{thm:main} can be complex because the realized signals create asymmetries across bidders. We therefore focus on a standard workhorse format in practice, especially in online advertising: a second-price auction with personalized reserve prices. We implement reserves in the \emph{eager} manner, meaning that bidders whose bids fall below their reserves are first discarded, and the second-price
auction is then run among the remaining bidders \citep{paes2016field}.

A natural approach is to take the bidder-specific reserve prices suggested by the optimal mechanism in \Cref{thm:main} and plug them into this eager format. Somewhat surprisingly, this direct translation can reduce revenue, and can perform worse than the naive eager benchmark that simply sets each bidder's reserve equal to their realized signal.
Guided by our characterization in the one-buyer setting, we design modified personalized reserves that adjust the optimal reserve prices to account for eager filtering and to hedge against hallucinated outliers. We prove in \Cref{thm:signal_worse_than_cap} that the resulting eager mechanism attains higher expected revenue than the signal-as-reserve eager benchmark. Numerically, our proposed eager auction is consistently near-optimal, typically achieving at least 97\% of the optimal signal-revealing revenue while improving the signal-as-reserve eager benchmark by about 15\% to 20\% for heavier-tailed distributions.

\subsection{Literature Review}\label{sec:lit_review}
Mechanism design and auction theory have been very active areas of research since at least the 1960s, including the celebrated Vickrey-Clarke-Groves framework for welfare maximization \citep{vickrey1961counterspeculation,clarke1971multipart,groves1973incentives}. 
\citet{myerson1981optimal} laid the foundation for the literature on revenue maximization, proving many of the results that we build on: revelation principle, the role of the virtual value and the ironing procedure. We also build closely on \citet{monteiro2010optimal}, who developed techniques for ironing virtual values in settings where the priors do not have densities. 
For general distributions, the complexity of the revenue-maximizing auction derived in \citet{myerson1981optimal} has motivated extensive research into simple and more practical mechanisms that are easier to implement while remaining near-optimal \citep{hartline2009simple,roughgarden2019approximately}. 
Our work contributes to this literature by modeling a practical setting in which the seller relies on machine learning algorithms that provide hallucination-prone predictions and by studying the design of optimal mechanisms that are robust to such predictive errors.

Recent research has begun exploring the use of modern generative models such as large language models (LLMs) to help bidders uncover or refine their valuations from rich, unstructured information. \cite{sun2024large} develops the Semantic-enhanced Personalized Valuation in Auction (SPVA) framework, using semantic embeddings of product and preference data to support personalized valuation, while \cite{waseem2023artificial} reports the development of an LLM-based assistant for bid evaluation in procurement. \cite{huang2025accelerated} develops LLM-based \emph{proxy} designs for preference elicitation in combinatorial auctions, using natural-language interaction to reduce the communication burden.
Beyond bidder-side assistance, a small but growing line of work studies how LLMs can be incorporated directly into auction design. The concurrent work \cite{bergemann2025data} examines welfare-maximizing mechanisms that use side information about the state to structure payments (while restricting allocation to depend only on bidder messages), and \cite{sun2025role} studies prediction-based prescreening rules that affect participation and revenue across standard auction formats. 
Our work complements this emerging literature by providing a stylized and tractable model of random prediction error motivated by  modern generative systems.

Our work also relates to the literature on learning-augmented algorithms, also known as algorithms with predictions/advice in which a decision-maker has access to some prediction with unknown accuracy. The standard goal in this literature is to design algorithms that achieve a good trade-off between two performance metrics: consistency, which is the performance if the predictions are perfect, and robustness, which corresponds to the performance when the predictions are adversarial \citep{purohit2018improving,lykouris2021competitive}. This framework has been applied to the study of several problems in various fields. More recently, \citep{agrawal2022learning,balkanski2022strategyproof,gkatzelis2022improved,banerjee2022online,xu2022mechanism} studied learning-augmented algorithms in the context of strategic agent problems, including mechanism design.  In particular, \cite{xu2022mechanism,balkanski2023online,caragiannis2024randomized} and \citet{lu2024competitive} consider the auction design problem in which the private value of the agents is arbitrarily chosen and in which the seller observes a prediction with unknown precision. The goal of these works is to design a mechanism that performs well in both good consistency and robustness. In a similar vein, \citet{balcan2023bicriteria} propose a welfare-efficient mechanism for settings with ML signals that offers a minimum revenue guarantee. Our work is conceptually related to this literature as we also assume that the seller observes a prediction which can be used to infer the values of the buyers. A key modeling distinction is that we consider a fully Bayesian setting in which the values of the buyers are sampled from a known distribution and in which our model for prediction errors assumes that the ML algorithm is ``randomly'' wrong, as opposed to adversarially wrong.

Our work also relates to the design of data-driven mechanisms, which uses a finite set of samples, independently drawn from the buyer value distribution, to design mechanisms \citep{cole2014sample,gonczarowski2017efficient,guo2019settling}.
More recent work has addressed mechanisms that account for potentially corrupted samples \citep{cai2017learning,brustle2020multi,guo2021robust,besbes2022beyond}. Closely related is \citet{devanur2016sample}, who study auctions with side information and focus on the sample complexity of learning near-optimal mechanisms. In contrast, we abstract away from learning and instead focus on the behavior of machine learning predictions rather than the data-generation process. In particular, we analyze the impact of signals that may fail in an uninformative way, which fundamentally alters the structure of the optimal mechanism.

Our analysis focuses on {\it direct signal-revealing mechanisms}, in which the seller publicly discloses the signals she observes about buyers’ valuations. We adopt public disclosure as a benchmark to isolate the effects of hallucination-prone signals while ruling out revenue gains that rely on non-credible or opaque uses of information. This modeling choice contrasts with the informed principal framework of \citet{maskin&tirole1990informedprincipal}, who study environments in which the principal possesses private information and strategically designs contracts to influence agents’ beliefs. Their non-cooperative analysis emphasizes the sorting effects of contract proposals and shows that the principal generally does not lose by withholding information. 

Our work also relates to the broader literature on mechanism design with an informed principal, which studies settings in which the mechanism designer has private information. \citet{Cella2008} shows that when the principal’s information is kept private in a two-sided private information environment, correlation can be exploited to extract a larger share of the surplus by preserving agents’ uncertainty about the principal’s type. 
\citet{Severinov2008} shows that, under Crémer–McLean–type conditions and sufficient informational richness, an informed principal can implement efficient allocations and capture the entire social surplus, a result that is related to our Proposition \ref{prop:full_surplus}. Overall, whereas the informed principal literature highlights the strategic value of private information, we study environments in which credibility considerations lead the seller to disclose the signals. This disclosure is not costless, but it does eliminate impractical and non-credible mechanisms from the feasible set.

\section{The Model}\label{sec:model}

A seller has one indivisible good to sell to $n$ potential buyers. Buyer $i$ has a private value
$v_i \in [a_i,b_i]$, drawn independently from a commonly known distribution $F_i$ with strictly positive density $f_i$.
We impose the standard regularity assumption used in the classical Bayesian
auction model \citep{myerson1981optimal}.

\begin{assumption}[Regularity]\label{ass:regular}
For every $i$, the virtual value function $ v - (1-F_i(v))/f_i(v)$ is non-decreasing on
$[a_i,b_i]$.
\end{assumption}

The seller observes a signal $s_i$ for each buyer $i$, produced by a machine learning
system that conditions on available data and context. A central feature of modern ML in deployed
settings is that, when the relevant context is well learned, predictions can be highly informative,
but when the context is novel or poorly represented, the output can become essentially uninformative.
We model this as a \emph{randomly wrong} signal: the prediction is correct with high probability,
and otherwise behaves like a hallucination that is independent of the buyer's true value.
Importantly, the failure is not adversarial. The seller faces a probabilistic reliability risk rather
than a worst-case manipulation of signals. 

Formally, for each buyer $i$, let $h_i \in \{0,1\}$ denote whether the signal is a hallucination, with
$\mathbb{P}(h_i=1)=\gamma_i \in (0,1)$. Let $w_i$ be an independent draw from $F_i$ (we consider
an extension where $w_i$ follows a different distribution in \Cref{sec:failure_myerson}). The model is:
\begin{equation*}
v_i \sim F_i,\qquad h_i \sim \mathrm{Bernoulli}(\gamma_i),\qquad w_i \sim F_i,
\end{equation*}
with $(v_i,h_i,w_i)$ independent, and
\begin{equation*}
s_i \;=\;
\begin{cases}
v_i, & \text{if } h_i=0 \quad \text{(accurate prediction)},\\
w_i, & \text{if } h_i=1 \quad \text{(hallucination)}.
\end{cases}
\end{equation*}
The seller knows $(F_i,\gamma_i)_{i=1}^n$ and observes $s_i$ but does not observe $(v_i,w_i,h_i)$. 

We will use $\bm{\gamma}$ and $\bm{s}$ to represent the vectors of hallucination probabilities and signals, respectively. Given a signal, the seller can perform a Bayesian update to obtain what we call the posterior distribution of a buyer's value. We will denote by $\bm{F}_{\bm{\gamma},\bm{s}}$ the posterior distribution of the buyers' values and by $\bm{F}_{\bm{\gamma},\bm{s},-i}$ the posterior distribution of the buyers' values excluding the $i^{th}$ buyer.

\paragraph{Discussion and modeling choice.} 
We keep the model intentionally parsimonious by summarizing the prediction technology's ability to condition on buyer-specific context through a single reliability parameter $\gamma_i$. Rather than explicitly modeling the (potentially high-dimensional) contextual covariates and the learning algorithm, we capture their net effect on informativeness via a binary regime: with probability $1-\gamma_i$ the predictor effectively conditions well and returns the exact value, while with probability $\gamma_i$ it conditions poorly and outputs a draw that is uncorrelated with the true value (modeled by sampling from the prior). This ``perfect conditioning versus complete failure'' specification is an extreme stylization. In practice, a predictor may be only partially informative even when it does not hallucinate, and a hallucinated output may retain some weak correlation with the truth. We adopt this extreme formulation because it isolates the economic implications of randomly wrong predictions, yields a posterior with a tractable structure, and allows us to derive sharp characterizations of optimal mechanisms and the key forces driving the resulting design. We numerically consider the case where there is Gaussian noise added to the signal in Subsection \ref{sec:noise}.

\paragraph{Objective.}
The question we aim to address in this paper is what is the seller's revenue-maximizing mechanism in the presence of this value prediction technology. A mechanism is defined by a pair $(\bm{x},\bm{p})$, where $\bm{x}$ (resp. $\bm{p}$) is an allocation (resp. payment) function which takes as input the vector of reported types $\bm{\theta}$ and the vector of observed signals $\bm{s}$ and outputs the vector of probability of allocation (resp. of payment) for each buyer. We assume that all agents have quasi-linear utilities. For a given vector of signals $\bm{s}$, we will explore the following problem:
\begin{subequations}
\label{eq:optimal_mechanism}
\begin{alignat}{2}
&\!\sup_{(\bm{x},\bm{p})} &\;& \mathbb{E}_{\bm{\theta} \sim \bm{F_{\gamma,s}}} \left[  \sum_{i=1}^n  p_i(\bm{\theta},\bm{s}) \right]    \\
&\text{s.t.} &      &  \mathbb{E}_{\bm{\theta_{-i}} \sim \bm{F}_{\bm{\gamma},\bm{s},-i}} \left[ \theta_i \cdot x_i(\theta_i,\bm{\theta}_{-i},\bm{s}) - p_i(\theta_i,\bm{\theta}_{-i},\bm{s}) \right] \nonumber \\ 
 &  &  & \qquad \geq \mathbb{E}_{\bm{\theta_{-i}} \sim \bm{F}_{\bm{\gamma},\bm{s},-i}} \left[ \theta_i \cdot x_i(\theta'_i,\bm{\theta}_{-i},\bm{s}) - p_i(\theta'_i,\bm{\theta}_{-i},\bm{s}) \right] \quad \text{for every $i, \theta_i, \theta'_i$,} \label{eq:IC} \\
 &  &  &\mathbb{E}_{\bm{\theta_{-i}} \sim \bm{F}_{\bm{\gamma},\bm{s},-i}} \left[ \theta_i \cdot x_i(\theta_i,\bm{\theta}_{-i},\bm{s}) - p_i(\theta_i,\bm{\theta}_{-i},\bm{s}) \right] \geq 0 \quad \text{for every $i, \theta_i$,} \label{eq:IR}\\
& & & \sum_{i=1}^n x_i(\bm{\theta},\bm{s}) \leq 1 \quad \text{for every $\bm{\theta}$.}
\end{alignat}
\end{subequations}

 \noindent \textbf{Signal-revealing direct mechanisms.} Problem \eqref{eq:optimal_mechanism} specifies the problem of finding the optimal signal-revealing direct mechanism. A direct mechanism is one where the seller chooses an incentive-compatible allocation and payment scheme, and asks the buyers to reveal their types. In standard mechanism design, restricting to direct mechanisms is without loss of optimality \citep{myerson1981optimal}. We define a signal-revealing mechanism to be one where the seller shares the signals alongside the allocation and payment rules.
 We note that, in general, the seller may achieve strictly higher revenue by employing mechanisms that are non-signal-revealing—that is, mechanisms in which the seller does not disclose their private signal to the buyer. In \Cref{sec:apx_full}, we demonstrate that allowing for such mechanisms enables the seller to extract the full surplus from the buyer by effectively punishing deviations when the reported type appears inconsistent with the seller’s private signal. This approach draws on the classic insight of ~\cite{cremer1988full}, who showed that correlated information between different buyers can be exploited to eliminate information rents.
While these mechanisms are theoretically optimal, they are often impractical. The seller commits ex ante to a rule that maps reported types and private signals into allocations and payments, but only discloses the signal \emph{after} observing the buyer’s report. However, the buyer has no way to verify that the seller is acting according to its true signal. Without a trusted third party or auditability, such mechanisms suffer from a severe credibility problem and may not be implementable in real-world settings.  
We also note that by assuming the mechanism is signal-revealing we made the formulation relatively straightforward: both the objective and the IC and IR constraints use the posterior distributions given signals rather than the priors.

\section{Bayesian Update and Applying Myerson}\label{sec:failure_myerson}

In our setting, the seller obtains the signals $s_i$ prior to selecting the mechanism. After obtaining $s_i$, the seller's posterior belief about $v_i$ is given by:
\begin{equation}\label{eq:density-f} f_{\gamma_i,s_i}^i(v) = \gamma_i \cdot f_i(v) + (1-\gamma_i)\cdot \delta_{s_i}(v),\end{equation}
where $\delta_{s_i}(\cdot)$ is the Dirac delta function that places a unit of mass at $s_i$ and zero mass everywhere else. Equivalently, the cumulative distribution satisfies:
\begin{equation}\label{eq:cumulative-F}
F_{\gamma_i,s_i}^i(v) = \begin{cases}
\gamma_i \cdot F_i(v) \quad \text{for $v<s_i$},\\
\gamma_i \cdot F_i(v) + (1-\gamma_i) \quad \text{for $v \geq s_i$}.
\end{cases}
\end{equation}

The question we aim to address can thus be rephrased as what is the revenue-maximizing auction when the valuation of buyer $i$ is drawn according to $F_{\gamma_i,s_i}^i$. 

\vspace{.1in}
\noindent \textbf{On the distribution of hallucinations.} We will assume throughout the paper that the value $v_i$ and any potential hallucination $w_i$ are drawn from the same distribution. However, if we were to assume that the value were drawn from density $f_i$ and the hallucination from density $g_i$, where these distributions are absolutely continuous with respect to each other, we could obtain a similar formula via Bayesian updating. Let $h_i$ represent whether a hallucination occurred. The posterior density would then be given by:
\begin{eqnarray*}
    f_{\gamma_i,s_i}^i(v) &=&  P(h_i=1 \mid s_i) \cdot f_{\gamma_i,s_i}^i(v \mid h_i=1) +  P(h_i=0 \mid s_i) \cdot f_{\gamma_i,s_i}^i(v \mid h_i=0)\\ 
    &=& P(h_i=1 \mid s_i)\cdot f_i(v) + P(h_i=0 \mid s_i)\cdot \delta_{s_i}(v)\\ 
    &=& \frac{\gamma_i \cdot g_i(s_i)}{\gamma_i \cdot g_i(s_i) + (1-\gamma_i) \cdot f_i(s_i)} f_i(v) + \frac{(1-\gamma_i) \cdot f_{i}(s_i)}{\gamma_i \cdot g_i(s_i) + (1-\gamma_i) \cdot f_i(s_i)}\delta_{s_i}(v)\\
    &=& \widetilde {\gamma}^i_{s_i} \cdot f_{i}(v) + (1-\widetilde {\gamma}^i_{s_i}) \cdot \delta_{s_i}(v), 
\end{eqnarray*}
where
$\widetilde \gamma_{s_i} = \left(1+\frac{1-\gamma_i}{\gamma_i}\frac{f_i(s_i)}{g_i(s_i)}\right)^{-1}$. That is, our results from the rest of the paper would apply if we replace $\gamma_i$ with $\widetilde \gamma_{s_i}$.

\subsection{Applying Myerson}

\citet{myerson1981optimal} tells us that in a private values setting, the revenue-maximizing auction is given by calculating the virtual value of each agent (which might require ironing) and then allocating the item to the agent with the highest non-negative virtual value, or discarding the item if all of the virtual values are negative. Since virtual values are computed separately for each buyer, we will suppress the buyer index $i$ from the notation whenever possible to lighten the notational burden. 

For a given density $f$ and cumulative distribution $F$, the pre-ironing virtual value function is $\varphi_F(v) = v - (1-F(v))/f(v)$. For the density and cumulative distributions given by Eqs. \eqref{eq:density-f}
 and \eqref{eq:cumulative-F}, we have:
 \[ \varphi_{F_{\gamma,s}}(v) =  
 \begin{cases}
     v - \frac{1/\gamma - F(v)}{f(v)}, &\hbox{ for } v < s,\\
     v - \frac{1 - F(v)}{f(v)}, &\hbox{ for } v > s.\\
 \end{cases}\]
 We note that the virtual value function is not well-defined at $s$, but we will ignore this issue for now since that is a single point. The function $\varphi_{F_{\gamma,s}}$ does not need to be ironed after $s$ since $\varphi_{F_{\gamma,s}}(v) = \varphi_{F} (v)$ for $v > s$ and we have assumed $F$ is regular. Ironing could be necessary before $s$ depending on the choice of $F$. 

 Let's apply this to single-buyer, uniform over $[0,1]$ case. For this particular $F$, we obtain:
  \begin{equation}\label{eq:misleading-F} \varphi_{F_{\gamma,s}}(v) =  
 \begin{cases}
     2v - 1/\gamma, &\hbox{ for } v < s,\\
     2v - 1, &\hbox{ for } v > s.\\
 \end{cases}\end{equation}
 For this particular distribution, ironing is not necessary before $s$ since $2v - 1/\gamma$ is an increasing function of $v$. 
 Consider the special case $s=1/2-\epsilon$ and $\gamma = \epsilon$, for a small $\epsilon$. Eq. \eqref{eq:misleading-F} crosses zero at $v=1/2$, implying that the optimal price is $1/2$. However, this cannot be the correct optimal price. The revenue generated by this price is bounded above by $\epsilon$ since it requires $s$ to be a hallucination as a necessary condition for a sale to occur. Meanwhile, using the signal $1/2-\epsilon$ as the price would generate at least $(1-\epsilon)\cdot(1/2-\epsilon)$ in revenue. 

 It turns out that ignoring what occurred at $s$, where the density $f_{\gamma,s}$ is not well-defined, and applying Myerson's technique naively was a mistake. To obtain a correct optimal auction, we will need to use a more sophisticated characterization of optimal auctions that applies for distributions that do not admit densities. 
\section{Characterization of the Optimal Auction}\label{sec:optimal_auction}

In this section, we first introduce a slight generalization of Myerson's ironing operation, which we will need to state our results. We then present our main technical theorem, and demonstrate what it implies for some simple distributions. 

\subsection{Truncated Myerson Ironing}\label{sec:ironing}

Consider a distribution $F$ supported on $[a,b]$ and which admits a positive density on its support.  In that case $F$ is strictly increasing on $[a,b]$ and therefore it admits an inverse function $F^{-1}$ strictly increasing on $[0,1]$. When the virtual value function of $F$ defined for every $x \in [a,b]$ as $\varphi_{F}(x)$  
is not monotonic non-decreasing, \citet{myerson1981optimal} proposes a general procedure called ironing to characterize the optimal auction. In what follows we introduce our slight generalization of Myerson's ironing operator. The only difference between the operator we introduce below and the one presented in \citet{myerson1981optimal} is that we also allow for the operation to be performed only in an interval of the quantile space rather than over the entire quantile space. Hence, we call this operation the truncated Myerson ironing. If we restrict $x$ to be equal to 1 in what follows, we would mimic the definition of the original Myerson ironing operator.

For every quantile $q \in [0,1]$, let
\begin{equation}
\label{eq:J}
    J(q) = \int_0^q \varphi_{F}(F^{-1}(r)) dr.
\end{equation}
Furthermore, for every $x \in [0,1]$, let $G_x:[0,x] \to \mathbb{R}$ be the convex hull of the restriction of the function $J$ on $[0,x]$, formally defined for every $q \in [0,x]$ as,
\begin{equation*}
    G_x(q) = \min_{ \substack{(\lambda,r_1,r_2) \in [0,1]\times[0,x]^2\\ \text{s.t. } \lambda \cdot r_1 + (1-\lambda) \cdot r_2 = q} } \lambda \cdot J(r_1) + (1-\lambda) \cdot J(r_2) 
\end{equation*}
By definition, $G_x$ is convex on $[0,x]$. Therefore, it is continuously differentiable on $[0,x]$ except at countably many points. For every $q \in [0,x]$, we define the function $g$ as,
\begin{equation*}
    g_x(q) = \begin{cases}
        G'_x(q) \quad \text{if $G$ is differentiable at $q$}\\
        \lim_{\tilde{q} \downarrow q} G'_x(\tilde{q}) \quad \text{otherwise.}
    \end{cases}
\end{equation*}
The convexity of $G_x$ implies that $g_x$ is monotone non-decreasing. For any $t \in [a,b]$ we define the truncated ironed virtual of $F$ on $[a,t]$ as the mapping,
\begin{equation*}
    \mathrm{IRON}_{[a,t]}[F] : \begin{cases}
        [a,t] \to \mathbb{R}\\
        v \mapsto g_{F(t)}(F(v)).
    \end{cases}
\end{equation*}

We note that $\mathrm{IRON}_{[a,b]}[F]$ corresponds to the classical notion of ironing introduced in \citet{myerson1981optimal}. We emphasize that when $t < b$, the mapping $\mathrm{IRON}_{[a,t]}[F]$ is in general different from the restriction of $\mathrm{IRON}_{[a,b]}[F]$ on $[a,t]$ (see \Cref{fig:ironing_operator}). 

 \begin{figure}[h!]
    \centering
    \subfigure[Convexification in quantile space]{
    \begin{tikzpicture}[scale=.65]
    \begin{axis}[
        width=12cm,
        height=10cm,
        xmin=-0.,xmax=1.0,
        ymin=-0.12,ymax=0.01,
        scaled y ticks={base 10:2},
        table/col sep=comma,
        xlabel={$q$},
        ylabel={$H(q)$},
        grid=both,
        legend pos=south west
    ]

    \addplot [blue, dashed, line width=.7mm] table[x=F,y=H] {Data/ironing_example_mix_truncated_normals.csv};
    \addlegendentry{Before ironing}
    
    \addplot [red, very thick] table[x=F,y=psi] {Data/ironing_example_mix_truncated_normals.csv};
    \addlegendentry{$\mathrm{IRON}_{[0,2]}$}

    \addplot [teal, very thick] table[x=F,y=psi_cut05] {Data/ironing_example_mix_truncated_normals.csv};
    \addlegendentry{$\mathrm{IRON}_{[0,0.5]}$}
    
    \addplot [black, very thick] table[x=F,y=psi_cut02] {Data/ironing_example_mix_truncated_normals.csv};
    \addlegendentry{$\mathrm{IRON}_{[0,0.2]}$}
    
    \end{axis}
    \end{tikzpicture}
    }
    \subfigure[Virtual value]{
    \begin{tikzpicture}[scale=.65]
    \begin{axis}[
        width=12cm,
        height=10cm,
        xmin=0,xmax=2.0,
        ymin=-2.5,ymax=2,
        table/col sep=comma,
        xlabel={$v$},
        ylabel={virtual value},
        grid=both,
        legend pos=south east
    ]
    
    \addplot [blue,  dashed, line width=.7mm] table[x=x,y={virtual_value_preiron}] {Data/ironing_example_mix_truncated_normals.csv};
    \addlegendentry{Before ironing}

    \addplot [red, very thick] table[x=x,y=virtual_value] {Data/ironing_example_mix_truncated_normals.csv};
    \addlegendentry{$\mathrm{IRON}_{[0,2]}$}

    \addplot [teal, very thick] table[x=x,y=virtual_value_cut05] {Data/ironing_example_mix_truncated_normals.csv};
    \addlegendentry{$\mathrm{IRON}_{[0,0.5]}$}

    \addplot [black, very thick] table[x=x,y=virtual_value_cut02] {Data/ironing_example_mix_truncated_normals.csv};
    \addlegendentry{$\mathrm{IRON}_{[0,0.2]}$}

    \end{axis}
    \end{tikzpicture}
    }
    \caption{ 
    The figure illustrates the truncated ironing procedure. The distribution $F$ used is a mixture of two truncated normals on $[0,2]$ with parameters $(0.1,0.04)$ and $(1.9,1.8)$ and respective weights $0.8$ and $0.2$.  (a) The figure shows the initial $J$ function (in blue) and the convex envelopes of this function on different intervals: $F^{-1}(0.2)$, $F^{-1}(0.5)$ and $F^{-1}(2)$.  (b) The figure shows the induced virtual value function before ironing and by ironing on three subintervals: $0.2$, $0.5$ and $2$.} 
    \label{fig:ironing_operator}
    \end{figure}

\subsection{Virtual Value Characterization}\label{sec:main}

If the distribution $F$ does not admit a density that is positive everywhere in the support, the classical Myerson ironing procedure is not applicable since it relies on the existence of the inverse $F^{-1}$. In this case, there exists a more general virtual value characterization developed by \citet{monteiro2010optimal} that is still applicable. That characterization is difficult to work with because it involves generalized convex hulls, rather than the standard convexification used by Myerson. We defer the presentation and discussion of how to use this complex machinery until Section \ref{sec:technical_work}. We are now ready to state the main  technical result of the paper, which states that if the value distributions are regular, then an ironing procedure that has the same complexity as Myerson does apply.

\begin{theorem}\label{thm:main}
Let $F_i$ be distributions satisfying Assumption \ref{ass:regular}. Then, there exists a direct mechanism that is revenue-maximizing. In this mechanism, given reported values $\hat{v}_i$, the seller allocates the good to the buyer with the highest non-negative value of $\bar{\varphi}^i_{\gamma_i, s_i}(\hat{v}_i)$, where the function $\bar{\varphi}^i_{\gamma_i, s_i}(\hat{v}_i)$ is defined as:
\begin{equation}
\label{eq:ironed-vv} 
\bar{\varphi}^i_{\gamma_i, s_i}(v) = 
\begin{cases}
    \mathrm{IRON}_{[0, s_i]}[\gamma_i F_i](v), & \text{if } a \leq v < s_i, \\
    \varphi_{F_i}(T_i), & \text{if } s_i \leq v < T_i, \\
    \varphi_{F_i}(v), & \text{if } T_i \leq v \leq b,
\end{cases}
\end{equation}
for every $v \in [a_i, b_i]$. Furthermore, the winning bidder pays  the minimum amount they would need to bid to still win.
The constants $(T_i)_{i \in \{1, \ldots, n\}}$ are defined in \Cref{prop:from_F_to_feasible_Fs}, and the operator $\mathrm{IRON}$ is as specified in Section \ref{sec:ironing}.
\end{theorem}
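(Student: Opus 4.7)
The plan is to reduce the multi-buyer claim to a per-buyer statement and then apply the Monteiro-Svaiter machinery to the posterior $F_{\gamma_i,s_i}^i$. Since the objective decomposes across buyers (the IC/IR constraints are separable and the only coupling is the feasibility constraint $\sum_i x_i \le 1$), the standard Myerson reduction shows that it suffices to exhibit, for each buyer $i$, a monotone non-decreasing function $\bar{\varphi}^i_{\gamma_i,s_i}$ that equals the ironed virtual value of $F_{\gamma_i,s_i}^i$; the allocation rule and payment formula then follow as in \citet{myerson1981optimal}. The core of the proof is therefore to compute the ironed virtual value of a single posterior distribution that has a point mass of size $1-\gamma_i$ at $s_i$.

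First I would carefully write down the relevant $J$-type function (call it $J_{\gamma,s}$) for the posterior. In quantile space, the quantile function of $F_{\gamma,s}$ equals $F^{-1}(q/\gamma)$ on $[0,\gamma F(s))$, is constant equal to $s$ on the plateau $[\gamma F(s),\,\gamma F(s)+(1-\gamma)]$, and equals $F^{-1}((q-(1-\gamma))/\gamma)$ on $[\gamma F(s)+(1-\gamma),1]$. Differentiating $J_{\gamma,s}$ away from the plateau recovers the piecewise formula from Section 3; on the plateau $J_{\gamma,s}$ is affine with slope $s$. This gives me $J_{\gamma,s}$ in closed form and makes transparent which part of it can possibly need ironing.

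Second I would compute the lower convex envelope of $J_{\gamma,s}$. By regularity of $F_i$, the piece of $J_{\gamma,s}$ corresponding to $v>s$ is already convex, so ironing can only be needed at or below the plateau. Below $s$ the relevant object is just $\int_0^q \varphi_F(F^{-1}(r/\gamma))dr$ restricted to $[0,\gamma F(s)]$, whose convex hull is, by definition, $\mathrm{IRON}_{[0,s]}[\gamma F]$ read back in value space — this is the first case of \eqref{eq:ironed-vv}. The delicate step is the handling of the plateau together with the post-$s$ piece: the convex envelope there is the unique tangent line to the post-$s$ branch that passes through (or lies below) the plateau endpoint. The tangency point in value space is exactly $T_i$, and the slope of the tangent is $\varphi_F(T_i)$. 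This gives the constant middle piece of \eqref{eq:ironed-vv}, with the third piece following because the tangent detaches at $T_i$ and the post-$T_i$ piece is convex. I would then invoke \Cref{prop:from_F_to_feasible_Fs} to identify this tangent construction with its definition of $T_i$, and check that monotonicity of $\bar{\varphi}^i_{\gamma_i,s_i}$ across the three regions holds (the bottom piece ends at some slope $\le \varphi_F(T_i)$ because the tangent from the plateau is a valid lower bound on $J_{\gamma,s}$ on $[0,\gamma F(s)]$).

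Third I would conclude by invoking the Monteiro-Svaiter characterization, which says that the optimal mechanism allocates to the buyer with the highest non-negative ironed virtual value and charges the minimum winning bid; together with the closed form for $\bar{\varphi}^i_{\gamma_i,s_i}$ computed above, this yields the theorem. The main obstacle is the middle part: showing that the convex envelope genuinely produces the constant value $\varphi_F(T_i)$ across $[s_i,T_i]$ and that the ``crossover'' between the below-$s$ ironing and the above-$s$ tangent is consistent. Concretely, one has to rule out that the convex envelope needs to ``reach further back'' into $[0,s]$ to pick up a better tangent — this is where regularity of $F$ and the precise definition of $T_i$ in \Cref{prop:from_F_to_feasible_Fs} are essential, and it is the place where the ``near-decomposition'' (rather than full decomposition) appears.
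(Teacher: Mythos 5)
Your proposal works in quantile space rather than the value-space (Monteiro) formulation the paper uses, but the two are the same convexification up to a change of variables, so the geometric picture you describe — a plateau of slope $s$ wedged between the pre-$s$ and post-$s$ branches, with a tangent detaching at $T_i$ — is the correct one and is consistent with how $T_i$ is defined via $\mu_s(T_i)=0$ in \Cref{prop:from_F_to_feasible_Fs}. The post-$s$ and plateau analysis you sketch roughly matches the paper's Step 1; the pre-$s$ claim roughly matches Step 2.

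There is, however, a genuine gap at exactly the point you flag as ``the main obstacle.'' Your step two asserts that the piece of the envelope below $s$ ``is, by definition, $\mathrm{IRON}_{[0,s]}[\gamma F]$.'' That is not true by definition: $\mathrm{IRON}_{[0,s]}[\gamma F]$ is the convex envelope of the \emph{restriction} of $J_{\gamma,s}$ to $[0,\gamma F(s)]$, whereas what you actually need is the \emph{restriction} of the global convex envelope of $J_{\gamma,s}$ on $[0,1]$. These two operations do not commute in general — the global envelope could tunnel from some $q^*<\gamma F(s)$ directly to the post-$s$ branch, bypassing the left plateau endpoint, in which case the first piece of \eqref{eq:ironed-vv} would be wrong on $(q^*,\gamma F(s))$ and the middle constant region would start before $s_i$. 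This is precisely what happens in the paper's non-regular counterexample (\Cref{fig:counter_example}), so the claim cannot be definitional; it must be proved using regularity. You correctly identify this as the crux (``rule out that the convex envelope needs to reach further back''), but you do not supply the argument, and regularity alone is not the whole story.

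The paper's resolution is the Step 3 argument (\Cref{sec:apx_main_proof}): define $T_1 = \inf\{x \le s: \ell_{F_{\gamma,s}}(x) = \ell_{F_{\gamma,s}}(s)\}$ and show $T_1 = s$. The mechanism is a forced discontinuity: on $[a,T_1)$ one has $\ell_{F_{\gamma,s}} = \ell_{\gamma F}$, and the pointwise strict inequality $\varphi_{\gamma F}(v) = \varphi_F(v) - \tfrac{1/\gamma - 1}{f(v)} < \varphi_F(v) \le \varphi_F(s) \le \varphi_F(T)$ for all $v<s$ (using regularity to pass from $\varphi_F(v)$ to $\ell_F(v)$ and monotonicity) shows $\lim_{x\uparrow T_1}\ell_{F_{\gamma,s}}(x) < \ell_{F_{\gamma,s}}(T_1+)$; since $\ell_{F_{\gamma,s}}$ is continuous on $[a,s)$ (because $F_{\gamma,s}$ has a density there), the only place this jump can sit is at $s$ itself. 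That is the step your outline is missing. A secondary, smaller gap: your quantile-space convexification with the plateau included needs to be justified as actually producing the optimal mechanism for a distribution with an atom — this is the content of the Monteiro--Svaiter characterization and of \Cref{prop:Myerson_and_Monteiro}, which you invoke informally but would need to verify for the truncated operator as well.
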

We present the key technical arguments required to proof \Cref{thm:main} in \Cref{sec:technical_work}.

\Cref{thm:main} above states that $\bar\varphi^i_{\gamma_i,s_i}$ is the correct notion of ironed virtual value function given posterior beliefs $F_{\gamma_i,s_i}^i$. Before the signal $s_i$, the correct ironed virtual value is given by $\mathrm{IRON}_{[0, s_i]}[\gamma_i F_i]$, which might require ironing, but where ironing can be done using Myerson's classical approach with the domain truncated to $[0,s_i]$. Immediately after the signal, we need to iron out a segment $[s_i,T_i]$ of the virtual value to account for the mass at $s_i$. After $T_i$, the original virtual value function $\varphi_{F_i}$ applies. 
The theorem can be interpreted as a near-decomposition result. Ironing the section strictly before the signal yields $\mathrm{IRON}_{[0, s_i]}[\gamma_i F_i]$ while ironing the virtual value from $s_i$ (inclusive) onward yields the second and third clauses of Eq. \eqref{eq:ironed-vv}.
We call this a near-decomposition, not a full decomposition, because $T_i$ creates a link between the two sides, as the value of $T_i$ depends on the distribution before the signal.

\begin{remark}[Importance of the regularity assumption]
    We note that the key assumption that enables this near-decomposition is the regularity of $F_i$. We provide in \Cref{sec:apx_non_decomposition_example} an example to show that if $F_i$ is irregular, then \Cref{thm:main} may fail. 
\end{remark}

It is useful to see what Theorem \ref{thm:main} implies for some simple distributions. If $F$ is a uniform [0,1] distribution, then the virtual value is given by:
  \begin{equation*} \bar \varphi_{F_{\gamma,s}}(v) =   \begin{cases}
     2v - 1/\gamma, &\hbox{ for } v < s,\\
     2T - 1, &\hbox{ for } s \leq v < T,\\
     2v - 1, &\hbox{ for } v \geq  T.\\
 \end{cases}\end{equation*}
 If $F$ is an exponential distribution, then ironing might be required to the left of the signal. Note that the exponential distribution is not only a regular distribution, but satisfies the even stronger condition of monotone hazard rate. Despite this, the pre-signal distribution still sometimes requires ironing (see \Cref{fig:illustration_theorem1}).

    \begin{figure}[h]
    \centering
    \subfigure[Exponential prior $(\lambda = 1), \gamma = 0.95$]{
    \begin{tikzpicture}[scale = 0.65]
    \begin{axis}[
        width=10cm,
        height=10cm,
        xmin=0,xmax=6.5,
        ymin=-4,ymax=6,
        table/col sep=comma,
        xlabel={$v$},
        ylabel={virtual value},
        grid=both,
        legend pos=north west
    ]

    \addplot [blue, dashed,  thick,unbounded coords=jump] table[x=x,y={preiron_s=5}] {Data/virtual_value_gamma=095_exponential.csv};
    \addlegendentry{Unironed (s=5)}
    
    \addplot [red,  thick,unbounded coords=jump] table[x=x,y={s=5}] {Data/virtual_value_gamma=095_exponential.csv};
    \addlegendentry{Ironed (s=5)}
    \addplot[only marks, red, mark=*,forget plot] coordinates {(5, 4.952221754226520)};
    \end{axis}
    \end{tikzpicture}
    }
    \subfigure[Uniform prior, $\gamma = 0.75$]{
    \begin{tikzpicture}[scale = 0.65]
    \begin{axis}[
        width=10cm,
        height=10cm,
        xmin=0,xmax=1,
        ymin=-1.5,ymax=1.5,
        table/col sep=comma,
        xlabel={$v$},
        ylabel={virtual value},
        grid=both,
        legend pos=north west
    ]
    \addplot[domain=0:0.4,samples=50,thick,dashed,blue] {2*x - 1/0.75};  
    \addplot[domain=0.4:1,samples=50,thick,dashed,blue,forget plot] {2*x - 1};        
    \addlegendentry{Unironed (s=0.4)}

    \addplot [red,  thick,unbounded coords=jump] table[x=x,y={s=0.4}] {Data/virtual_value_gamma=075_uniform.csv};
    \addlegendentry{Ironed (s=0.4)}
    \addplot[only marks, red, mark=*,forget plot] coordinates {(0.41, 0.25170764)};

    \end{axis}
    \end{tikzpicture}
    }
    \caption{\textbf{Ironed virtual value for different priors.} In each plot the unironed virtual value corresponds to the naive evaluation $\varphi_{F_{\gamma,s}}$, wherever it is well defined (i.e., everywhere but at $s$). The ironed virtual value corresponds to the virtual value characterized in \Cref{thm:main}.}
     \label{fig:illustration_theorem1}
    \end{figure}
\section{The Single Buyer Case}\label{sec:single-buyer}

In this section, we first leverage \Cref{thm:main} to study the structure of the optimal mechanism for a single buyer. We then, compare the mechanism obtained in our model of hallucination-prone signals with another model which corresponds to the classical model of Gaussian noise.

\subsection{Optimal Mechanism for One Buyer}

An important implication of \Cref{thm:main} is the following characterization of the optimal mechanism for a single buyer. In this setting, the optimal mechanism is a posted price. 

\begin{proposition}
\label{prop:optimal_price_corrected}
Let $n=1$ and assume $F$ has a log-concave density $f$ on $[0,b]$. Furthermore, assume that $f$ is twice continuously differentiable.
Define\footnote{We use the convention that the infimum (resp. supremum) of an empty subset of $[a,b]$ is $b$ (resp. $a$).}, 
\begin{equation*}
p^{\mathrm{ignore}} = \inf\{v \in [0,b] : \varphi_F(v) \geq 0\}, \qquad
\mbox{and}  \qquad
p^{\mathrm{cap}} = \inf\{v\in[0,b] : \varphi_{\gamma F}(v) \geq 0\}.
\end{equation*}
Then, the following posted price is optimal:
\begin{equation*}
    p^*(s)=
\begin{cases}
p^{\mathrm{ignore}}, 
    & s < L_\gamma, \\
s, 
    & L_\gamma \le s < M_\gamma, \\
p^{\mathrm{cap}},
    & M_\gamma \le s \leq U_\gamma, \\
s,
    & s > U_\gamma,
\end{cases}
\end{equation*}
where
\begin{align*}
    L_\gamma = \inf\{s \in [0,b] : T_s \ge p^{\mathrm{ignore}}\},& \qquad
    M_\gamma = \inf\{s \in [0,b] : \varphi_{\gamma F}(s) \geq 0\}, \\
    U_\gamma = \sup \{s \in [0,b] : s \cdot (1-&\gamma F(s)) \leq M_\gamma \cdot (1-\gamma F(M_\gamma)) \},
\end{align*}
and $T_s$ is the threshold in \Cref{thm:main}.
\end{proposition}

\Cref{prop:optimal_price_corrected} shows that, when signals may be hallucinations, the optimal posted price can exhibit \emph{four} signal regimes (while using only \emph{three} distinct price types), assuming densities are log-concave. For ease of exposition, we present the result when the lower bound on the support $a$ equals $0$. The four-piece characterization holds more generally and is presented in \Cref{sec:apx_single_buyer}.

For low signals ($s < L_\gamma$), the seller optimally \emph{ignores} the signal and posts the prior monopoly price $p^{\mathrm{ignore}}$. Intuitively, when $s$ is small, even in the best case in which the signal is correct, the maximal extractable revenue from following the signal is of order $s$. The seller therefore prefers to bet on the signal being a hallucination and instead rely on the prior-optimal price.
For intermediate signals ($L_\gamma \le s < M_\gamma$), the seller \emph{follows} the signal and posts $p^*(s)=s$. In this region, the signal is sufficiently informative that extracting approximately $s$ in the non-hallucinatory event is attractive, while pricing at $s$ is not yet excessively risky.
For moderately high signals ($M_\gamma \le s < U_\gamma$), the seller \emph{caps} the price and posts the constant $p^{\mathrm{cap}}$. Interestingly, one can show that $p^{\mathrm{ignore}} < p^{\mathrm{cap}} \le s$. Hence, in this regime, the seller prices more aggressively than in the absence of any signal, as it benefits from the fact that if the signal is correct then $s \ge p^{\mathrm{cap}}$ and the good is sold at price $p^{\mathrm{cap}}$, while avoiding the downside of setting an excessively aggressive price when the signal is a hallucination.
Finally, for very large signals ($s \ge U_\gamma$), the seller \emph{follows again} and posts $p^*(s)=s$, as the upside from extracting a very large value in the non-hallucinatory event dominates the additional risk from pricing at the signal.

We provide a visual representation of the relevant virtual value shapes for each of the four regions in \Cref{fig:virtual_single_buyer}.

    \begin{figure}[h!]
    \centering
    
    \subfigure[$s = 0.1$]{
    \begin{tikzpicture}
    \begin{axis}[
        width=0.45\textwidth,
        xlabel={$v$},
        ylabel={ironed virtual value},
        legend style={font=\tiny},
        legend pos=south east,
        table/col sep=comma,   
        grid=both,
    ]
    
    \addplot [red, thick, unbounded coords=jump] table [x=v, y={iron_virtual_val_s_0.1}] {Data/virtual_values_beta1_2_gamma_0_77.csv};
    
    \addplot [blue, dashed, thick, unbounded coords=jump] table [x=v, y={virtual_val_s_0.1}] {Data/virtual_values_beta1_2_gamma_0_77.csv};
    
    \draw[black, dashed] (axis cs:0.3328333333, \pgfkeysvalueof{/pgfplots/ymin}) -- (axis cs:0.3328333333, 0);  
    \filldraw[black] (axis cs:0.3328333333,\pgfkeysvalueof{/pgfplots/ymin}) circle (2pt) node[anchor=south west]{\footnotesize $p^*=0.33$};
    
    \end{axis}
    \end{tikzpicture}
    }
    \hfill
    \subfigure[$s = 0.5$]{
    \begin{tikzpicture}
    \begin{axis}[
        width=0.45\textwidth,
        xlabel={$v$},
        ylabel={ironed virtual value},
        legend style={font=\tiny},
        legend pos=south east,
        table/col sep=comma,   
        grid=both,
    ]
    \addplot [red, thick, unbounded coords=jump] table [x=v, y={iron_virtual_val_s_0.5}] {Data/virtual_values_beta1_2_gamma_0_77.csv};
    
    \addplot [blue, dashed, thick, unbounded coords=jump] table [x=v, y={virtual_val_s_0.5}] {Data/virtual_values_beta1_2_gamma_0_77.csv};
    
    \draw[black, dashed] (axis cs:0.5, \pgfkeysvalueof{/pgfplots/ymin}) -- (axis cs:0.5, 0);  
    \filldraw[black] (axis cs:0.5,\pgfkeysvalueof{/pgfplots/ymin}) circle (2pt) node[anchor=south west]{\footnotesize $p^*=0.5$};

    \end{axis}
    \end{tikzpicture}
    }

    \subfigure[$s = 0.8$]{
    \begin{tikzpicture}
    \begin{axis}[
        width=0.45\textwidth,
        xlabel={$v$},
        ylabel={ironed virtual value},
        legend style={font=\tiny},
        legend pos=south east,
        table/col sep=comma,   
        grid=both,
    ]
    \addplot [red, thick, unbounded coords=jump] table [x=v, y={iron_virtual_val_s_0.8}] {Data/virtual_values_beta1_2_gamma_0_77.csv};
    
    \addplot [blue, dashed, thick, unbounded coords=jump] table [x=v, y={virtual_val_s_0.8}] {Data/virtual_values_beta1_2_gamma_0_77.csv};
    
    \draw[black, dashed] (axis cs:0.5587240437, \pgfkeysvalueof{/pgfplots/ymin}) -- (axis cs:0.5587240437, 0);  
    \filldraw[black] (axis cs:0.5587240437,\pgfkeysvalueof{/pgfplots/ymin}) circle (2pt) node[anchor=south west]{\footnotesize $p^*=0.56$};

    \end{axis}
    \end{tikzpicture}
    }
    \hfill
    \subfigure[$s = 0.95$]{
    \begin{tikzpicture}
    \begin{axis}[
        width=0.45\textwidth,
        xlabel={$v$},
        ylabel={ironed virtual value},
        legend style={font=\tiny},
        legend pos=south east,
        table/col sep=comma,   
        grid=both,
    ]
    \addplot [red, thick, unbounded coords=jump] table [x=v, y={iron_virtual_val_s_0.95}] {Data/virtual_values_beta1_2_gamma_0_77.csv};
    
    \addplot [blue, dashed, thick, unbounded coords=jump] table [x=v, y={virtual_val_s_0.95}] {Data/virtual_values_beta1_2_gamma_0_77.csv};
    
    \draw[black, dashed] (axis cs:0.95, \pgfkeysvalueof{/pgfplots/ymin}) -- (axis cs:0.95, 0);  
    \filldraw[black] (axis cs:0.95,\pgfkeysvalueof{/pgfplots/ymin}) circle (2pt) node[anchor=south east]{\footnotesize $p^*=0.95$};

    \end{axis}
    \end{tikzpicture}
    }
    
    \caption{\textbf{Illustration of the four signal regimes in the single-buyer case.}
The figure displays the unironed (dashed blue) and ironed (solid red) virtual value functions corresponding to the four regimes characterized in \Cref{prop:optimal_price_corrected}, for a Beta$(1,2)$ prior and $\gamma = 0.77$. In each panel, the vertical dashed line indicates the smallest value at which the ironed virtual value crosses zero, which corresponds to the revenue-maximizing posted price.}
    \label{fig:virtual_single_buyer}
    \end{figure}

\Cref{fig:virtual_single_buyer} provides a visual proof of the structure described in \Cref{prop:optimal_price_corrected}. Each panel plots the unironed posterior virtual value (dashed) and its ironed counterpart (solid), and the optimal posted price corresponds to the smallest value at which the ironed virtual value crosses zero.

Panel (a) ($s = 0.1$) illustrates the \emph{ignore} regime. When the realized signal is very low, reacting to it is not profitable. Formally, the posterior virtual value remains negative throughout the region below $s$, so ironing does not create a new zero crossing there. Hence, the crossing happens after $s$. By \Cref{thm:main}, the ironed virtual value therefore coincides with the prior virtual value in the neighborhood where it crosses zero, implying that the optimal price equals the prior monopoly price $p^{\mathrm{ignore}}$.

Panel (b) ($s=0.5$) corresponds to the follow regime for intermediate signals ($L_\gamma \le s < M_\gamma$). In this region, the prior virtual value at $s$ is already positive. Consequently, the ironed virtual value crosses zero exactly at $v=s$, making $p^*(s)=s$ optimal.

Panel (c) ($s=0.8$) illustrates the cap regime. In this case, the unironed virtual value function before $s$ crosses $0$, at $p^{\mathrm{cap}} < s$. Importantly, we see that unironed virtual value function starts to ``bend'', but in this regime, the bending is mild enough for the ironed virtual value to remain above $0$ in that region.

Finally, panel (d) ($s=0.95$) corresponds to the follow-again regime for very large signals ($s \ge U_\gamma$). Unlike panel (b), hallucinations now distort the virtual value before $s$ so strongly that, after ironing, the ironed virtual value remains strictly negative for all $v < s$. Consequently, no price below $s$ is profitable, and the optimal posted price again equals the signal.

The key structural fact implying that the number of regimes is at most $4$ is that, under log-concavity of $f$, the unironed posterior virtual value on $[a,s)$ crosses $0$ at most twice. As a result, once the ironed virtual value function drops below zero before $s$ in the second follow regime, it will not move above $0$ again for higher values of $s$. This guarantees that, as $s$ varies, the optimal price can only transition through the four regimes observed in \Cref{fig:virtual_single_buyer}, yielding the simple piecewise structure in \Cref{prop:optimal_price_corrected}.

\begin{remark}[More pieces without log-concavity]
The four-piece ``ignore--follow--cap--follow'' structure in \Cref{prop:optimal_price_corrected} relies on the log-concavity assumptions on $f$. 
Without log-concavity, one can construct a regular 
prior for which $p^*(s)$ has strictly more than four pieces. \Cref{app:counterexample-many-regimes} provides such a counterexample based on a mixture of three Betas.
\end{remark}

\subsection{Comparison to the Value-with-Noise Model} \label{sec:noise}
We next compare how the structure of the optimal posted price $p^*(s)$ changes with the
assumed error model on the prediction $s$.
Throughout, the buyer value $v$ is drawn from the prior $F$ on $[a,b]$, and the seller observes a signal $s$
generated by one of the following three models.

The first model, is the hallucination model introduced in \Cref{sec:model}. The second model
is a classical statistical error model in which the seller observes a noisy measurement of the value:
\begin{equation*}
s = v + \varepsilon,
\qquad
\varepsilon \sim \mathcal{N}(0,\sigma^2) \text{ independent of } v.
\end{equation*}
Unlike hallucinations, this error is \emph{local}: when $\sigma^2$ is small, the signal concentrates near $v$ and
extreme mismatches between $s$ and $v$ are unlikely.

We also consider a third intermediate model in which the signal is still hallucination-prone, but is no longer
perfect even when it is ``correct'':
\begin{equation*}
s =
\begin{cases}
v + \varepsilon, & \text{with prob. } 1-\gamma,\\
w, \quad w \sim F \text{ independent of } v, & \text{with prob. } \gamma,
\end{cases}
\qquad
\varepsilon \sim \mathcal{N}(0,\sigma^2) \text{ independent.}
\end{equation*}
This hybrid model captures the idea that modern predictors may sometimes be informative yet imprecise, while still
occasionally producing outputs that are weakly correlated with the true outcome (here we assume not correlated at all).

We report in \Cref{fig:comparison_hall_noise} the optimal price functions $p^*(s)$ under these three models
for representative values of $\gamma$. Several interesting insights emerge from \Cref{fig:comparison_hall_noise}.

\begin{figure}[h!]
\centering
\subfigure[$\gamma = 0.75$]{
\begin{tikzpicture}[scale = 0.7]
\begin{axis}[
    width=10cm,
    height=8cm,
    xlabel={$s$},
    ylabel={$p^*(s)$},
    table/col sep=comma,
    grid=both,
    legend style={font=\footnotesize},
    legend pos=south east,
    ymin=0, ymax=1
]
\addplot [red,  unbounded coords=jump, line width = .5mm] table [x=s, y={p_hall_gamma_0.75}]       {Data/one_buyer_prices_all_models_beta_1_2.csv};
\addlegendentry{Hall}

\addplot [blue, unbounded coords=jump, line width = .5mm] table [x=s, y={p_signal_noise_gamma_0.75}] {Data/one_buyer_prices_all_models_beta_1_2.csv};
\addlegendentry{Noise}

\addplot [gray, unbounded coords=jump, line width = .5mm] table [x=s, y={p_hall_noise_gamma_0.75}] {Data/one_buyer_prices_all_models_beta_1_2.csv};
\addlegendentry{Hall+Noise}

\end{axis}
\end{tikzpicture}
}
\hfill
\subfigure[$\gamma = 0.77$]{
\begin{tikzpicture}[scale = 0.7]
\begin{axis}[
    width=10cm,
    height=8cm,
    xlabel={$s$},
    ylabel={$p^*(s)$},
    table/col sep=comma,
    grid=both,
    legend style={font=\footnotesize},
    legend pos=south east,
    ymin=0, ymax=1
]
\addplot [red,  unbounded coords=jump, line width = .5mm] table [x=s, y={p_hall_gamma_0.77}]       {Data/one_buyer_prices_all_models_beta_1_2.csv};
\addlegendentry{Hall}

\addplot [blue, unbounded coords=jump, line width = .5mm] table [x=s, y={p_signal_noise_gamma_0.77}] {Data/one_buyer_prices_all_models_beta_1_2.csv};
\addlegendentry{Noise}

\addplot [gray, unbounded coords=jump, line width = .5mm] table [x=s, y={p_hall_noise_gamma_0.77}] {Data/one_buyer_prices_all_models_beta_1_2.csv};
\addlegendentry{Hall+Noise}

\end{axis}
\end{tikzpicture}
}
\hfill
\subfigure[$\gamma = 0.80$]{
\begin{tikzpicture}[scale = 0.7]
\begin{axis}[
    width=10cm,
    height=8cm,
    xlabel={$s$},
    ylabel={$p^*(s)$},
    table/col sep=comma,
    grid=both,
    legend style={font=\footnotesize},
    legend pos=south east,
    ymin=0, ymax=1
]
\addplot [red,  unbounded coords=jump, line width = .5mm] table [x=s, y={p_hall_gamma_0.8}]        {Data/one_buyer_prices_all_models_beta_1_2.csv};
\addlegendentry{Hall}

\addplot [blue, unbounded coords=jump, line width = .5mm] table [x=s, y={p_signal_noise_gamma_0.8}] {Data/one_buyer_prices_all_models_beta_1_2.csv};
\addlegendentry{Noise}

\addplot [gray, unbounded coords=jump, line width = .5mm] table [x=s, y={p_hall_noise_gamma_0.8}]  {Data/one_buyer_prices_all_models_beta_1_2.csv};
\addlegendentry{Hall+Noise}

\end{axis}
\end{tikzpicture}
}
\caption{Optimal price function $p^*(s)$ under three signal-generation models: hallucinations (Hall),
classical signal noise (Noise), and a hybrid model where the signal is noisy when informative (Hall+Noise).
Prior is Beta$(1,2)$.}
\label{fig:comparison_hall_noise}
\end{figure}

First, the assumed error model has a first-order qualitative impact on the shape of the optimal mechanism. Under the signal-noise model (local errors), the optimal pricing rule behaves like a shrinkage correction: it inflates low signals and deflates high signals, reflecting a regression-to-the-mean logic. In contrast, under hallucinations (global errors), the optimal rule follows a regime-based structure in which the seller ignores low signals, follows intermediate signals, caps moderately high signals, and may eventually revert to following the signal at very high levels.

Second, the hybrid model (Hall+Noise) is much closer to the hallucination model than to the pure noise model. Even though the signal is no longer perfect when it is informative, the optimal mapping in the Hall+Noise model continues to resemble the hallucination logic (ignore/follow/cap), rather than the smooth shrinkage pattern induced by Gaussian noise. Notably, this comparison holds even though the variance used in the Noise model and in the Hall+Noise model is the same ($\sigma = 0.1$). This suggests that the mechanism structure is not an artifact of assuming $s=v$ when the signal is correct: once errors include a non-negligible global component (hallucinations), the optimal response remains qualitatively similar.

Third, a three-piece structure (ignore/follow/cap) emerges as a robust mental model. While \Cref{prop:optimal_price_corrected} characterizes the optimal price as having up to four regions, \Cref{fig:comparison_hall_noise} indicates that a three-piece structure is empirically more robust. Under the hallucination model, the four-piece structure appears relatively infrequently. For the Beta$(1,2)$ prior used in \Cref{fig:comparison_hall_noise}, the optimal price exhibits at most three pieces for all values of $\gamma$ outside the narrow interval $[0.75,0.8]$. Moreover, adding noise to the ``correct'' signals shifts the optimal policy toward an even clearer three-piece pattern, including in settings where a four-piece structure arises under pure hallucinations. Intuitively, once even truthful signals are noisy, it becomes less attractive to react aggressively to extremely high realizations: following an extreme $s$ too literally risks setting a price so high that the seller frequently forgoes trade when $v$ is moderate, making cap-type behavior more valuable.

\section{Design of Simple Auctions}\label{sec:heuristic}

The classical Myerson revenue-optimal auction is quite simple when the buyers' valuations are drawn from the same distribution (it reduces to a second-price auction with a reserve price), but it becomes much more complex in the asymmetric case where buyers' valuations are drawn from different priors. In particular, the payment rule requires applying multiple virtual value formulas to determine the minimum bid that would have secured the item for the winning buyer. In our setting, even if the priors are symmetric, they become asymmetric once heterogeneous signals about the buyers are incorporated into the posterior through Bayesian updating. This raises the natural question of whether a simple auction can still perform well. In what follows, we design simple auctions that build on the general characterization in Theorem~\ref{thm:main} and we demonstrate that they generate revenue that is close to optimal.

In the case of asymmetric buyers, two classical mechanisms used in practice are the eager and lazy second-price auctions with personalized reserve prices (see \cite{paes2016field}). Our aim is to leverage the structure identified in Theorem~\ref{thm:main} to construct a second-price auction with personalized reserves. We focus on the eager mechanism because, for any choice of personalized reserve prices, the revenue it generates is at least as large as the revenue generated by the lazy mechanism (Theorem 4.5 in \cite{paes2016field}).

For $n$ buyers, the eager mechanism is defined as follows. Let $(r_i)_{i \in \{1,\ldots,n\}}$ be the set of personalized reserve prices. The seller first collects bids $b_1,\ldots,b_n$ from the buyers and then construct the set of active buyers defined as,
\begin{equation*}
    \mathcal{A} = \{ i \in \{1,\ldots,n\} \text{ s.t. } b_i \geq r_i \}.
\end{equation*}
The item is allocated to the highest bidder in $\mathcal{A}$, subject to some tie-breaking rule. We denote the winner by $j$, and the payment is equal to $\max(r_j, \max_{i \in \mathcal{A} \setminus \{j\}} b_i)$.

When considering benchmarks for our proposed simpler auction, two simple auctions are natural candidates. The first is the second price auction with reserve (SPA) that ignores the signals. This is an optimal auction when $\gamma_i=1$ for all buyers, and when buyers are a priori symmetric as the signals are pure hallucinations in this case. The second is an eager auction where the reserve prices are set at the signals, i.e., $r_i = s_i$ for all $i$. We call this auction the signal eager auction. This is an optimal auction when $\gamma_i=0$ for all $i$ as it sells the item to the buyer with the highest valuation with probability 1 in equilibrium. We thus set the goal as being to construct a mechanism that is at least as good as the best of these two mechanisms. 

Theorem~\ref{thm:main} provides a complete characterization of the optimal mechanism and, in particular, provides a procedure to identify the optimal posted price for a buyer with signal $s_i$ when that buyer is considered in isolation. This suggests a natural personalized reserve structure. 
Let $p^*_i(s_i)$ denote the optimal price for buyer $i$ with signal $s_i$ as defined by Theorem~\ref{thm:main}. We define the monopoly-price eager auction by setting $r_i = p^*_i(s_i)$ for all $i$. 

Numerically, the monopoly-price eager auction tends to overperform the signal eager auction when $\gamma$'s are high as it doesn't treat the signals as being accurate like signal eager does (see \Cref{fig:beta}). Perhaps surprisingly, it often underperforms the signal eager policy when $\gamma$'s are low.

This motivates us to propose a family of eager auctions allowing us to improve over these approaches. Each of these auctions is parameterized by an integer $k \in \{0,\ldots,n\}$.
If buyer $i$ has a signal among the $k$ largest one in $\{s_1,\ldots,s_n\}$, we use the reserve  $r_i = \max(s_i, p^*_i(s_i))$. Otherwise, we use $r_i = p^*_i(s_i)$. That is, we use the optimal monopoly-price as a reserve for lower signals, and the maximum between the optimal monopoly-price and the signal for higher signals. This maximum is meant to remove the cap from the higher reserve prices. We refer to this auction as the $k$-uncapped eager auction.
We note that the $0$-uncapped eager auction is equivalent to the monopoly price eager defined previously.

While the $0$-uncapped eager was not able to obtain a higher revenue than signal eager for all values of $\gamma$, we next establish that in the two-buyer case, the $1$-uncapped eager, leads to a consistent improvement across all $\gamma$'s. 

\begin{theorem}
\label{thm:signal_worse_than_cap}
Let $\gamma>0$ and let $F$ be a regular distribution on $[a,b]$ with continuous density $f$.
Let $\pi_{SE}$ (resp. $\pi_{1}$) be the signal (resp. $1$-uncapped) eager auction.
Then for every pair of signals $(s_1,s_2) \in [a,b]^2$,
\begin{equation*}
  \mathbb{E}_{v_1,v_2}\!\left[ p_1^{\pi_{SE}}(\bm v,\bm s)+p_2^{\pi_{SE}}(\bm v,\bm s)\,\middle|\,\bm s\right]
  \;\leq\;
  \mathbb{E}_{v_1,v_2}\!\left[ p_1^{\pi_1}(\bm v,\bm s)+p_2^{\pi_1}(\bm v,\bm s)\,\middle|\,\bm s\right].
\end{equation*}
\end{theorem}

\Cref{thm:signal_worse_than_cap} establishes the stronger claim, that our $1$-uncapped eager generates weakly more revenue than signal eager, uniformly across all potential signals observed. By taking an expectation, over signals we obtain as a corollary that the expected revenue of the $1$-uncapped eager is weakly higher than the signal eager auction.

\Cref{fig:beta} shows the fraction of the optimal revenue obtained by the different auctions considered in this section.

\begin{figure}[h!]
\centering
\subfigure[]{
\begin{tikzpicture}[scale = 0.75]
\begin{axis}[
    width=10cm,
    height=8cm,
    xlabel={$\gamma$},
    ylabel={Revenue ratio},
    title={},
    legend style={font=\scriptsize},
    legend pos=south west,
    ymin=0.75,
    ymax=1.02,
    table/col sep=comma,
    grid=both,
]

\addplot[red, thick, mark=o]
    table[x=gamma, y={ratio_eager_over_opt}]
    {Data/revenue_ratio_beta_alpha5.0_beta1.0.csv};
\addlegendentry{Signal Eager}

\addplot[violet, thick, mark=diamond*]
    table[x=gamma, y={ratio_naive_over_opt}]
    {Data/revenue_ratio_beta_alpha5.0_beta1.0.csv};
\addlegendentry{Monopoly-price Eager}

\addplot[blue, thick, mark=square*]
    table[x=gamma, y={ratio_topmax_m1_over_opt}]
    {Data/revenue_ratio_beta_alpha5.0_beta1.0.csv};
\addlegendentry{$1$-uncapped Eager}
\end{axis}
\end{tikzpicture}
\label{fig:beta}
}
\subfigure[]{
\begin{tikzpicture}[scale = 0.75]
\begin{axis}[
    width=10cm,
    height=8cm,
    xlabel={$\gamma$},
    ylabel={Revenue ratio},
    title={},
    legend style={font=\scriptsize},
    legend pos=south west,
    ymin=0.75,
    ymax=1.02,
    table/col sep=comma,
    grid=both,
]

\addplot[red, thick, mark=triangle*]
    table[x=gamma, y={ratio_hybrid_over_opt}]
    {Data/revenue_ratio_beta_alpha5.0_beta1.0.csv};
\addlegendentry{Hybrid}

\addplot[blue, thick, mark=square*]
    table[x=gamma, y={ratio_best_topmax_over_opt}]
    {Data/revenue_ratio_beta_alpha5.0_beta1.0.csv};
\addlegendentry{Best $k$-uncapped Eager}
\end{axis}
\end{tikzpicture}
\label{fig:combine_beta}
}
\caption{Fraction of revenue achieved by each auction compared to the optimal auction characterized in \Cref{thm:main} in the two-buyers case, as a function of $\gamma$. Prior distribution: Beta $(\alpha=5,\beta=1)$}
\end{figure}

\Cref{fig:beta} illustrates the improvement allowed by our $1$-uncapped eager auction over the signal eager auction across all values of $\gamma$. However, its performance is worse than the monopoly-price (equivalently the $0$-uncapped) eager auction for high values of $\gamma$. We therefore compare two combinations of auctions which depends on $\gamma$. 

The first combination selects the best among the SPA auction and the signal eager auction. This simple benchmark is a natural heuristic one would use without relying on the characterization derived in this work.
Indeed, this combination of auction, is appealing, as it benefits from the strong performance of SPA for high value of $\gamma$ (recall that it is optimal for $\gamma =1$) and uses signals when they become accurate (as signal eager is optimal when $\gamma = 0$). We will refer to this auction as the Hybrid auction.
Alternatively, one could use the best among the $k$-uncapped eager auctions we introduced. 

\Cref{fig:combine_beta} shows that our best $k$-uncapped eager auction uniformly dominates uniformly dominates the Hybrid auction across all values of $\gamma$. While, the improvement could be considered as marginal in the case of the Beta(5,1) distribution illustrated in \Cref{fig:combine_beta}, we see in \Cref{fig:lognormal} that the improvement of our proposed auction can be considerable when the distribution of bid follows a log-normal distribution. 

\begin{figure}[h!]
\centering
\subfigure[Lognormal $(\mu=0,\sigma=1.3)$]{
\begin{tikzpicture}[scale = 0.7]
\begin{axis}[
    width=10cm,
    height=8cm,
    xlabel={$\gamma$},
    ylabel={Revenue ratio},
    title={},
    legend style={font=\scriptsize},
    legend pos=south west,
    ymin=0.75,
    ymax=1.02,
    table/col sep=comma,
    grid=both,
]

\addplot[blue, thick, mark=square*]
    table[x=gamma, y={ratio_best_topmax_over_opt}]
    {Data/revenue_ratio_lognormal_mu0_sigma1_3.csv};
\addlegendentry{Best $k$-uncapped Eager}

\addplot[red, thick, mark=triangle*]
    table[x=gamma, y={ratio_hybrid_over_opt}]
    {Data/revenue_ratio_lognormal_mu0_sigma1_3.csv};
\addlegendentry{Hybrid}
\end{axis}
\end{tikzpicture}
}
\hfill
\subfigure[Lognormal $(\mu=0,\sigma=1.5)$]{
\begin{tikzpicture}[scale = 0.7]
\begin{axis}[
    width=10cm,
    height=8cm,
    xlabel={$\gamma$},
    ylabel={Revenue ratio},
    title={},
    legend style={font=\scriptsize},
    legend pos=south west,
    ymin=0.75,
    ymax=1.02,
    table/col sep=comma,
    grid=both,
]

\addplot[blue, thick, mark=square*]
    table[x=gamma, y=ratio_best_topmax_over_opt]
    {Data/revenue_ratio_lognormal_mu0_sigma1_5.csv};
\addlegendentry{Best $k$-uncapped Eager}

\addplot[red, thick, mark=triangle*]
    table[x=gamma, y=ratio_hybrid_over_opt]
    {Data/revenue_ratio_lognormal_mu0_sigma1_5.csv};
\addlegendentry{Hybrid}
\end{axis}
\end{tikzpicture}
}
\hfill
\subfigure[Lognormal $(\mu=0,\sigma=1.8)$]{
\begin{tikzpicture}[scale = 0.7]
\begin{axis}[
    width=10cm,
    height=8cm,
    xlabel={$\gamma$},
    ylabel={Revenue ratio},
    title={},
    legend style={font=\scriptsize},
    legend pos=south west,
    ymin=0.75,
    ymax=1.02,
    table/col sep=comma,
    grid=both,
]

\addplot[blue, thick, mark=square*]
    table[x=gamma, y=ratio_best_topmax_over_opt]
    {Data/revenue_ratio_lognormal_mu0_sigma1_8.csv};
\addlegendentry{Best $k$-uncapped Eager}

\addplot[red, thick, mark=triangle*]
    table[x=gamma, y=ratio_hybrid_over_opt]
    {Data/revenue_ratio_lognormal_mu0_sigma1_8.csv};
\addlegendentry{Hybrid}
\end{axis}
\end{tikzpicture}
}
\caption{Fraction of revenue achieved by each auction compared to the optimal auction characterized in \Cref{thm:main} in the two-buyers case, as a function of $\gamma$, for different lognormal priors.}
\label{fig:lognormal}
\end{figure}
\Cref{fig:lognormal} highlights a qualitatively different picture than in the Beta experiments: 
the revenue gap between our proposed family and the Hybrid auction becomes substantial, and it grows with the dispersion parameter $\sigma$.
Across all three lognormal priors, the \emph{best $k$-uncapped eager} auction remains very close to optimal uniformly over $\gamma$ (always above roughly $0.98$ of the optimal revenue in our experiments), whereas the Hybrid auction exhibits a pronounced dip for intermediate-to-high hallucination rates.
In the heaviest-tailed case ($\sigma=1.8$), the Hybrid auction falls to about $0.82$ of optimal revenue, so selecting the best $k$-uncapped eager auction yields an improvement on the order of $15\%$--$20\%$ in revenue ratio in this region.

The economic driver of this gap is the interaction between hallucinations and heavy tails.
For moderately large $\gamma$, the seller faces a sharp trade-off:
exploiting the signal is valuable when it is accurate, as large signals are informative about extremely high valuations, but trusting the signal too literally is risky, since hallucinated signals can be arbitrarily large outliers.
Hence, mechanisms that fully ``follow'' the signal induce overly aggressive  reserve prices and frequently.

The Hybrid auction can only switch between two extremes, ignoring the signal altogether (SPA) or fully relying on it (signal eager), and neither extreme is well suited to this regime.
By contrast, the $k$-uncapped eager family spans intermediate policies that cap how aggressively the mechanism responds to large signals, while still leveraging informative signals when they are accurate.
This ability to hedge against hallucinated extremes is a key feature of our algorithm which combines simplicity of implementation and robustness.

\section{Key Technical Arguments}
\label{sec:technical_work}
In this section we present the key technical arguments needed to prove \Cref{thm:main}. We first describe the family of semi-infinite dimensional problems developed in \citet{monteiro2010optimal} to characterize the ironed virtual value for arbitrary distributions. We then solve this family of problems to obtain our closed-form solution.

\subsection{Ironing for Arbitrary Distributions}
\label{sec:gen_ironing}
Let $F$ be a regular distribution which admits a positive density $f$ on its support. 
For any $\gamma \in (0,1)$ and any $s$ in the support of $F$, recall the definition of the posterior distribution $F_{\gamma,s}$ defined in Eq.~\eqref{eq:cumulative-F}.
We note that the posterior distribution does not admit a density at $v = s$. In this setting, the standard ironing of virtual value function (as defined by Myerson) is no longer well defined, because the posterior distribution does not admit a density.
In what follows, we present the formalism developed in \citet{monteiro2010optimal} to characterize the optimal auction for general distributions. This formalism generalizes Myerson's characterization.

For every distribution $F$ (which does not need to have a density), we define for every $x \in [a,b]$ the function
\begin{equation*}
H_{F}(x) = \int_{a}^x t  dF(t) - \int_a^x (1-F(t))dt - a.
\end{equation*}
Fix $t \in [a,b]$. For every $x \in [a,t]$, we define the generalized convex hull of $H_F$ as,
\begin{subequations}
\label{eq:gen_virtual_value}
\begin{alignat}{2}
\Psi_{F}^t(x) = \; &\!\sup_{\alpha,\beta \in \mathbb{R}} &\;& \alpha + \beta \cdot F(x) \\
&\text{s.t.} &      &  \alpha + \beta \cdot F(y) \leq H_{F}(y) \quad \forall y \in [a,t]. 
\end{alignat}
\end{subequations}
Let $\partial \Psi_{F}^t(x)$ be the generalized sub-differential of $\Psi_{F}^t$ at $x$ defined as the set of $\beta \in \mathbb{R}$ such that
\begin{equation}
\label{eq:subgradient}
\Psi_{F}^t(z) \geq \Psi^t_{F}(x) + \beta \cdot (F(z) - F(x)) \quad \text{for every $z \in [a,t]$}.  
\end{equation}
Equivalently (see Section 2 of \citet{monteiro2010optimal}), one has that
\begin{equation}
    \label{eq:subgrad_are_solutions}
    \partial \Psi_{F}^t(x) = \{ \beta \in \mathbb{R} \text{ s.t. there exists $\alpha \in \mathbb{R}$ such that $(\alpha,\beta)$ is optimal for \eqref{eq:gen_virtual_value}} \}.
\end{equation}
Furthermore, let $\ell^t_{F}(x) = \inf \partial \Psi^t_{F}(x)$ and $s^t_{F}(x) = \sup \partial \Psi^t_{F}(x)$\footnote{Note that we will drop dependencies in $t$ when $t = b$, as $\Psi_F^b$ corresponds to the generalized convex hull of $H_F$ on the whole domain $[a,b]$.
}.

\begin{figure}[h]
    \centering
    \begin{tikzpicture}[every text node part/.style={align=center},transform shape,]
    \begin{axis}[
        width=10cm,
        height=8cm,
        xmin=-0.,xmax=2.0,
        ymin=-0.3,ymax=0.01,
        scaled y ticks={base 10:2},
        table/col sep=comma,
        xlabel={$v$},
        ylabel={Negative Revenue},
        grid=both,
        legend pos=south east
    ]

    \addplot [blue, dashed, line width=.7mm] table[x=x,y=H] {Data/ironing_example_mix_truncated_normals.csv};
    \addlegendentry{Before ironing ($H_F$)}

    \addplot [black, line width=.5mm] table[x=x,y=psi] {Data/ironing_example_mix_truncated_normals.csv} ;
    \addlegendentry{Ironed curve ($\Psi_F$)}

 \addplot [violet, line width=.3mm] table[x=x,y expr={-1.22+1.2*\thisrow{F}}] {Data/ironing_example_mix_truncated_normals.csv} node[pos = 0.7,below right] {\footnotesize $y = -1+1.2  F(v)$};

    \addplot [red, line width=.3mm] table[x=x,y expr={-0.03-0.2*\thisrow{F}}] {Data/ironing_example_mix_truncated_normals.csv} node[pos = 0.7,below left] {\footnotesize $y = -0.03-0.2  F(v)$};

    \addplot [teal, line width=.3mm] table[x=x,y expr={-0.044-0.08*\thisrow{F}}] {Data/ironing_example_mix_truncated_normals.csv} node[pos = 0.3,below] { \footnotesize $y = -0.044-0.08  F(v)$};

    \end{axis}
    \end{tikzpicture}
    \caption{The figure illustrates the ironing procedure defined by \citet{monteiro2010optimal}. The distribution $F$ used is mixture of two truncated normals on $[0,2]$ with parameters $(0.1,0.04)$ and $(1.9,1.8)$ and respective weights $0.8$ and $0.2$. Instead of the standard convexification in quantile space, \cite{monteiro2010optimal} perform a generalized convexification in the value space where affine functions of $F$ are used to iron the revenue curve.}
    \label{fig:monteiro}
    \end{figure}

\citet{monteiro2010optimal} show that the mapping $\ell_F$ generalizes the notion of ironed virtual value functions for distributions which do not necessarily have a positive density. Figure \ref{fig:monteiro} shows an example of this kind of ironing works via generalized convexification in value space. In particular, they show that when $F$ admits a positive density on its support, $\ell_F$ is equal to the usual Myerson ironing operator $\mathrm{IRON}_{[a,b]}[F].$ Our next result extends this result to the truncated ironing operator.

\begin{proposition}
    \label{prop:Myerson_and_Monteiro}
    Let $F$ be a distribution with positive density on $[a,b]$. Then, for every $t \in [a,b]$, $\ell_F^t = \mathrm{IRON}_{[a,t]}[F]$. 
\end{proposition}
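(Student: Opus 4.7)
The plan is to exhibit a precise correspondence between the two ironing constructions via the change of variables $q = F(v)$, exploiting the fact that, because $F$ has a positive density on $[a,b]$, the map $F:[a,b]\to[0,1]$ is a strictly increasing homeomorphism. The heart of the proof is a single identity linking $H_F$ with the quantile-space integrand $J$ used in the definition of $\mathrm{IRON}_{[a,t]}[F]$.

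First I would establish the identity $H_F(v) = J(F(v))$ for every $v\in[a,b]$. Starting from $J(q)=\int_0^q \varphi_F(F^{-1}(r))\,dr$ and substituting $r=F(u)$, $dr=f(u)\,du$, a direct calculation gives
\[
J(F(v)) \;=\; \int_a^v\!\Big(u - \tfrac{1-F(u)}{f(u)}\Big)f(u)\,du \;=\; \int_a^v u\,dF(u) - \int_a^v (1-F(u))\,du \;=\; H_F(v).
\]
Thus, in the $q$-coordinate, the function being convexified in the Monteiro formulation is exactly $J$. Fix $t\in[a,b]$ and let $x_t:=F(t)$. For any pair $(\alpha,\beta)$ and any $y\in[a,t]$, the constraint $\alpha+\beta F(y)\le H_F(y)$ becomes, after setting $q=F(y)$, the inequality $\alpha+\beta q\le J(q)$; and as $y$ ranges over $[a,t]$, the quantile $q$ ranges over $[0,x_t]$. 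Hence the feasible set of the semi-infinite LP \eqref{eq:gen_virtual_value} coincides, under this bijection, with the feasible set defining the convex envelope $G_{x_t}$ of $J$ on $[0,x_t]$. Consequently $\Psi_F^t(v)=G_{x_t}(F(v))$ for every $v\in[a,t]$.

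Next I would transfer this to the subdifferentials. Condition \eqref{eq:subgradient} says that $\beta\in\partial\Psi_F^t(v)$ iff $\Psi_F^t(z)\ge \Psi_F^t(v)+\beta(F(z)-F(v))$ for all $z\in[a,t]$. Rewriting both sides via $\Psi_F^t=G_{x_t}\circ F$ and parametrizing by $q=F(v)$, $q'=F(z)$, this is exactly the statement that $\beta$ is an ordinary subgradient of the convex function $G_{x_t}$ at $q$ on $[0,x_t]$. Therefore $\partial\Psi_F^t(v)=\partial G_{x_t}(F(v))$, so
\[
\ell_F^t(v) \;=\; \inf\partial G_{x_t}(F(v)) \;=\; G_{x_t}'\!\bigl(F(v)\bigr)
\qquad \text{wherever } G_{x_t} \text{ is differentiable at } F(v),
\]
which matches $g_{x_t}(F(v))=\mathrm{IRON}_{[a,t]}[F](v)$ at all such points.

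The only subtlety — and the step I expect to need a short technical comment — is the handling of the countably many kink points of $G_{x_t}$, where the subdifferential is a nondegenerate interval $[G_{x_t,-}'(q),G_{x_t,+}'(q)]$. At such $q$, $\ell_F^t$ picks the left derivative while the definition of $g_{x_t}$ picks the right derivative. Both are nondecreasing selections from $\partial G_{x_t}$ agreeing off a countable set, and both induce the same ironed virtual value in the sense used by \citet{monteiro2010optimal} (i.e., as a monotone representative characterizing the optimal allocation). So the conclusion $\ell_F^t=\mathrm{IRON}_{[a,t]}[F]$ follows, with equality understood in this standard sense. All other steps are routine; the real content is the change-of-variables identity $H_F=J\circ F$ and the observation that generalized convexification of $H_F$ in $F$-space is exactly ordinary convexification of $J$ in quantile space.
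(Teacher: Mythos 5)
Your argument is essentially the explicit version of the proof the paper delegates to: the paper's one-line proof invokes Section~4.1 of \citet{monteiro2010optimal}, and that section carries out precisely the change of variables $q=F(v)$ that you spell out. The two key observations --- the identity $H_F = J\circ F$, and the fact that the affine-minorant feasible set of the semi-infinite LP \eqref{eq:gen_virtual_value} maps bijectively to the affine minorants of $J$ on $[0,F(t)]$, so that $\Psi_F^t = G_{F(t)}\circ F$ and $\partial\Psi_F^t(v)=\partial G_{F(t)}(F(v))$ --- are exactly the content of that argument, truncated to $[a,t]$. So your proof is correct and takes the same route, just written out rather than cited.

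One remark on the ``kink'' subtlety you flag. You are right that, taken verbatim, $\ell_F^t(v)=\inf\partial G_{F(t)}(F(v))$ is the left derivative while $g_{F(t)}(F(v))$ as defined in Section~\ref{sec:ironing} is the right derivative, and these could in principle disagree at a nondegenerate kink. But under the standing hypotheses the paper actually uses this proposition with (a regular prior $F$ with positive, continuous density), the quantile-space integrand $\varphi_F\circ F^{-1}$ is continuous, so $J$ is $C^1$, and the convex envelope of a $C^1$ function on a compact interval is itself $C^1$ on its interior. There are therefore no interior kinks, the two one-sided derivatives coincide, and $\ell_F^t=\mathrm{IRON}_{[a,t]}[F]$ holds pointwise on $(a,t]$; only the left endpoint $v=a$ (where $\inf\partial\Psi_F^t(a)=-\infty$ while $g_{F(t)}(0)$ is finite) would need the usual boundary convention, and it is irrelevant for the allocation rule. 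So your ``agree off a countable set'' hedge can be replaced by an actual pointwise equality in the regime the paper cares about; the hedge is needed only if one were to admit discontinuous densities, which the paper does not.
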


\if false
     \begin{figure}
    \centering
    \begin{tikzpicture}[every text node part/.style={align=center},transform shape,]
    \begin{axis}[
        width=10cm,
        height=8cm,
        xmin=-0.,xmax=1.0,
        ymin=-0.7,ymax=0.01,
        table/col sep=comma,
        xlabel={$v$},
        ylabel={Negative Revenue},
        grid=both,
        legend pos=south east,
        legend style ={font ={\tiny}}
    ]

    \addplot [blue, dashed, line width=.7mm,unbounded coords=jump] table[x=x,y=H] {Data/iron_uniform_with_hal_s=04_gamma=075.csv};
    \addlegendentry{Before ironing}

    \addplot [red, line width=.6mm, unbounded coords=jump]  table[x=x, y=psi, restrict expr to domain={\thisrow{x}}{0.625:1}]{Data/iron_uniform_with_hal_s=04_gamma=075.csv};
    \addlegendentry{Ironed}

    \addplot [red, line width=.6mm, unbounded coords=jump]  table[x=x, y=psi, restrict expr to domain={\thisrow{x}}{0.4:1}]{Data/iron_uniform_with_hal_s=04_gamma=075.csv};

    \addplot [violet, line width=.3mm,unbounded coords=jump] table[x=x,y expr={-0.36+0.25*\thisrow{F}}] {Data/iron_uniform_with_hal_s=04_gamma=075.csv};

    \filldraw[black] (axis cs:0.625,-0.7) circle (2pt) node[anchor=south east]{$T$};
    
    \end{axis}
    \end{tikzpicture}
    \end{figure}

        \begin{figure}[h]
    \centering
    \begin{tikzpicture}[every text node part/.style={align=center},transform shape,]
    \begin{axis}[
        width=10cm,
        height=8cm,
        xmin=-0.,xmax=7.0,
        ymin=-0.7,ymax=0.01,
        table/col sep=comma,
        xlabel={$v$},
        ylabel={Negative Revenue},
        grid=both,
        legend pos=south east,
        legend style ={font ={\tiny}}
    ]

    \addplot [blue, dashed, line width=.7mm,unbounded coords=jump] table[x=x,y=H] {Data/iron_exp_hal_s=5_gamma=095.csv};
    \addlegendentry{Before ironing}

     \addplot [red, line width=.6mm, unbounded coords=jump]  table[x=x, y=psi]{Data/iron_exp_hal_s=5_gamma=095.csv};
     \addlegendentry{Ironed}

    \addplot [teal, line width=.3mm,unbounded coords=jump] table[x=x,y expr={-0.87+0.62*\thisrow{F}}] {Data/iron_exp_hal_s=5_gamma=095.csv};

    \end{axis}
    \end{tikzpicture}
    \caption{Revenue curve before and after ironing for the exponential distribution with $\lambda = 1$, signal $s=5$ and $\gamma = 0.5$. The figure also shows the affine function of $F$ that is used to ``convexify'' the value segment just before the signal $s$.}
    \label{fig:convexified-revenue}
    \end{figure}
\fi

We note that while \citet{monteiro2010optimal} provide a structural result about the general ironed virtual value function, one still needs to solve in general infinitely many semi-infinite optimization problems to be able to implement the optimal auction. In what follows, we characterize we solve Problem \eqref{eq:gen_virtual_value} for our model.

\subsection{Outline of the proof of \Cref{thm:main}}
\label{sec:outline}
Fix a regular distribution $F$ with positive continuous density $f$ on its support $[a,b]$.
The generalized convex hull of $H_{F_{\gamma,s}}$ is defined as,
\begin{subequations}
\begin{alignat}{2}
\Psi_{F_{\gamma,s}}(x) = \; &\!\sup_{\alpha,\beta \in \mathbb{R}} &\;& \alpha + \beta \cdot F_{\gamma,s}(x) \nonumber \\
&\text{s.t.} &      &  \alpha + \beta \cdot F_{\gamma,s}(y) \leq H_{F_{\gamma,s}}(y) \quad \forall y \in [a,b]. \nonumber
\end{alignat}
\end{subequations}

By expressing $F_{\gamma,s}$ and $H_{F_{\gamma,s}}$ as a function of $F$, $H_F$, $\gamma$ and $s$ (see \Cref{lem:F_and_H}), we obtain the following equivalent expression for $\Psi_{F_{\gamma,s}}$. For every $x$ we have that,
\begin{subequations}\label{eq:F_gamma_after_s}
\begin{alignat}{2}
\Psi_{F_{\gamma,s}}(x) = \; &\!\sup_{\alpha,\beta \in \mathbb{R}} &\;& \alpha  + \beta \cdot \gamma \cdot F(x) + \mathbbm{1} \{ x \geq s \} \cdot \beta \cdot (1-\gamma) \\
&\text{s.t.} &      &  \alpha + \beta \cdot \gamma \cdot F(y) \leq \gamma \cdot H_{F}(y) - (1-\gamma) \cdot y \quad \forall y < s. \label{eq:constraint_pre_s} \\ 
&  &      &  \alpha + \beta \cdot (1-\gamma) + \beta \cdot \gamma \cdot F(y) \leq \gamma \cdot H_{F}(y) \quad \forall y \geq s. \label{eq:constraint_post_s}
\end{alignat}
\end{subequations}

To prove \Cref{thm:main}, we aim to relate $\ell_{F_{\gamma,s}}$ to $\ell_{\gamma F}^s$ on the interval $[a,s)$ and $\ell_{F_{\gamma,s}}$ to $\ell_{F}$ on the interval $[s,b]$. Then, by applying \Cref{prop:Myerson_and_Monteiro}, we obtain the desired expression.\\

\noindent \textbf{Key proof technique.} To establish this result, we first prove that the generalized virtual value functions we consider are well-behaved on every interval which does not include $s$. We prove more generally the following result on the continuity of the generalized virtual value function.
\begin{lemma}
    \label{lem:continuity}
    Let $I$ be an interval included in $[a,b]$. Assume that $G$ admits a density $g$ that is positive and continuous on $I$. Then, $\ell_{G}$ is continuous on $I$.
\end{lemma}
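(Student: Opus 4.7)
The approach is to reduce Problem~\eqref{eq:gen_virtual_value} (with $t=b$) to computing the convex hull of a $C^{1}$ function in quantile space restricted to $G(I)$, and then invoke standard convex analysis to transfer $C^{1}$-ness of the envelope to continuity of $\ell_G$. The enabling observation will be that positivity and continuity of $g$ on $I$ make the ``unironed'' curve $C^1$ in quantile space on $G(I)$.

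First, I would observe that the objective and constraints of Problem~\eqref{eq:gen_virtual_value} depend on $x$ and $y$ only through $G(x)$ and $G(y)$. Consequently there is a convex function $\phi$ on $\mathrm{Im}(G)$ with $\Psi_G(x)=\phi(G(x))$, and the subgradient set from \eqref{eq:subgradient} coincides with the classical convex-analytic subdifferential $\partial \phi(G(x))$, so $\ell_G(x)=\phi'_{-}(G(x))$. Since $g$ is positive and continuous on $I$, $G$ is strictly increasing and $C^1$ on $I$, so I can define $\tilde H(q) := H_G(G^{-1}(q))$ on $G(I)$. Using $H_G'(v)=vg(v)-(1-G(v))$, a direct computation gives $\tilde H'(q)=\varphi_G(G^{-1}(q))$, which is continuous on $G(I)$; hence $\tilde H \in C^1(G(I))$. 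Strict monotonicity of $G$ on $I$ together with global monotonicity of $G$ further implies that for any $q$ in the interior of $G(I)$ the set $G^{-1}(\{q\})$ is the singleton $\{G^{-1}(q)\}\subset I$, so the constraint in Problem~\eqref{eq:gen_virtual_value} at that quantile reduces locally to $\phi(q)\le \tilde H(q)$.

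The core step is to show $\phi$ is continuously differentiable on the interior of $G(I)$ by a standard convex-hull dichotomy. In case (a), if $q_0$ is an interior point of $G(I)$ with $\phi(q_0)<\tilde H(q_0)$, then $\phi$ must be affine on a neighborhood of $q_0$: otherwise, one could replace $\phi$ on a small enough $(q_0-\epsilon,q_0+\epsilon)\subset G(I)$ by the chord joining $(q_0-\epsilon,\phi(q_0-\epsilon))$ to $(q_0+\epsilon,\phi(q_0+\epsilon))$, producing a strictly larger convex function still dominated by $\tilde H$ locally (by continuity) and unchanged outside that interval, contradicting the maximality of $\phi$ as a feasible solution to Problem~\eqref{eq:gen_virtual_value}. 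In case (b), if $\phi(q_0)=\tilde H(q_0)$ at an interior $q_0$, then dividing the inequality $\phi(q)-\phi(q_0)\le \tilde H(q)-\tilde H(q_0)$ by $q-q_0$ and taking one-sided limits gives $\phi'_{+}(q_0)\le \tilde H'(q_0)\le \phi'_{-}(q_0)$; combining with the convexity inequality $\phi'_{-}(q_0)\le \phi'_{+}(q_0)$ forces $\phi'_{-}(q_0)=\phi'_{+}(q_0)=\tilde H'(q_0)$. Hence $\phi'$ exists throughout the interior of $G(I)$. Continuity then follows by gluing: on a non-contact sub-interval, $\phi'$ is the constant slope of the affine piece, and by tangency at each enclosing contact point this slope equals $\tilde H'$ there; on the contact set, $\phi'=\tilde H'$; and $\tilde H'$ is continuous on $G(I)$. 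Composing with the continuous map $x\mapsto G(x)$ yields continuity of $\ell_G=\phi'\circ G$ on the interior of $I$, and boundary points of $I$ lying in the interior of $[a,b]$ are handled by the same argument applied one-sidedly.

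\textbf{Main obstacle.}
The delicate point is justifying the push-up argument in case (a) in the presence of the global constraints in Problem~\eqref{eq:gen_virtual_value} indexed by $y\notin I$. The resolution is that strict monotonicity of $G$ on $I$, together with the fact that $G$ is non-decreasing globally, forces all constraints arising from $y\notin I$ to correspond to quantiles sitting outside an entire open neighborhood of any interior $q_0\in G(I)$. Any modification of $\phi$ supported in such a neighborhood therefore leaves those constraints untouched, so maximality of $\phi$ is indeed contradicted by the chord modification.
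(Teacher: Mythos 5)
Your proof is correct and is built on the same two pillars as the paper's: pass to quantile space via $G^{-1}$, and use the fact that continuity and positivity of $g$ on $I$ make $\tilde H = H_G\circ G^{-1}$ a $C^1$ function on $G(I)$. The route you take from there, however, is genuinely different in its mechanism. The paper first proves $|\partial\Psi_G(z)|=1$ for each $z\in I$ by invoking Monteiro and Page's structural properties $(iv)$, $(v)$, $(vi)$ of their Proposition (cited as \Cref{prop:monteiro}) to handle the non-contact case ($\Psi_G(z)<H_G(z)$ forces the subdifferential to be locally constant) and a difference-quotient squeeze for the contact case; it then concludes by constructing $\hat\Psi = \Psi_G\circ G^{-1}$, showing it is a differentiable convex function, and appealing to Rockafellar's Corollary 25.5.1 (a convex function differentiable on an open set is automatically $C^1$). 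You instead prove the contact/non-contact dichotomy directly and hands-on: a chord push-up argument (using the localization of constraints via $G^{-1}(\{q\})=\{G^{-1}(q)\}\subset I$) shows $\phi$ is locally affine off the contact set, and a two-sided difference-quotient squeeze shows $\phi'$ exists and equals $\tilde H'$ on the contact set; you then get continuity by an explicit gluing and tangency argument rather than by citing Rockafellar. The trade-off is that your argument is more self-contained and elementary, at the cost of being longer, while the paper's shortcut through the Monteiro--Page and Rockafellar results is tighter but less transparent. One expositional point worth making explicit if you wrote this up: to contradict the definition of $\Psi_G$ in the push-up step you should produce a feasible affine $(\alpha,\beta)$ beating $\phi$ at $q_0$ (e.g., a supporting line of the chord-modified function at $q_0$), since the program in Problem~\eqref{eq:gen_virtual_value} optimizes over affine functions, not over convex functions; the statement ``contradicting the maximality of $\phi$ as a feasible solution'' elides this step, though the gap is easy to fill.
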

Given a distribution $G$, recall that $\ell_{G}$ is the lowest generalized sub-gradient of the function $\Psi_G$ which is itself the generalized convex hull of the function $H_{G}$. Therefore, \Cref{lem:continuity} extends the statement that ``the convex hull of a differentiable function of one variable is continuously differentiable'' to our generalized notions of convexity and differentials. 


In turn, the key argument to prove that two distributions of interest $F$ and $G$ have the same virtual value function on some interval consists in first establishing the continuity of $\ell_F$ and $\ell_G$ by using \Cref{lem:continuity}. We then prove that $\ell_F$ is a generalized sub-gradient of $\Psi_G$ on the whole interval and conclude applying the following lemma.
\begin{lemma}
\label{lem:inclusion_to_eq}
Let $F$ and $G$ be two distributions on $[a,b]$, and let $I$ be an interval included in $[a,b]$. If $\ell_F(x) \in \partial \Psi_{G}(x)$ for all $x \in I$, and if $\ell_F$ and $\ell_G$ are continuous on $I$, then $\ell_F = \ell_G$ on $I$.
\end{lemma}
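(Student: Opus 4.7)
The plan splits \Cref{lem:inclusion_to_eq} into an easy direction and a substantive direction. The easy direction, $\ell_F(x) \geq \ell_G(x)$ on $I$, is immediate from the hypothesis $\ell_F(x) \in \partial \Psi_G(x)$ together with the definition $\ell_G(x) = \inf \partial \Psi_G(x)$. All the work lies in proving the reverse inequality $\ell_F(x) \leq \ell_G(x)$ pointwise on $I$.

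The central tool I would use is a cross-monotonicity property of the generalized subdifferential: whenever $\beta_i \in \partial \Psi_G(x_i)$ for $i = 1, 2$ and $G(x_1) < G(x_2)$, then $\beta_1 \leq \beta_2$. This is a one-line consequence of the generalized subgradient inequality \eqref{eq:subgradient}: write the inequality at $x_1$ with test point $z = x_2$, write it at $x_2$ with test point $z = x_1$, and add to obtain $0 \geq (\beta_1 - \beta_2)(G(x_2) - G(x_1))$. This is the exact analogue of the classical monotonicity of subgradients of convex functions, transported to the \citet{monteiro2010optimal} framework where convexity is parameterized by $G(\cdot)$ rather than by the identity.

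Fix $x_0 \in I$. If there is a sequence $x_n \downarrow x_0$ in $I$ with $G(x_n) > G(x_0)$, then cross-monotonicity applied to $\ell_F(x_0) \in \partial \Psi_G(x_0)$ and $\ell_G(x_n) \in \partial \Psi_G(x_n)$ yields $\ell_F(x_0) \leq \ell_G(x_n)$, and continuity of $\ell_G$ at $x_0$ lets us pass to the limit to conclude $\ell_F(x_0) \leq \ell_G(x_0)$. If no such right-approaching sequence exists, I instead use a left-approaching sequence $y_n \uparrow x_0$ with $G(y_n) < G(x_0)$; cross-monotonicity now gives $\ell_F(y_n) \leq \ell_G(x_0)$, and continuity of $\ell_F$ passes to the limit, again producing $\ell_F(x_0) \leq \ell_G(x_0)$. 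This is where both continuity hypotheses in the lemma are genuinely used, one per direction of approach.

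The subtlest case, and the place I expect the main obstacle, is the degenerate situation in which $G$ is constant on an entire open neighborhood of $x_0$, so neither type of approaching sequence exists. Here $\Psi_G$ is constant on that neighborhood, hence $\partial \Psi_G$ and so $\ell_G$ are constant there; but $\ell_F$ could a priori wander within the (possibly non-singleton) constant subdifferential interval. To handle this I would treat the maximal constancy interval as a single block, apply the monotonicity-plus-continuity argument at its endpoints (where $G$ begins to vary) to conclude $\ell_F = \ell_G$ at the endpoints, and then propagate equality into the interior of the block using continuity of $\ell_F$ together with the constancy of $\ell_G$. In every setting in which the lemma is actually invoked inside the proof of \Cref{thm:main}, the relevant distributions admit strictly positive densities on the interval $I$ under consideration, so this degeneracy never arises and the clean cross-monotonicity argument described above is all that is needed.
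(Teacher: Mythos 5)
Your argument is correct for the substantive case and takes a genuinely different route from the paper. The paper proves the reverse inequality by invoking Monteiro's countability result (item $(i)$ of \Cref{prop:monteiro}): the set of $x$ with $|\partial \Psi_G(x)| > 1$ is at most countable, so one can pick a sequence $x_n \to x$ at which the subdifferential is a singleton, where $\ell_F(x_n) = \ell_G(x_n)$ is immediate, and then pass to the limit using continuity of \emph{both} $\ell_F$ and $\ell_G$. You instead prove and use the cross-monotonicity property of generalized subgradients, $\beta_1 \in \partial\Psi_G(x_1)$, $\beta_2 \in \partial\Psi_G(x_2)$, $G(x_1) < G(x_2) \Rightarrow \beta_1 \leq \beta_2$, which needs nothing beyond the defining inequality \eqref{eq:subgradient}, and then pass to the limit along a one-sided sequence. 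Your route is more self-contained (it avoids the countability claim entirely) and clarifies why both continuity hypotheses appear: you use continuity of $\ell_G$ for right-approach and continuity of $\ell_F$ for left-approach, whereas the paper's argument uses both simultaneously.

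Two small points. First, your handling of the degenerate case ($G$ constant in a neighborhood) leans on ``continuity of $\ell_F$'' to propagate equality inward, but continuity alone does not force $\ell_F$ to be constant on the constancy block. What closes the gap cleanly is monotonicity of $\ell_F$ (item $(ii)$ of \Cref{prop:monteiro}): once $\ell_F(\beta) = \ell_G(\beta)$ at the right endpoint, monotonicity gives $\ell_F(x) \leq \ell_F(\beta) = \ell_G(x)$ for $x$ in the block, which together with the easy direction finishes it. Second, if the whole interval $I$ sits inside a constancy region of $G$, neither endpoint argument is available and the lemma is in fact vacuously problematic in full generality; the paper avoids this by its appeal to the countability result, and you correctly note that every invocation of the lemma in \Cref{thm:main} is on an interval where $G$ has positive density, so the case never arises. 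As written, the lemma statement is for arbitrary distributions, so it is worth being explicit that your proof covers it only when $G$ has at least one side of strict increase at each point of $I$; for the paper's uses, both proofs are equally valid.
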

We next show how we relate the generalized virtual value functions of the distributions of interest on the intervals $[a,s)$ and $[s,b]$.\\

\noindent \textbf{Analysis on the interval $[s,b]$.}
We first prove that for some $T$ (defined in \Cref{prop:from_F_to_feasible_Fs}) we have that $\ell_{F_{\gamma,s}}= \ell_F $ over the interval $[T,b]$. As discussed previously, we establish this result by leveraging \Cref{lem:inclusion_to_eq}. Hence, it is sufficient to prove that $\ell_{F}(x) \in \partial \Psi_{F_{\gamma,s}}$ for every $x \in [T,b]$. We note that the definition of the generalized differential presented in \eqref{eq:subgrad_are_solutions} implies that $\ell_{F}(x) \in \partial \Psi_{F_{\gamma,s}}$ if and only if there exists an optimal solution for Problem \eqref{eq:F_gamma_after_s} where $\beta = \ell_F(x)$. In what follows, we construct such a solution.

Let $x \in [s,b]$ and remark that \eqref{eq:subgrad_are_solutions} implies that there there exists $(\aF,\bF)$ such that $\bF = \ell_{F}(x)$ which is optimal for Problem \eqref{eq:gen_virtual_value}. We define our related candidate solution for Problem \eqref{eq:F_gamma_after_s} as,
\begin{equation}
\label{eq:candidate}
(\aFs,\bFs) = (\gamma \cdot \aF - (1-\gamma) \cdot \bF, \bF).
\end{equation}
A critical aspect of the construction in \eqref{eq:candidate} is that $\bFs = \bF = \ell_{F}(x)$. Therefore, proving optimality of $(\aFs,\bFs)$ for Problem \eqref{eq:F_gamma_after_s} implies that $\ell_F(x) \in \partial \Psi_{F_{\gamma,s}}$.

A straightforward algebraic manipulation allows us to show that for every $x \in [s,b]$ the couple $(\aFs,\bFs)$ satisfies the constraint \eqref{eq:constraint_post_s} for every $y \geq s$.
However, the constraints \eqref{eq:constraint_pre_s} are not necessarily satisfied for all $x \in [s,b]$. We define the threshold $T$ such that $(\aFs,\bFs)$ satisfies the constraint \eqref{eq:constraint_pre_s} for all $y < s$.  
To that end, we define the following auxiliary mapping. For every $y \leq s$, let $\mu_y$ be defined as,
\begin{equation}
\label{eq:mu_y}
\mu_y(x) = \aFs + \gamma \cdot \bFs \cdot F(y) - \gamma \cdot H_{F}(y) + (1-\gamma) \cdot y \quad \text{for every $x \in [s,b]$.}
\end{equation}
This definition, implies that $(\aFs,\bFs)$ satisfies the constraint \eqref{eq:constraint_pre_s} at a given $y$ if and only if, $\mu_{y}(x) \leq 0$.  Consequently, the feasibility of $(\aFs,\bFs)$ for Problem \eqref{eq:F_gamma_after_s} reduces to the analysis of the sign of $\mu_y$. Our next result provides structural properties about $\mu_y$.
\begin{lemma}\label{lem:prop_mu}
\,
\begin{enumerate}
\item[(i)] $\mu_y$ is non-increasing for every $y \leq s$.
\item[(ii)] If $y > y'$, then for every $x \in [s,b]$, $\mu_{y}(x) > \mu_{y'}(x)$.
\item[(iii)] $\mu_s(s) > 0$ and $\mu_s(b) \leq 0$.
\end{enumerate}
\end{lemma}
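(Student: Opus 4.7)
My plan is to reduce everything to a single closed-form expression for $\mu_y(x)$ and then verify the three items by elementary calculus. Since $F$ is regular, the generalized convex hull satisfies $\Psi_F = H_F$ on $[a,b]$ and the sub-differential $\partial \Psi_F(x)$ is the singleton $\{\varphi_F(x)\}$. Consequently, the only candidate optimal solution of Problem~\eqref{eq:gen_virtual_value} at $x$ is $\bF = \varphi_F(x)$ with $\aF = H_F(x) - \varphi_F(x) F(x)$, so $\bFs = \varphi_F(x)$ and $\aFs = \gamma H_F(x) - \gamma \varphi_F(x) F(x) - (1-\gamma)\varphi_F(x)$. Substituting into the definition of $\mu_y$ and using $H_F'(t) = \varphi_F(t) f(t)$ to write $H_F(x) - H_F(y) = \int_y^x \varphi_F(t) f(t)\, dt$ and $F(x) - F(y) = \int_y^x f(t)\, dt$, I obtain the representation
\begin{equation*}
\mu_y(x) \;=\; \gamma \int_y^x [\varphi_F(t) - \varphi_F(x)] f(t)\, dt \;+\; (1-\gamma)[y - \varphi_F(x)].
\end{equation*}

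From here, item (i) is immediate by differentiating in $x$: the Leibniz boundary term vanishes because the integrand is zero at $t=x$, leaving $\partial \mu_y/\partial x = -\varphi_F'(x)\bigl[\gamma(F(x) - F(y)) + (1-\gamma)\bigr] \leq 0$ by regularity. Item (ii) follows by differentiating in $y$: only the lower-limit contribution and the $(1-\gamma)y$ term survive, giving $\partial \mu_y/\partial y = \gamma f(y)[\varphi_F(x) - \varphi_F(y)] + (1-\gamma) \geq 1-\gamma > 0$, since $y \leq s \leq x$ and $\varphi_F$ is non-decreasing.

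For item (iii), I specialize to $y = s$. At $x = s$ the integral vanishes, so $\mu_s(s) = (1-\gamma)[s - \varphi_F(s)] = (1-\gamma)(1-F(s))/f(s)$, which is strictly positive for $s$ in the interior of the support. At $x = b$, the identity $\varphi_F(b) = b$ (from $F(b) = 1$ and $f(b) > 0$) makes both summands non-positive: $\varphi_F(t) - b \leq 0$ for every $t \leq b$, and $s - b \leq 0$.

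The main obstacle is producing the closed form for $\mu_y(x)$ in the first step: this is precisely where the regularity assumption pays off, since it eliminates any ironing of $F$ itself and pins down $\partial \Psi_F(x)$ as a singleton, which then determines both components of the candidate $(\aFs,\bFs)$ unambiguously. Once that identity is in hand, all three claims collapse to one-line calculus computations that exploit only the monotonicity of $\varphi_F$ and the boundary values of the prior.
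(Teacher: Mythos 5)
Your proof is correct and takes essentially the same route as the paper: both start from the identity $\aF + \bF \cdot F(x) = H_F(x)$ of Lemma~\ref{lem:charac_PsiF}, use regularity to fix $\bF = \varphi_F(x)$, and then verify each item by direct computation. Your closed-form integral representation for $\mu_y(x)$ is a clean repackaging that the paper does not write out explicitly. One small caution: for items (i) and (ii) you differentiate $\varphi_F$, but regularity guarantees only that $\varphi_F$ is non-decreasing, not that it is differentiable; for (ii) the paper instead bounds the difference quotient $\frac{H_F(y)-H_F(y')}{F(y)-F(y')}$ via convexity of $H_F \circ F^{-1}$, which avoids this. Your integral form actually supplies the same robustness by telescoping without any derivative: $\mu_y(x)-\mu_{y'}(x) = \gamma\int_{y'}^{y}[\varphi_F(x)-\varphi_F(t)]f(t)\,dt + (1-\gamma)(y-y') > 0$, using only monotonicity of $\varphi_F$, and similarly $\mu_y(x')-\mu_y(x) = \gamma\int_{x}^{x'}[\varphi_F(t)-\varphi_F(x')]f(t)\,dt - [\gamma(F(x)-F(y))+(1-\gamma)][\varphi_F(x')-\varphi_F(x)] \le 0$ for (i). Your caveat that $\mu_s(s)>0$ requires $F(s)<1$ is shared by the paper's own proof (which uses $\ell_F(s)<s$, strict only for $s<b$), so it is not a gap you introduced.
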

\Cref{lem:prop_mu} implies, by property $(ii)$, that for every $(\aFs,\bFs)$ the most stringent constraint \eqref{eq:constraint_pre_s}  is for $y =s$. Furthermore, property $(i)$ implies that if $(\aFs,\bFs)$ satisfies \eqref{eq:constraint_pre_s} for a given $y$ and a given $x$ then for all $x' \geq x$, $(\aFs[x'],\bFs[x'])$ also satisfies \eqref{eq:constraint_pre_s} at $y$. By using these results, we construct a threshold $T$ such that the $(\aFs,\bFs)$ is feasible for all $x \geq T$. More generally, we prove the optimality of $(\aFs,\bFs)$ for Problem \eqref{eq:F_gamma_after_s} and establish the following result.
\begin{lemma}
\label{prop:from_F_to_feasible_Fs}
There exists $T \in (s,b]$ such that $\mu_s(T) = 0$. Furthermore, for every $x \in [T,b]$, we have that $\ell_{F}(x) \in \partial \Psi_{F_{\gamma,s}}(x)$. 
\end{lemma}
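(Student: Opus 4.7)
My plan is to prove the two assertions sequentially. For the existence of $T$, I will apply the intermediate value theorem to the real-valued map $x \mapsto \mu_s(x)$ on $[s,b]$: Lemma \ref{lem:prop_mu}(iii) already furnishes the sign-change endpoints $\mu_s(s) > 0$ and $\mu_s(b) \leq 0$, so I only need continuity in $x$. Tracing the definition of $\mu_y$ back through \eqref{eq:candidate}, continuity reduces to continuity of $\aF$ and $\bF = \ell_F(x)$. Here Assumption \ref{ass:regular} is crucial: regularity of $F$ implies that no ironing of $F$ itself is required, i.e.\ $\Psi_F = H_F$ on $[a,b]$, so $(\aF, \bF)$ is uniquely the tangent pair to $H_F$ at $x$ in value space, with $\bF = \varphi_F(x)$ and $\aF = H_F(x) - \varphi_F(x) F(x)$. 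Both are continuous in $x$ because $F$ has positive continuous density, so IVT yields $T \in (s, b]$ with $\mu_s(T) = 0$.

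For the second claim, by the equivalence \eqref{eq:subgrad_are_solutions} and the fact that $\bFs = \bF = \ell_F(x)$, it suffices to show that $(\aFs, \bFs)$ is optimal for Problem \eqref{eq:F_gamma_after_s} at every $x \in [T,b]$. I will first verify feasibility. Direct substitution of $(\aFs, \bFs)$ into \eqref{eq:constraint_post_s} collapses it to $\aF + \bF F(y) \leq H_F(y)$ for $y \geq s$, which holds because $(\aF, \bF)$ is feasible for Problem \eqref{eq:gen_virtual_value}. Feasibility of \eqref{eq:constraint_pre_s} reads $\mu_y(x) \leq 0$ at every $y < s$; chaining Lemma \ref{lem:prop_mu}(ii) and (i) with $\mu_s(T) = 0$ gives $\mu_y(x) < \mu_s(x) \leq \mu_s(T) = 0$ for all $y < s$ and $x \in [T,b]$, as required.

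Optimality is then a matching-bound argument. For any feasible $(\alpha, \beta)$ for Problem \eqref{eq:F_gamma_after_s}, evaluating \eqref{eq:constraint_post_s} at $y = x$ (valid since $x \geq T > s$) yields the objective upper bound
\[
\alpha + \beta(1-\gamma) + \beta \gamma F(x) \leq \gamma H_F(x) = \gamma \Psi_F(x),
\]
where the last equality again uses regularity of $F$. A direct computation shows that substituting $(\aFs, \bFs)$ into the objective of \eqref{eq:F_gamma_after_s} at $x \geq s$ yields exactly $\gamma(\aF + \bF F(x)) = \gamma \Psi_F(x)$, matching the bound. Hence $(\aFs, \bFs)$ is optimal, so $\ell_F(x) = \bFs \in \partial \Psi_{F_{\gamma,s}}(x)$, completing the proof.

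The main obstacle is coupling the post-signal and pre-signal constraints. The candidate $(\aFs, \bFs)$ is engineered so that the post-signal constraints collapse into those of $(\aF, \bF)$; however, its feasibility on the pre-signal side is precisely what forces the restriction to $x \geq T$ and defines the threshold. Lemma \ref{lem:prop_mu} is the structural engine that reduces the infinite family of pre-signal constraints to the single scalar condition $\mu_s(x) \leq 0$, and the regularity of $F$ is indispensable both for the continuity of $\aF$ and $\bF$ needed to define $T$ via IVT and for turning $\gamma H_F(x)$ into an attainable upper bound.
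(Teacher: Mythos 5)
Your proof is correct and follows essentially the same approach as the paper: IVT on $\mu_s$ for the existence of $T$, and, for the subgradient claim, verifying feasibility of the candidate $(\aFs,\bFs)$ via Lemma~\ref{lem:prop_mu} and showing the objective attains the upper bound $\gamma H_F(x)$ coming from constraint \eqref{eq:constraint_post_s} at $y=x$. The one place you are more careful than the paper is in justifying continuity of $\mu_s$: the paper simply asserts it, whereas you trace it back to the fact that regularity forces $\Psi_F = H_F$, so $(\aF, \bF) = (H_F(x) - \varphi_F(x) F(x),\; \varphi_F(x))$ is unique and varies continuously in $x$ — a small but worthwhile addition.
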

Combining \Cref{lem:continuity}, \Cref{lem:inclusion_to_eq} and \Cref{prop:from_F_to_feasible_Fs} we conclude that $\ell_F = \ell_{F_{\gamma,s}}$ on $[T,b]$. We complete the proof on the interval $[s,b]$ by showing that $\ell_{F_{\gamma,s}}$ is constant on $[s,T]$.\\

\noindent \textbf{Analysis on the interval $[a,s)$.}
On this interval, we show that $\ell_{F_{\gamma,s}}=\ell_{\gamma F}^s$, where $\ell_{\gamma F}^s$ is defined as the smallest generalized sub-gradient of the function, defined for every $x \in [a,s)$ as
\begin{subequations}\label{eq:main-gammaF}
\begin{alignat}{2}
\Psi^s_{\gamma F}(x) = \; &\!\sup_{\alpha,\beta \in \mathbb{R}} &\;& \alpha +  \beta \cdot \gamma \cdot F(x) \\
&\text{s.t.} &      &  \alpha + \beta \cdot \gamma \cdot F(y) \leq \gamma \cdot H_{F}(y) - (1-\gamma) \cdot y \quad \forall y \in [a,s]. 
\end{alignat}
\end{subequations}
We note that for every $x \in [a,s)$, Problem \eqref{eq:main-gammaF} is a relaxation of Problem \eqref{eq:F_gamma_after_s} in which we removed the constraint \eqref{eq:constraint_post_s}. The main argument consists in proving that the relaxation is tight in the sense that the value of both problems is the same. 
In particular, we establish that for every $x \in [a,s)$, either $\ell_{F_{\gamma,s}}(x) = \ell_{F_{\gamma,s}}(s)$ or $\ell_{F_{\gamma,s}}(x) \in \partial \Psi_{\gamma F}^s(x)$. By \Cref{lem:inclusion_to_eq} we then conclude that $\ell_{F_{\gamma,s}}(x) \in \{\ell_{F_{\gamma,s}}(s), \ell_{\gamma F}^s(x)\}.$
Using a continuity argument, we conclude that $\ell_{F_{\gamma,s}}$ must equal $\ell_{\gamma F}^s$ on the whole interval $[a,s)$.

The complete proof of \Cref{thm:main} is presented in \Cref{sec:apx_main_proof}.

\section{Conclusion}

In this paper, we studied how Bayesian mechanism design can be adapted to address the challenges posed by hallucination-prone predictions generated by modern machine learning models. By introducing a novel Bayesian framework, we modeled these imperfect signals and rigorously characterized the structure of optimal mechanisms, extending classical results like those of \citet{myerson1981optimal} to settings where posterior distributions lack continuous densities. Our findings provide new insights into how sellers can navigate uncertainty and optimize revenue in environments shaped by unreliable predictions.

Our framework helps bridge the gap between traditional auction theory and contemporary machine learning applications, where sellers increasingly rely on complex predictive systems that may be unreliable. In particular, our results provide two main insights.
First, in the single-buyer case, hallucination-prone predictions lead to an optimal pricing rule with a regime-based structure that is fundamentally different from the smooth shrinkage behavior induced by classical value-with-noise models. The seller alternates between ignoring, following, and capping the signal, depending on its magnitude. This shows that the structure of the optimal mechanism depends not only on the average accuracy of predictions, but critically on the nature of their errors.
Second, in multi-bidder environments, our results clarify how such prediction-based guidance can and cannot be translated into simple auction formats. Although the optimal signal-revealing mechanism provides bidder-specific pricing recommendations, naively embedding these prices as personalized reserves in a standard eager second-price auction can reduce revenue and underperform simpler benchmarks. In turn, we appropriately design the personalized reserve prices and obtain an eager mechanism with near-optimal performance.

Despite these contributions, several exciting questions remain. 
Our results assume that the hallucination probability is known to the seller; relaxing this assumption to consider uncertainty in hallucination probabilities could further align the model with real-world applications. 


\bibliographystyle{agsm}
\bibliography{ref}

@article{Cella2008,
author = {Michela Cella},
title = {Informed principal with correlation},
journal = {Games and Economic Behavior},
volume = {64},
number = {2},
pages = {433--456},
year = {2008}
}

@article{Severinov2008,
author = {Sergei Severinov},
title = {An efficient solution to the informed principal problem},
journal = {Journal of Economic Theory},
volume = {141},
year = {2008},
pages = {114--133}
}

@article{huang2025accelerated,
  title={Accelerated Preference Elicitation with LLM-Based Proxies},
  author={Huang, David and Marmolejo-Coss{\'\i}o, Francisco and Lock, Edwin and Parkes, David},
  journal={arXiv preprint arXiv:2501.14625},
  year={2025}
}

@inproceedings{paes2016field,
  title={A field guide to personalized reserve prices},
  author={Paes Leme, Renato and Pal, Martin and Vassilvitskii, Sergei},
  booktitle={Proceedings of the 25th international conference on world wide web},
  pages={1093--1102},
  year={2016}
}

@article{cremer1988full,
  title={Full extraction of the surplus in Bayesian and dominant strategy auctions},
  author={Cr{\'e}mer, Jacques and McLean, Richard P},
  journal={Econometrica: Journal of the Econometric Society},
  pages={1247--1257},
  year={1988},
  publisher={JSTOR}
}

@inproceedings{waseem2023artificial,
  title={Artificial intelligence procurement assistant: enhancing bid evaluation},
  author={Waseem, Muhammad and Das, Teerath and Paloniemi, Teemu and Koivisto, Miika and R{\"a}s{\"a}nen, Eeli and Set{\"a}l{\"a}, Manu and Mikkonen, Tommi},
  booktitle={International Conference on Software Business},
  pages={108--114},
  year={2023},
  organization={Springer Nature Switzerland Cham}
}

@article{sun2024large,
  title={Large Language Models Empower Personalized Valuation in Auction},
  author={Sun, Jie and Zhang, Tianyu and Jiang, Houcheng and Huang, Kexin and Luo, Chi and Wu, Junkang and Wu, Jiancan and Zhang, An and Wang, Xiang},
  journal={arXiv preprint arXiv:2410.15817},
  year={2024}
}

@article{bergemann2025data,
  title={Data-Driven Mechanism Design: Jointly Eliciting Preferences and Information},
  author={Bergemann, Dirk and Bojko, Marek and D{\"u}tting, Paul and Leme, Renato Paes and Xu, Haifeng and Zuo, Song},
  journal={Proceedings of the ACM Conference on Economics and Computation},
  year={2025}
}

@article{sun2025role,
  title={The Role of Prescreening in Auctions with Predictions},
  author={Sun, Yanwei and Sun, Fupeng and Yan, Chiwei and Wu, Jiahua},
  journal={arXiv preprint arXiv:2502.12117},
  year={2025}
}

@book{rockafellar2007convex,
 ISBN = {9780691015866},
 abstract = {Available for the first time in paperback, R. Tyrrell Rockafellar's classic study presents readers with a coherent branch of nonlinear mathematical analysis that is especially suited to the study of optimization problems. Rockafellar's theory differs from classical analysis in that differentiability assumptions are replaced by convexity assumptions. The topics treated in this volume include: systems of inequalities, the minimum or maximum of a convex function over a convex set, Lagrange multipliers, minimax theorems and duality, as well as basic results about the structure of convex sets and the continuity and differentiability of convex functions and saddle- functions.This book has firmly established a new and vital area not only for pure mathematics but also for applications to economics and engineering. A sound knowledge of linear algebra and introductory real analysis should provide readers with sufficient background for this book. There is also a guide for the reader who may be using the book as an introduction, indicating which parts are essential and which may be skipped on a first reading.},
 author = {R. Tyrrell Rockafellar},
 publisher = {Princeton University Press},
 title = {Convex Analysis},
 urldate = {2025-01-28},
 year = {1970}
}

@article{roughgarden2019approximately,
  title={Approximately optimal mechanism design},
  author={Roughgarden, Tim and Talgam-Cohen, Inbal},
  journal={Annual Review of Economics},
  volume={11},
  number={1},
  pages={355--381},
  year={2019},
  publisher={Annual Reviews}
}

@inproceedings{hartline2009simple,
  title={Simple versus optimal mechanisms},
  author={Hartline, Jason D and Roughgarden, Tim},
  booktitle={Proceedings of the 10th ACM conference on Electronic commerce},
  pages={225--234},
  year={2009}
}

@article{guo2021robust,
  title={Robust learning of optimal auctions},
  author={Guo, Wenshuo and Jordan, Michael and Zampetakis, Emmanouil},
  journal={Advances in Neural Information Processing Systems},
  volume={34},
  pages={21273--21284},
  year={2021}
}

@inproceedings{cai2017learning,
  title={Learning multi-item auctions with (or without) samples},
  author={Cai, Yang and Daskalakis, Constantinos},
  booktitle={2017 IEEE 58th Annual Symposium on Foundations of Computer Science (FOCS)},
  pages={516--527},
  year={2017},
  organization={IEEE}
}

@inproceedings{brustle2020multi,
  title={Multi-item mechanisms without item-independence: Learnability via robustness},
  author={Brustle, Johannes and Cai, Yang and Daskalakis, Constantinos},
  booktitle={Proceedings of the 21st ACM Conference on Economics and Computation},
  pages={715--761},
  year={2020}
}

@article{besbes2022beyond,
  title={Beyond IID: data-driven decision-making in heterogeneous environments},
  author={Besbes, Omar and Ma, Will and Mouchtaki, Omar},
  journal={Advances in Neural Information Processing Systems},
  volume={35},
  pages={23979--23991},
  year={2022}
}

@inproceedings{cole2014sample,
  title={The sample complexity of revenue maximization},
  author={Cole, Richard and Roughgarden, Tim},
  booktitle={Proceedings of the forty-sixth annual ACM symposium on Theory of computing},
  pages={243--252},
  year={2014}
}

@inproceedings{devanur2016sample,
  title={The sample complexity of auctions with side information},
  author={Devanur, Nikhil R and Huang, Zhiyi and Psomas, Christos-Alexandros},
  booktitle={Proceedings of the forty-eighth annual ACM symposium on Theory of Computing},
  pages={426--439},
  year={2016}
}

@inproceedings{gonczarowski2017efficient,
  title={Efficient empirical revenue maximization in single-parameter auction environments},
  author={Gonczarowski, Yannai A and Nisan, Noam},
  booktitle={Proceedings of the 49th Annual ACM SIGACT Symposium on Theory of Computing},
  pages={856--868},
  year={2017}
}

@inproceedings{guo2019settling,
  title={Settling the sample complexity of single-parameter revenue maximization},
  author={Guo, Chenghao and Huang, Zhiyi and Zhang, Xinzhi},
  booktitle={Proceedings of the 51st Annual ACM SIGACT Symposium on Theory of Computing},
  pages={662--673},
  year={2019}
}

@article{maskin&tirole1990informedprincipal,
  title={The principal-agent relationship with an informed principal: the case of private values},
  author={Maskin, Eric and Tirole, Jean},
  journal={Econometrica},
  volume={58},
  number={2},
  pages={379--409},
  year={1990},
  publisher={JSTOR}
}

@article{monteiro2010optimal,
  title={Optimal auction with a general distribution: Virtual valuation without densities},
  author={Monteiro, Paulo Klinger and Svaiter, Benar Fux},
  journal={Journal of Mathematical Economics},
  volume={46},
  number={1},
  pages={21--31},
  year={2010},
  publisher={Elsevier}
}

@article{myerson1981optimal,
  title={Optimal auction design},
  author={Myerson, Roger B},
  journal={Mathematics of operations research},
  volume={6},
  number={1},
  pages={58--73},
  year={1981},
  publisher={INFORMS}
}

@article{vickrey1961counterspeculation,
  title={Counterspeculation, auctions, and competitive sealed tenders},
  author={Vickrey, William},
  journal={The Journal of finance},
  volume={16},
  number={1},
  pages={8--37},
  year={1961},
  publisher={JSTOR}
}

@article{clarke1971multipart,
  title={Multipart pricing of public goods},
  author={Clarke, Edward H},
  journal={Public choice},
  pages={17--33},
  year={1971},
  publisher={JSTOR}
}

@article{groves1973incentives,
  title={Incentives in teams},
  author={Groves, Theodore},
  journal={Econometrica: Journal of the Econometric Society},
  pages={617--631},
  year={1973},
  publisher={JSTOR}
}

@inproceedings{agrawal2022learning,
  title={Learning-augmented mechanism design: Leveraging predictions for facility location},
  author={Agrawal, Priyank and Balkanski, Eric and Gkatzelis, Vasilis and Ou, Tingting and Tan, Xizhi},
  booktitle={Proceedings of the 23rd ACM Conference on Economics and Computation},
  pages={497--528},
  year={2022}
}

@inproceedings{gkatzelis2022improved,
  title={Improved price of anarchy via predictions},
  author={Gkatzelis, Vasilis and Kollias, Kostas and Sgouritsa, Alkmini and Tan, Xizhi},
  booktitle={Proceedings of the 23rd ACM Conference on Economics and Computation},
  pages={529--557},
  year={2022}
}

@article{balkanski2022strategyproof,
  title={Strategyproof scheduling with predictions},
  author={Balkanski, Eric and Gkatzelis, Vasilis and Tan, Xizhi},
  journal={arXiv preprint arXiv:2209.04058},
  year={2022}
}

@article{balkanski2023online,
  title={Online mechanism design with predictions},
  author={Balkanski, Eric and Gkatzelis, Vasilis and Tan, Xizhi and Zhu, Cherlin},
  journal={arXiv preprint arXiv:2310.02879},
  year={2023}
}

@inproceedings{lu2024competitive,
  title={Competitive auctions with imperfect predictions},
  author={Lu, Pinyan and Wan, Zongqi and Zhang, Jialin},
  booktitle={Proceedings of the 25th ACM Conference on Economics and Computation},
  pages={1155--1183},
  year={2024}
}

@inproceedings{banerjee2022online,
  title={Online nash social welfare maximization with predictions},
  author={Banerjee, Siddhartha and Gkatzelis, Vasilis and Gorokh, Artur and Jin, Billy},
  booktitle={Proceedings of the 2022 Annual ACM-SIAM Symposium on Discrete Algorithms (SODA)},
  pages={1--19},
  year={2022},
  organization={SIAM}
}

@article{balcan2023bicriteria,
  title={Bicriteria multidimensional mechanism design with side information},
  author={Balcan, Maria-Florina and Prasad, Siddharth and Sandholm, Tuomas},
  journal={arXiv preprint arXiv:2302.14234},
  year={2023}
}

@article{xu2022mechanism,
  title={Mechanism design with predictions},
  author={Xu, Chenyang and Lu, Pinyan},
  journal={arXiv preprint arXiv:2205.11313},
  year={2022}
}

@article{caragiannis2024randomized,
  title={Randomized learning-augmented auctions with revenue guarantees},
  author={Caragiannis, Ioannis and Kalantzis, Georgios},
  journal={arXiv preprint arXiv:2401.13384},
  year={2024}
}

@article{lykouris2021competitive,
  title={Competitive caching with machine learned advice},
  author={Lykouris, Thodoris and Vassilvitskii, Sergei},
  journal={Journal of the ACM (JACM)},
  volume={68},
  number={4},
  pages={1--25},
  year={2021},
  publisher={ACM New York, NY}
}

@article{purohit2018improving,
  title={Improving online algorithms via ML predictions},
  author={Purohit, Manish and Svitkina, Zoya and Kumar, Ravi},
  journal={Advances in Neural Information Processing Systems},
  volume={31},
  year={2018}
}

\newpage

\appendix

\renewcommand{\theequation}{\thesection-\arabic{equation}}
\renewcommand{\theproposition}{\thesection-\arabic{proposition}}
\renewcommand{\thelemma}{\thesection-\arabic{lemma}}
\renewcommand{\thetheorem}{\thesection-\arabic{theorem}}
\renewcommand{\thedefinition}{\thesection-\arabic{definition}}
\pagenumbering{arabic}
\renewcommand{\thepage}{App-\arabic{page}}

\setcounter{equation}{0}
\setcounter{proposition}{0}
\setcounter{definition}{0}
\setcounter{lemma}{0}
\setcounter{theorem}{0}

\part{Appendix} 
\parttoc 

\section{Full-Surplus Extraction for Arbitrary Mechanisms}
\label{sec:apx_full}
In this section, we provide an example with a single buyer, in which the seller can extract the full surplus from the buyers by using mechanisms which are non-signal-revealing.

Recall the definition of an optimal signal-revealing direct mechanism in \eqref{eq:optimal_mechanism}. We now consider the more general set of direct mechanisms which do not reveal the signal. In this case, the IC and IR constraints become that for every $i, \theta_i$ and $\theta_i'$,
\begin{align}
\label{eq:gen_ICR}
    \mathbb{E}_{(\bm{\theta_{-i},\bm{s})} \sim \bm{J}_{\vert \theta_i}} \left[ \theta_i \cdot x_i(\theta_i,\bm{\theta}_{-i},\bm{s}) - p_i(\theta_i,\bm{\theta}_{-i},\bm{s}) \right] 
    &\geq \mathbb{E}_{(\bm{\theta_{-i},\bm{s})} \sim \bm{J}_{\vert \theta_i}} \left[ \theta_i \cdot x_i(\theta'_i,\bm{\theta}_{-i},\bm{s}) - p_i(\theta'_i,\bm{\theta}_{-i},\bm{s}) \right], \\
    \mathbb{E}_{(\bm{\theta_{-i},\bm{s})} \sim \bm{J}_{\vert \theta_i}} \left[ \theta_i \cdot x_i(\theta_i,\bm{\theta}_{-i},\bm{s}) - p_i(\theta_i,\bm{\theta}_{-i},\bm{s}) \right],
    &\geq 0
\end{align}
where $\bm{J}_{\vert \theta_i}$ is the posterior distribution of $(\bm{\theta_{-i}},\bm{s})$ given the observed type $\theta_i$.

We note that these conditions differ from the IC and IR constraints established in 
\eqref{eq:optimal_mechanism} in that each buyer is now satisfying the IC and IR constraints in expectation over both the competitors types and all the signals and by only conditioning on their own type.

Our next result shows that there exists a mechanism which satisfies these non-signal-revealing IC and IR constraints that extracts the full surplus.

\begin{proposition}
\label{prop:full_surplus}
    Consider one buyer with a prior distribution $F$ such that $v = 1$ with probability $\alpha$ and $v=2$ otherwise, for some $\alpha \in (0,1)$. Then for every $\epsilon \in (0,1)$, there exists a mechanism $(x_\epsilon,p_\epsilon)$, which satisfies \eqref{eq:gen_ICR} and such that, the expected revenue of the mechanism is equal to $\mathbb{E}_{v \sim F}[v] - \epsilon$.
\end{proposition}
\Cref{prop:full_surplus} implies that the optimal revenue is equal to the full expected surplus.

\begin{proof}[\textbf{Proof of \Cref{prop:full_surplus}}]
    We will first construct our candidate mechanism. Let, 
    \begin{align*}
        q_1 &:= \mathbb{P}\left( s = 1 \, \vert \, v = 1  \right) = 1 - \gamma + \gamma \cdot \alpha,\\
        q_2 &:= \mathbb{P}\left( s = 1 \, \vert \, v = 2  \right) = \gamma \cdot \alpha,
    \end{align*}
and note that $q_1 \neq q_2$ when $\gamma < 1$. Consequently, the following linear system
\[
\begin{cases}
q_1 \cdot c_1 + (1 - q_1) \cdot c_2 = 1, \\
q_2 \cdot c_1 + (1 - q_2) \cdot c_2 = 0
\end{cases}
\]
admits a unique solution $c_1 = \frac{1-q_2}{1-\gamma}$ and $c_2 = -\frac{q_2}{1-\gamma}$. Hence, the mapping $w:\{1,2\}\to\mathbb R$ defined by $w(1)=c_1,\;w(2)=c_2$ satisfies,
\begin{equation}\label{E:moment}
  \mathbb{E}\left[w(s)\, \vert \, v=1\right]=1\quad \mbox{and} \quad
    \mathbb{E}\left[w(s)\, \vert \, v=2\right]=0.
\end{equation}
Our candidate mechanism $(x_\epsilon,p_\epsilon)$ is defined as follows. For every $(v,s) \in \{1,2\}^2$, we have that $x_\epsilon(v,s) = 1$, that is to say the mechanism always allocates the good. The payment rule satisfies for every $s \in \{1,2\}$,
\begin{equation*}
    \begin{cases}
        p_\epsilon(1,s) = 1 - \epsilon + 2  (1-w(s)), \\
        p_\epsilon(2,s) = 2 - \epsilon +  2w(s).
    \end{cases}
\end{equation*}
We next show that this mechanism satisfies the constraints \eqref{eq:gen_ICR}. Let us define the interim utility given a type $v$ and a reported type $\hat{v}$ as, 
\begin{equation*}
   U(v;\hat v):= \mathbb{E}_{s}\left[v \cdot x_\epsilon(\hat v,s)-p_{\epsilon}(\hat v,s) \, \vert \, v \right].
\end{equation*}
By definition of our mechanism and from \eqref{E:moment}, we have that,
\[
  U(1;1)=\epsilon,\qquad 
  U(1;2)= -3 + \epsilon,
\]
\[
  U(2;2)=\epsilon,\qquad 
  U(2;1)= -1 + \epsilon.
\]
Hence, $U(1;1) > U(1;2)$ and $U(2;2) > U(2;1)$ which implies that the IC constraint holds. Furthermore, $U(1,1)$ and $U(2,2)$ are positive, which implies that the IR constraint also holds.

Finally, the expected revenue generated by this mechanism is, equal to 
\begin{equation*}
    \alpha \cdot (1-\epsilon) + (1-\alpha) \cdot (2 - \epsilon) = 2-\alpha - \epsilon = \mathbb{E}_{v \sim F}[v] - \epsilon.
\end{equation*}
\end{proof}

\section{Counter-example for \Cref{thm:main} with Non-regular Prior} \label{sec:apx_non_decomposition_example}
    Consider the distribution $F$ putting a $0.8$ weight on a truncated normal on $[0.5,0.52]$ with mean $0.51$ and std $0.05$, and a $0.2$ weight on the Uniform over $[0,1]$. We note that this distribution is not regular. In \Cref{fig:counter_example}, we compare the value of $\mathrm{IRON}_{[0,s]}[\gamma F]$ and the actual generalized ironed virtual value of  $F_{\gamma,s}$  computed using the method described in \Cref{sec:technical_work}, for $s = 0.53$ and $\gamma = 0.9$.
\begin{figure}[h!]
    \centering
    \begin{tikzpicture}[scale = 0.65]
    \begin{axis}[
        width=10cm,
        height=10cm,
        xmin=0.5,xmax=0.55,
        ymin=0.4,ymax=0.55,
        table/col sep=comma,
        xlabel={$v$},
        ylabel={virtual value},
        grid=both,
        legend pos=north west
    ]
    
    \addplot [black,  line width = 0.7 mm,unbounded coords=jump] table[x=x,y={virtual_value}] {Data/counter_example.csv};
    \addlegendentry{Ironed virtual value}

    \addplot [red,  line width = 0.7 mm,unbounded coords=jump] table[x=x,y={virtual_value_pre_s},restrict expr to domain={\thisrow{x}}{0:0.531}] {Data/counter_example.csv};
    \addlegendentry{$\mathrm{IRON}_{[0,s]}(\gamma F)$}
    
    \draw[blue, dashed, thick] (axis cs:0.53, 0.4) -- (axis cs:0.53, 0.5);
    \filldraw[blue] (axis cs:0.53,0.4) circle (2pt) node[anchor=south west]{\footnotesize $s=0.53$};

    \end{axis}
    \end{tikzpicture}
    \caption{\textbf{Numerical counter-example to the near-decomposition property without regularity.}}
    \label{fig:counter_example}
\end{figure}

\Cref{thm:main} claims that the generalized ironed virtual value of  $F_{\gamma,s}$ should be equal to $\mathrm{IRON}_{[0,s]}[\gamma F]$ for every $v < s$. However, \Cref{fig:counter_example} demonstrates that this statement does not hold in our example. This figure shows that when $F$ is not regular, the ironing procedure cannot independently be executed on the intervals $[0,s]$ and $[s,1]$ as described in \Cref{thm:main}. Intuitively, when $F$ is not regular, $s$ may lie in a region that already required ironing under the prior distribution $F$. Consequently, when considering the posterior distribution $F_{\gamma,s}$ the values before and after $s$  be taken into account to properly compute the ironed virtual value around $s$. 

\section{Proofs of Results in \Cref{sec:technical_work}}

\noindent \textbf{Notation.} For every $x \in \mathbb{R}$ and every function $f : \mathbb{R} \to \mathbb{R}$, we denote the left-limit of $f$ at $x$ by $f(x-) = \lim_{y \uparrow x} f(y)$.

\subsection{Proof of \Cref{prop:Myerson_and_Monteiro}}
\begin{proof}[\textbf{Proof of \Cref{prop:Myerson_and_Monteiro}}]
The proof follows from section 4.1 in \cite{monteiro2010optimal} by reapplying the same argument to the truncated ironed virtual value function.
\end{proof}

\subsection{Proofs of Lemmas in \Cref{sec:outline}}

\begin{proof}[\textbf{Proof of \Cref{lem:continuity}}]

\textit{Step 1:} We first show that for every $z \in I$, we have that $| \partial \Psi_{G}(z)| =1.$

    We note that for every $z,z' \in I$ we have that,
    \begin{equation*}
     \frac{H_{G}(z) - H_{G}(z')}{G(z)-G(z')} = \frac{\int_{z'}^z t dG(t)}{G(z) - G(z')} + \frac{\int_{z'}^z  (1-G(t))dt}{G(z) - G(z')}.
    \end{equation*}
    Given that $G$ admits a positive density on $I$, we have that $\lim_{z' \to z} \frac{H_{G}(z) - H_{G}(z')}{G(z)-G(z')} = z - \frac{1-G(z)}{g(z)}$ exists and is finite.

    Next, assume for the sake of contradiction that there exists $z \in I$ such that $| \partial \Psi_{G}(z)| > 1$, i.e. $\ell_{G}(z) < s_{G}(z)$. 

We argue that $H_{G}(z) = \Psi_{G}(z)$. 
Note that $G$ and $H_{G}$ are continuous at $z$, and $(vi)$ in \Cref{prop:monteiro} implies that $\Psi_{G}$ is also continuous at $z$.
Hence if, $H_{G}(z) > \Psi_{G}(z)$ we also have that $H_{G}(z-) > \Psi_{G}(z-)$, which implies by \Cref{prop:monteiro} property $(iv)$ and $(v)$ that there exists $z^* > z$ such that for every $x \in (z,z^*)$, we have that $s_{G}(z-) = s_{G}(z) = \ell_{G}(x)$  and $\ell_{G}(x) = \ell_{G}(z)$. Hence, $s_{G}(z) = \ell_{G}(z)$ which contradicts our initial assumption. Therefore,  
\begin{equation}
\label{eq:H_equal_Psi}
 H_{G}(z) = \Psi_{G}(z).
\end{equation}

Furthermore, for every $x \in [a,b]$, let
\begin{equation*}
    \rho(x) = \begin{cases}
        \Psi_{G}(z) + \ell_{G}(z) \cdot(G(x)-G(z)) \quad \text{for $x \in [a,z)$,}\\
        \Psi_{G}(z) + s_{G}(z) \cdot(G(x)-G(z)) \quad \text{for $x \in [z,b)$}.
    \end{cases}
\end{equation*}
By definition of the generalized sub-gradients in \eqref{eq:subgradient}, we have that $\rho(x) \leq \Psi_{G}(x)$. Furthermore, the definition of $\Psi_{G}(x)$ implies that $\Psi_{G}(x) \leq H_{G}(x)$ for every $x \in [a,b]$. 
Therefore, by using \eqref{eq:H_equal_Psi} we have established that, for every $x \in [a,b)$,
\begin{equation*}
    H_{G}(x) \geq \rho(x) = \begin{cases}
         H_{G}(z) + \ell_{G}(z) \cdot(G(x)-G(z)) \quad \text{for $x \in [a,z)$,}\\
         H_{G}(z) + s_{G}(z) \cdot(G(x)-G(z)) \quad \text{for $x \in [z,b)$}.
    \end{cases}
\end{equation*}
This implies that, $\frac{H_{G}(x) - H_{G}(z)}{G(x)-G(z)} \geq s_{G}(z)$ for every $x \in[z,b)$ and $\frac{H_{G}(x) - H_{G}(z)}{G(x)-G(z)} \leq \ell_{G}(z)$ for every $x \in [a,z)$. Then, by taking a limit over $x$ towards $z$, we obtain that,
\begin{equation*}
    \ell_{G}(z) \geq \lim_{x \to z} \frac{H_{G}(x) - H_{G}(z)}{G(x)-G(z)} \geq s_{G}(z).
\end{equation*}
Given that $\ell_{G}(z) <s_{G}(z)$ this leads to a contradiction. Therefore, for every $z \in I$, we have that $| \partial \Psi_{G}(z)| = 1$.

\textit{Step 2:} We next establish that $\ell_{G}$ is continuous on $I$.

We note that $g$ is positive and continuous on $I$. This implies that $G$ is increasing and continuous on $I$ and its inverse function $G^{-1}$ is well-defined in $G(I)$, increasing and continuous. 

Consider $\hat{\Psi}$ defined for every $z' \in G(I)$ as $\hat{\Psi}(z') = \Psi_{G}(G^{-1}(z'))$. We next show that $\hat{\Psi}$ is differentiable on $G(I)$. Let $z \in I$ and recall that $| \partial \Psi_{G}(z)| = 1$. For every $x \in G(I)$, we have that
    \begin{equation*}
         \frac{\hat{\Psi}(G(z)) - \hat{\Psi}(x)}{G(z)-x} = \frac{\Psi_{G}(z) - \Psi_{G}(G^{-1}(x))}{G(z)-G\left( G^{-1}(x) \right)}.
    \end{equation*}
By taking a limit as $x$ tends to $G(z)$ and by noting that $G^{-1}$ is continuous at $G(z)$ we obtain that,
    \begin{equation*}
        \lim_{x \to G(z)} \frac{\hat{\Psi}(G(z)) - \hat{\Psi}(x)}{G(z)-x} = \lim_{x \to G(z)} \frac{\Psi_{G}(z) - \Psi_{G}(G^{-1}(x))}{G(z)-G\left( G^{-1}(x) \right)} = \ell_{G}(z),
    \end{equation*}
where the equality follows from the fact that $\partial \Psi_{G}(z) = \{ \ell_{G}(z) \}$.  Therefore, $\hat{\Psi}$ is differentiable at $G(z)$ and its derivative is $\ell_{G}(z)$. 

We have just established that $\hat{\Psi}$ is a differentiable function on $G(I)$ and its derivative is equal to $\ell_{G} \circ G^{-1}$. This implies that $\hat{\Psi}$ is convex on $G(I)$ as $\ell_{G}$ and $G$ are non-decreasing. Hence, $\hat{\Psi}$ is a uni-variate differentiable function  which is convex. Therefore, it is continuously differentiable on $G(I)$ \cite[Corollary 25.5.1]{rockafellar2007convex}. We conclude that $\ell_{G} \circ G^{-1}$ is continuous on $G(I)$ and the continuity of $G$ on $I$ allows us to conclude by composition that $\ell_{G}$ is continuous on $I$.
\end{proof}

\begin{proof}[\textbf{Proof of \Cref{lem:inclusion_to_eq}}]
We first note that for every $x \in I$, the assumption that $\ell_{F}(x) \in \partial \Psi_{G}(x)$ implies that $\ell_{F}(x) \geq \ell_{G}(x)$. We next show the reverse inequality.

If $|\partial \Psi_{G}(x)| = 1$, we have that $\ell_{F}(x) \in \partial \Psi_{G}(x) =  \{ \ell_{G}(x) \} $ which implies that $\ell_{G}(x) = \ell_{F}(x)$.

Next, assume that $|\partial \Psi_{G}(x)| > 1$.
The item $(i)$ in \Cref{prop:monteiro} implies that there exist at most countably many points that satisfy this. Hence, there exists a sequence $(x_n)_{n \in \mathbb{N}} \in I^{\mathbb{N}}$ such that for every $n \in \mathbb{N}$, we have that $|\partial \Psi_{G}(x_n)| = 1$, and $\lim_{n \to \infty} x_n = x$.
We then note that,
\begin{equation*}
\ell_{G}(x) \stackrel{(a)}{=} \lim_{n \to \infty} \ell_{G}(x_n) \stackrel{(b)}{=} \lim_{n \to \infty} \ell_{F}(x_n) \stackrel{(a)}{=} \ell_{F}(x), 
\end{equation*}
where the equalities $(a)$ hold because $\ell_{F}$ and $\ell_{G}$ are continuous on $I$ and $(b)$ holds because $|\partial \Psi_{G}(x_n)| = 1$ and $x_n \in I$ for all $n \in \mathbb{N}$. 

We conclude that $\ell_{G}(x) = \ell_{F}(x)$ for every $x \in I$.
\end{proof}

\begin{proof}[\textbf{Proof of \Cref{lem:prop_mu}}]

\,

\noindent \textit{(i)} Let $y \leq s$, we first prove that $\mu_y$ is not increasing.

By replacing  $\aFs,\bFs$ with their expressions as a function of $\aF,\bF$, we can rewrite for every $y \leq s$  and $x \geq s$ that,
\begin{equation*}
\mu_{y}(x) = \gamma \cdot \aF + \gamma \cdot \bF \cdot F(y) - (1-\gamma) \cdot \bF - \gamma \cdot H_{F}(y) + (1-\gamma) \cdot y.
\end{equation*}
Next, we will differentiate this expression with respect to $x$. 
First note that,
\begin{equation*}
\frac{d \aF}{dx} \stackrel{(a)}{=} \frac{dH_{F}(x)}{dx} - \frac{d \bF \cdot F(x)}{dx} \stackrel{(b)}{=} - F(x) \frac{d \bF}{dx},
\end{equation*}
where $(a)$ follows from \Cref{lem:charac_PsiF} and $(b)$ holds because $\frac{dH_{F}(x)}{dx} = \bF \cdot f(x)$. In fact, since $F$ is regular and admits a positive density on its support, we have $\bF \cdot f(x) = \ell_F(x) \cdot f(x) = x \cdot f(x) - (1 - F(x))$. The statement then follows by noting that $H_F(x) = a -x\cdot (1-F(x))$.

Hence, we obtain that
\begin{equation*}
\frac{d \mu_{y}(x)}{dx} =  \frac{d \bF}{dx} \cdot \Big ( \gamma F(y) - \gamma F(x) - (1-\gamma) \Big) \leq 0,
\end{equation*}
where the last inequality holds because $F(y) \leq F(x)$ as $y \leq s \leq x$ and $\frac{d \bF}{dx} \geq 0$ because $\bF = \ell_{F}(x)$ which is non-decreasing by item $(ii)$ in \Cref{prop:monteiro}.

\noindent\textit{(ii)} Let $s \geq y > y'$ and let $x \geq s$. We have that,
\begin{align*}
\mu_{y}(x) - \mu_{y'}(x) &= \gamma \cdot \bF \cdot (F(y) - F(y')) - \gamma \cdot ( H_{F}(y) - H_{F}(y')) + (1-\gamma) \cdot (y-y')\\
&\stackrel{(a)}{=}  \gamma \cdot (F(y) - F(y')) \cdot \left( \bF - \frac{H_{F}(y) - H_{F}(y')}{F(y) - F(y')}  \right) + (1-\gamma) \cdot (y-y'),
\end{align*}
where $(a)$ holds because $F(y) > F(y')$ as $F$ has a positive density. Note that $(1-\gamma) \cdot (y-y') > 0$. Hence, to conclude, it is sufficient to show that
\begin{equation}
\label{eq:accroissement}
\bF - \frac{H_{F}(y) - H_{F}(y')}{F(y) - F(y')} \geq 0.
\end{equation}
Let $z = F(y)$ and $z' = F(y')$.
We note that $F$ is regular. Hence, $H_{F}(F^{-1}(\cdot))$ is convex. This implies that the mapping
$q \mapsto \frac{H_{F}(F^{-1}(z)) - H_{F}(F^{-1}(q))}{z-q}$
is non-decreasing. This implies that, 
\begin{equation*}
\frac{H_{F}(F^{-1}(z)) - H_{F}(F^{-1}(z'))}{z-z'} \leq \lim_{q \to z} \frac{H_{F}(F^{-1}(z)) - H_{F}(F^{-1}(q))}{z-q} = \frac{dH_{F}(F^{-1}(z))}{dz} = \bF[y]
\end{equation*}
Hence, we have that,
\begin{equation*}
\frac{H_{F}(y) - H_{F}(y')}{F(y) - F(y')}  \leq \bF[y] \stackrel{(a)}{\leq} \bF,
\end{equation*}
where $(a)$ holds because $\bF[\cdot]$ is non-decreasing by regularity of $F$ and $y \leq x$. Hence, we have proved that \eqref{eq:accroissement} holds, which concludes the proof of $(ii)$.

\noindent\textit{(iii)}
We first note that
\begin{align*}
    \mu_s(s) &= \gamma \cdot \aF[s] + \gamma \cdot \bF[s] \cdot F(s) - (1-\gamma) \cdot \bF[s] - \gamma \cdot H_{F}(s) + (1-\gamma) \cdot s \stackrel{(a)}{=} (1-\gamma) \cdot (s- \bF[s]) \stackrel{(b)}{>} 0, 
\end{align*}
where  $(a)$ follows from \Cref{lem:charac_PsiF} and $(b)$ holds because $\bF[s] = \ell_F(s) =  s - \frac{1-F(s)}{f(s)} < s$.

Furthermore,
\begin{align*}
    \mu_s(b) &= \gamma \cdot \aF[b] + \gamma \cdot \bF[b] \cdot F(s) - (1-\gamma) \cdot \bF[b] - \gamma \cdot H_{F}(s) + (1-\gamma) \cdot s\\
    &= \gamma \cdot \left( H_{F}(b) -  H_{F}(s) + \bF[b] \cdot F(s) - \bF[b] \cdot F(b) \right) + (1-\gamma) \cdot (s - \bF[b]) \\
    &= \gamma \cdot (F(b) - F(s)) \cdot \left( \frac{H_{F}(b) -  H_{F}(s)}{F(b)-F(s)} - \bF[b]  \right) + (1-\gamma) \cdot (s - \bF[b])\\
    &\stackrel{(a)}{\leq} (1-\gamma) \cdot (s - \bF[b]) \stackrel{(b)}{=} (1-\gamma) (s - b) \leq 0,
\end{align*}
where $(a)$ follows from the convexity of $H_F(F^{-1}(\cdot))$  and $(b)$ holds because $\bF[b] = \ell_F(b) =  b - \frac{1-F(b)}{f(b)} = b$ as $F(b) = 1$.

\end{proof}

\begin{proof}[\textbf{Proof of \Cref{prop:from_F_to_feasible_Fs}}]
The existence of $T$ follows from the intermediate value theorem applied to the function $\mu_s$ which is continuous, non-increasing (by property $(i)$ in \Cref{lem:prop_mu}) and satisfies $\mu_s(s) > 0$ and $\mu_s(b) \leq 0$ (by property $(iii)$ in \Cref{lem:prop_mu}).

Let $x \geq T$ and let $\bF = \ell_F(x)$. By \eqref{eq:subgrad_are_solutions}, there exists $\aF$ such that $(\aF,\bF)$ is optimal for Problem~\eqref{eq:gen_virtual_value} at the point $x$. We next show that the candidate solution $(\aFs,\bFs)$ defined in \eqref{eq:candidate} is feasible for Problem~\eqref{eq:F_gamma_after_s}.

The feasibility of $(\aFs,\bFs)$ for Problem~\eqref{eq:F_gamma_after_s} follows from \Cref{lem:feasible_post_T}.
We next show that $(\aFs,\bFs)$ is an optimal solution at $x$.

We note that the constraint \eqref{eq:constraint_post_s} evaluated at $x$ implies that any feasible $(\alpha,\beta)$ should satisfy
\begin{equation*}
\alpha + \beta \cdot (1-\gamma) + \beta \cdot \gamma \cdot F(x) \leq \gamma \cdot H_{F}(x).
\end{equation*}
Given that the LHS is equal to the objective of the problem, the value $\Psi_{F_{\gamma,s}}(x)$ of the problem is lower or equal to $\gamma \cdot H_{F}(x)$. We next show that $(\aFs,\bFs)$ achieves that value. Indeed, we remark that
\begin{align*}
\aFs + \bFs \cdot (1-\gamma) + \bFs \cdot \gamma \cdot F(x) = \gamma \cdot \left( \aF + \bF \cdot F(x) \right) = \gamma \cdot H_{F}(x),
\end{align*}
where the last equality follows from \Cref{lem:charac_PsiF}. 
This shows that $(\aFs,\bFs)$ is optimal for Problem \eqref{eq:F_gamma_after_s} at $x$. By \eqref{eq:subgrad_are_solutions}, this implies that $\ell_{F}(x) = \bF = \bFs   \in \partial \Psi_{F_{\gamma,s}}(x)$. Hence, $\ell_{F}(x) \geq \ell_{F_{\gamma,s}}(x)$.
\end{proof}

\section{Proof of \Cref{thm:main}}
\subsection{Proof of \Cref{thm:main}} \label{sec:apx_main_proof}

\begin{proof}[\textbf{Proof of \Cref{thm:main}}]~

\noindent  \textit{Step 1:}  We first characterize $\ell_{F_{\gamma,s}}(x)$ for $x \in [s,b]$.

Let $T$ be as defined in \Cref{prop:from_F_to_feasible_Fs}. 
\Cref{prop:from_F_to_feasible_Fs} implies that for every $x \in [T,b]$, we have that $\ell_{F}(x) \in \partial \Psi_{F_{\gamma,s}}(x)$. Furthermore, \Cref{lem:continuity} implies that $\ell_F$ and $\ell_{F_{\gamma,s}}$ are continuous on $[T,b]$ because they both admit a positive density on $[T,b]$ (since $T > s$). We conclude from \Cref{lem:inclusion_to_eq} that $\ell_{F_{\gamma,s}}(x) = \ell_{F}(x)$ for every $x \in [T,b]$.

We next prove that $\ell_{F_{\gamma,s}}(x) = \ell_{F}(T)$ for every $x \in [s,T]$. 

Note that, $F_{\gamma,s}(s) - F_{\gamma,s}(s-) = 1- \gamma > 0$. Hence, property $(iii)$ in \Cref{prop:monteiro} implies that
\begin{equation}
\label{eq:l_s_monteiro}
\ell_{F_{\gamma,s}}(s) = \frac{\Psi_{F_{\gamma,s}}(s)-\Psi_{F_{\gamma,s}}(s-)}{F_{\gamma,s}(s)-F_{\gamma,s}(s-)} = \frac{\Psi_{F_{\gamma,s}}(s)-\Psi_{F_{\gamma,s}}(s-)}{1-\gamma}.
\end{equation}
We next show that
\begin{equation}
\label{eq:bounds_psi_s}
    \Psi_{F_{\gamma,s}}(s-) \leq \gamma \cdot H_{F}(s) - (1-\gamma) \cdot s \quad \mbox{and} \quad  \Psi_{F_{\gamma,s}}(s) \geq \gamma \cdot H_{F}(s) - (1-\gamma) \cdot s + (1-\gamma) \cdot \ell_{F}(T).
\end{equation}

On the one hand, we note that for every $x < s$, \eqref{eq:constraint_pre_s} evaluated at $y = x$ implies that $\Psi_{F_{\gamma,s}}(x) \leq \gamma \cdot H_{F}(x) - (1-\gamma) \cdot x$. By taking the  left-limit to $s$ on both sides of the inequality and using the continuity of $H_{F}$ we obtain that $\Psi_{F_{\gamma,s}}(s-) \leq \gamma \cdot H_{F}(s) - (1-\gamma) \cdot s.$

On the other hand, let  $(\aFs[T],\bFs[T])$ be as defined in \eqref{eq:candidate}. \Cref{lem:feasible_post_T} implies that this vector is a feasible solution for Problem \eqref{eq:F_gamma_after_s}.  Therefore, 
\begin{align*}
\Psi_{F_{\gamma,s}}(s) &\stackrel{(a)}{\geq} \aFs[T] + \gamma \cdot \bFs[T] \cdot F(s) + (1-\gamma) \cdot \bFs[T]\\
&\stackrel{(b)}{=} \gamma \cdot H_{F}(s) - (1-\gamma) \cdot s + (1-\gamma) \cdot \bFs[T] \stackrel{(c)}{=}  \gamma \cdot H_{F}(s) - (1-\gamma) \cdot s + (1-\gamma) \cdot \ell_{F}(T),
\end{align*}
where $(a)$ holds by feasibility of $(\aFs[T],\bFs[T])$, $(b)$ holds because $\mu_s(T) = 0$ which implies that $\aFs[T] + \gamma \cdot \bFs[T] \cdot F(s) = \gamma \cdot H_{F}(s) - (1-\gamma)\cdot s$ and $(c)$ follows from the fact that by definition of $\bFs[T]$, we have that $\bFs[T] = \bF[T] = \ell_{F}(T)$.

By replacing in \eqref{eq:l_s_monteiro} the bounds derived for $\Psi_{F_{\gamma,s}}(s-) $ and $\Psi_{F_{\gamma,s}}(s)$ in \eqref{eq:bounds_psi_s}, we obtain that, $\ell_{F_{\gamma,s}}(s) \geq \ell_{F}(T).$ We then have that,
\begin{equation*}
    \ell_{F}(T) \stackrel{(a)}{=} \ell_{F_{\gamma,s}}(T) \stackrel{(b)}{\geq} \ell_{F_{\gamma,s}}(s) \geq \ell_{F}(T),
\end{equation*}
where $(a)$ has been established at the beginning of the proof, and $(b)$ follows from the monotonicity of $\ell_{F_{\gamma,s}}$ and because $T > s$. Hence, $\ell_{F_{\gamma,s}}$ is constant on $[s,T]$ equal to $\ell_{F_{\gamma,s}}(T)$.

\textit{Step 2:} 
Consider the following threshold:
\begin{equation}
\label{eq:def_T1}
    T_1 = \inf \{x \leq s \text{ s.t. } \ell_{F_{\gamma,s}}(x) = \ell_{F_{\gamma,s}}(s) \}. 
\end{equation}
Let us prove that for every $x < T_1$, we have that $\ell_{F_{\gamma,s}}(x) = \ell_{\gamma F}(x)$, where $\ell_{\gamma F}(x) = \inf \partial \Psi_{\gamma F}(x)$ and, for every $x$, we define
\begin{subequations}\label{eq:gammaF}
\begin{alignat}{2}
\Psi_{\gamma F}(x) = \; &\!\sup_{\alpha,\beta \in \mathbb{R}} &\;& \alpha +  \beta \cdot \gamma \cdot F(x) \\
&\text{s.t.} &      &  \alpha + \beta \cdot \gamma \cdot F(y) \leq \gamma \cdot H_{F}(y) - (1-\gamma) \cdot y \quad \forall y \leq s 
\end{alignat}
\end{subequations}
We  show in  \Cref{lem:relaxing_Psi_F_s} that for every $x \in [a,T_1)$, we have that $\Psi_{\gamma F}(x) = \Psi_{F_{\gamma,s}}(x)$. Furthermore, the feasible set of problem \eqref{eq:F_gamma_after_s} is included in the one of problem \eqref{eq:gammaF}, and both problems share the same objective function. Therefore, any optimal solution of \eqref{eq:F_gamma_after_s} is optimal for \eqref{eq:gammaF} which implies that $\partial \Psi_{F_{\gamma,s}}(x) \subset \partial \Psi_{\gamma F}(x)$ for all $x \in [a,T_1)$. In particular, we have $\ell_{F_{\gamma,s}}(x) \in \partial \Psi_{\gamma F}(x)$ for all $x \in [a,T_1)$. 

Moreover, for every $x \in [a,T_1)$, $F_{\gamma,s}(x) = \gamma F(x)$, and the distribution $\gamma F$ has a positive density $\gamma f$. Hence, \Cref{lem:continuity} implies that $\ell_{F_{\gamma,s}}$ and $\ell_{\gamma F}$ are continuous on $[a,T_1)$. We conclude from \Cref{lem:inclusion_to_eq} that for every $x \in [a,T_1)$, we have that $\ell_{F_{\gamma,s}}(x) = \ell_{\gamma F}(x)$, and \Cref{prop:Myerson_and_Monteiro} implies that, $\ell_{F_{\gamma,s}}(x) = \mathrm{IRON}_{[a,s]}[\gamma F](x)$ for every $x \in [a,T_1)$.
 
\textit{Step 3:} To complete our characterization we show that the threshold defined in \eqref{eq:def_T1} satisfies $T_1 = s$. 

\Cref{lem:continuity}  implies that $\ell_{F_{\gamma,s}}$ is continuous on $[a,s)$, as $F_{\gamma,s}$ has a positive density on $[a,s)$.
Next, let us prove that $\lim_{x \uparrow T_1} \ell_{F_{\gamma,s}}(x) < \lim_{x \downarrow T_1} \ell_{F_{\gamma,s}}(x)$.

For every $x < T_1$, we showed in step 2 that $\ell_{F_{\gamma,s}}(x) = \mathrm{IRON}_{[a,s]}[\gamma F](x)$.  \Cref{lem:iron_lower_virtual} implies that for every $x \in [a,T_1)$, we have that $\mathrm{IRON}_{[a,s]}[\gamma F](x) \leq \sup_{v \in [a,s]} \varphi_{\gamma F}(v)$. Furthermore, we note that $\varphi_{\gamma F}(v) = v - \frac{1/\gamma-F(v)}{f(v)}$. Hence, for every $v \in [a,s)$,
\begin{equation*}
    \varphi_{\gamma F}(v) = v - \frac{1/\gamma-F(v)}{f(v)} = v - \frac{1-F(v)}{f(v)} - \frac{\frac{1}{\gamma}-1}{f(v)}\stackrel{(a)}{=} \ell_{F}(v) -  \frac{\frac{1}{\gamma}-1}{f(v)} < \ell_{F}(v),
\end{equation*}
where $(a)$ follows from the regularity of $F$. Therefore,
\begin{equation*}
\lim_{x \uparrow T_1} \ell_{F_{\gamma,s}}(x) \leq \sup_{v \in [0,s]} \varphi_{\gamma F}(v) <  \sup_{v \in [0,s]} \ell_{F}(v) = \ell_{F}(s),
\end{equation*}
were the last equality holds because $\ell_F$ is non-decreasing.

On the other hand, we have by definition of $T_1$ (see \eqref{eq:def_T1}) that $\lim_{x \downarrow T_1} \ell_{F_{\gamma,s}}(x) = \ell_{F_{\gamma,s}}(s) = \ell_{F}(T)$. Given that $s \leq T$ and $\ell_F$ is non-decreasing, we conclude that $\lim_{x \uparrow T_1} \ell_{F_{\gamma,s}}(x) < \ell_{F}(s) \leq \ell_{F}(T) = \lim_{x \downarrow T_1} \ell_{F_{\gamma,s}}(x)$. This implies that $\ell_{F_{\gamma,s}}$ is not continuous at $T_1$ and given that $\ell_{F_{\gamma,s}}$ is continuous on $[a,s)$ and $T_1 \leq s$, we conclude that $T_1 =s$.

Finally, we note that the structure of the revenue-maximizing auction follows from property $(vii)$ in \Cref{prop:monteiro}.
\end{proof}

\subsection{Auxiliary Results and Proofs}

\begin{lemma}
\label{lem:relaxing_Psi_F_s}
Let $T_1$ be as defined in \eqref{eq:def_T1}. Then, for every $x < T_1$, $\Psi_{\gamma F}(x) = \Psi_{F_{\gamma,s}}(x)$.   
\end{lemma}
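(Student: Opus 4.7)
The easy direction $\Psi_{\gamma F}(x)\geq \Psi_{F_{\gamma,s}}(x)$ for $x<s$ is immediate: dropping the post-$s$ constraint \eqref{eq:constraint_post_s} enlarges the feasible set, and the two objectives coincide on $[a,s)$ because the indicator $\mathbbm{1}\{x\geq s\}$ vanishes. The whole proof thus reduces to producing an optimal pair $(\alpha^{\star},\beta^{\star})$ for $\Psi_{\gamma F}(x)$ that is also feasible for $\Psi_{F_{\gamma,s}}(x)$. I will pick $\beta^{\star}=\ell_{\gamma F}(x)$ together with the matching $\alpha^{\star}$ whose existence is guaranteed by \eqref{eq:subgrad_are_solutions}; the pre-$s$ constraints of $\Psi_{F_{\gamma,s}}$ are then automatic since they coincide with those of $\Psi_{\gamma F}$.

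The key input is the strict bound $\beta^{\star}<\ell_F(s)$. By \Cref{prop:Myerson_and_Monteiro}, $\ell_{\gamma F}(x)=\mathrm{IRON}_{[a,s]}[\gamma F](x)$, and the standard ironing upper bound (\Cref{lem:iron_lower_virtual}) gives $\beta^{\star}\leq \sup_{v\in[a,s]}\varphi_{\gamma F}(v)$. The identity $\varphi_{\gamma F}(v)=\varphi_F(v)-(1-\gamma)/(\gamma f(v))$, combined with the regularity of $F$ and the uniform positivity of the continuous density $f$ on the compact interval $[a,s]$, yields
\begin{equation*}
\sup_{v\in[a,s]}\varphi_{\gamma F}(v)\;\leq\; \varphi_F(s)-\frac{1-\gamma}{\gamma\,\max_{[a,s]}f}\;<\;\varphi_F(s)\;=\;\ell_F(s)\;\leq\;s.
\end{equation*}

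For the post-$s$ constraints, define $\phi(y)=\gamma H_F(y)-\alpha^{\star}-\beta^{\star}(1-\gamma)-\beta^{\star}\gamma F(y)$ on $[s,b]$. Evaluating at $y=s$ and invoking the pre-$s$ constraint of $\Psi_{\gamma F}$ at $y=s$, namely $\alpha^{\star}+\beta^{\star}\gamma F(s)\leq \gamma H_F(s)-(1-\gamma)s$, gives $\phi(s)\geq(1-\gamma)(s-\beta^{\star})>0$. Using $H_F'(y)=f(y)\varphi_F(y)$, I compute $\phi'(y)=\gamma f(y)(\varphi_F(y)-\beta^{\star})$, and regularity of $F$ gives $\varphi_F(y)\geq \varphi_F(s)=\ell_F(s)>\beta^{\star}$ for all $y\geq s$, so $\phi$ is strictly increasing on $[s,b]$ starting from a positive value and stays positive throughout. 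This verifies \eqref{eq:constraint_post_s}, so $(\alpha^{\star},\beta^{\star})$ is feasible for $\Psi_{F_{\gamma,s}}(x)$, whence $\Psi_{F_{\gamma,s}}(x)\geq \alpha^{\star}+\beta^{\star}\gamma F(x)=\Psi_{\gamma F}(x)$.

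The delicate step is the strict inequality $\beta^{\star}<\ell_F(s)$, which hinges both on the ironing upper bound and on the uniform positivity of $f$ on $[a,s]$ that keeps the wedge $\varphi_F-\varphi_{\gamma F}=(1-\gamma)/(\gamma f)$ away from zero; everything else is essentially bookkeeping. Notably, the argument goes through for every $x\in[a,s)$ rather than only $x<T_1$, so the restriction to $x<T_1$ in the statement is not exploited here; its role will appear downstream in Step~3 of the main proof, where it is used to identify $T_1$ with $s$.
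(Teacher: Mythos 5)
Your proof is correct, and it takes a genuinely different route from the paper's. The paper argues by contradiction: assuming $\Psi_{\gamma F}(x) > \Psi_{F_{\gamma,s}}(x)$, it first shows (via an interpolation argument between the two optimizers) that the optimal solution for $\Psi_{F_{\gamma,s}}(x)$ must make some post-$s$ constraint tight at a point $\tilde y \geq s$, then invokes \Cref{lem:disjoint_subg} (which rests on \Cref{prop:monteiro}(ii)) to force $\ell_{F_{\gamma,s}}$ to be constant and equal to $\ell_{F_{\gamma,s}}(s)$ on $(x,s]$, contradicting $x < T_1$. Your proof is instead constructive: you take the optimal pair for the relaxation $\Psi_{\gamma F}(x)$ with $\beta^\star = \ell_{\gamma F}(x)$ and verify directly that it remains feasible for the full problem, using the slack function $\phi$ with $\phi(s) \geq (1-\gamma)(s-\beta^\star) > 0$ and $\phi'(y) = \gamma f(y)(\varphi_F(y) - \beta^\star) \geq 0$ on $[s,b]$ once $\beta^\star < \varphi_F(s) = \ell_F(s)$. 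The key quantitative input, $\ell_{\gamma F}(x) \leq \sup_{[a,s]}\varphi_{\gamma F} < \ell_F(s)$, is exactly the estimate the paper deploys later in Step~3 of the proof of \Cref{thm:main}; you have pulled it forward into the lemma itself.

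What each approach buys: the paper's proof makes do with the hypothesis $x < T_1$ and defers the identification $T_1 = s$ to a separate step, at the cost of a somewhat more involved contradiction argument and reliance on the subgradient-disjointness machinery. Your proof yields the stronger statement that $\Psi_{\gamma F}(x) = \Psi_{F_{\gamma,s}}(x)$ for \emph{all} $x \in [a,s)$, never invokes $T_1$, and would let one read off $T_1 = s$ immediately (since one then gets $\ell_{F_{\gamma,s}}(x) = \ell_{\gamma F}(x) < \ell_F(s) \leq \ell_{F_{\gamma,s}}(s)$ for all $x<s$), rendering Step~3 of the main proof essentially redundant. One small wording nit: when justifying the strict gap, what you actually need is that $f$ is bounded \emph{above} on $[a,s]$ so that the wedge $(1-\gamma)/(\gamma f)$ is bounded \emph{below} away from zero; continuity on the compact $[a,s]$ gives this, but ``positivity'' of $f$ is not the operative property there. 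Otherwise the argument is sound; you should just flag that \Cref{prop:Myerson_and_Monteiro} and \Cref{lem:iron_lower_virtual} are being applied to the subprobability $\gamma F$, which the paper does as well.
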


\begin{proof}[\textbf{Proof of \Cref{lem:relaxing_Psi_F_s}}]
First, remark that $\Psi_{\gamma F}(x) \geq \Psi_{F_{\gamma,s}}(x)$, as \eqref{eq:gammaF} is a relaxation of \eqref{eq:F_gamma_after_s}, in which we removed the constraints \eqref{eq:constraint_post_s}.

Let us assume for the sake of contradiction that $\Psi_{\gamma F}(x) > \Psi_{F_{\gamma,s}}(x)$. 

Let $(\hat{\alpha}(x),\hat{\beta}(x))$ (resp. $(\alpha'_{\gamma F}(x),\beta'_{\gamma F}(x))$) be an optimal solution for Problem \eqref{eq:F_gamma_after_s} (resp.  \eqref{eq:gammaF}). We note that an optimal solution is achieved for each problem as we are optimizing linear functions and the value of the problem is finite.

\textit{Step 1:} We will show that there exists $\tilde{y} \geq s$ such that,
\begin{equation}
\label{eq:constraint_achieved}
    \hat{\alpha}(x) + \hat{\beta}(x) \cdot \gamma \cdot F(\tilde{y}) + (1-\gamma) \cdot  \hat{\beta}(x) = \gamma \cdot H_{F}(\tilde{y}).
\end{equation}
Assume for the sake of contradiction that there does not exist any $\tilde{y} \geq s$ such that \eqref{eq:constraint_achieved} holds and consider,
\begin{align*}
        S_1 =& \sup_{y \in [s,b]} \hat{\alpha}(x) + \hat{\beta}(x) \cdot \gamma \cdot F(y) + (1-\gamma) \cdot  \hat{\beta}(x) - \gamma \cdot H_{F}(y),\\
        S_2 =& \sup_{y \in [s,b]} \alpha'_{\gamma F}(x) + \beta'_{\gamma F}(x) \cdot \gamma \cdot F(y) + (1-\gamma) \cdot  \beta'_{\gamma F}(x) - \gamma \cdot H_{F}(y).
\end{align*}
Note that both $S_1$ and $S_2$ are finite and achieved as the functions are continuous on a compact segment. Furthermore, $S_1 < 0$, otherwise \eqref{eq:constraint_achieved} would hold.

For every $\lambda \in [0,1]$, let $(\alpha_\lambda(x),\beta_\lambda(x)) = \lambda \cdot (\hat{\alpha}(x),\hat{\beta}(x)) + (1-\lambda) \cdot (\alpha'_{\gamma F}(x),\beta'_{\gamma F}(x)).$ We will choose $\lambda \in (0,1)$ such that $(\alpha_\lambda(x),\beta_\lambda(x))$ is feasible for Problem \eqref{eq:F_gamma_after_s} and achieves an objective strictly greater than the one achieved by $(\hat{\alpha}(x),\hat{\beta}(x))$. 

Note that for every $\lambda \in [0,1]$ we have that the constraints \eqref{eq:constraint_pre_s} are satisfied for all $y < s$ as both $(\hat{\alpha}(x),\hat{\beta}(x))$ and $(\alpha'_{\gamma F}(x),\beta'_{\gamma F}(x))$ satisfy these constraints and any convex combination of feasible solution is still feasible for these constraints.
Furthermore, $(\alpha_\lambda(x),\beta_\lambda(x))$ satisfies the constraints \eqref{eq:constraint_post_s} for all $y \geq s$ if and only if,
\begin{equation*}
    \sup_{y \in [s,b]} \alpha_\lambda(x) + \beta_\lambda(x) \cdot \gamma \cdot F(y) + (1-\gamma) \cdot  \beta_\lambda(x) - \gamma \cdot H_{F}(y) \leq 0.
\end{equation*}
By construction of $(\alpha_\lambda(x),\beta_\lambda(x))$, we have that
\begin{equation*}
    \sup_{y \in [s,b]} \alpha_\lambda(x) + \beta_\lambda(x) \cdot \gamma \cdot F(y) + (1-\gamma) \cdot  \beta_\lambda(x) - \gamma \cdot H_{F}(y) \leq \lambda \cdot S_1 + (1-\lambda) \cdot S_2.
\end{equation*}
Given that $S_1 < 0$, there exists $\lambda > 0$ such that $\lambda \cdot S_1 + (1-\lambda) \cdot S_2 \leq 0.$ In what follows we fix such $\lambda$. We obtain that $(\alpha_\lambda(x),\beta_\lambda(x))$ is feasible for the constraints \eqref{eq:constraint_post_s} for all $y \geq s$ which implies that $(\alpha_\lambda(x),\beta_\lambda(x))$ is feasible for \eqref{eq:F_gamma_after_s}. 

Finally, we note that the objective obtained with the solution $(\alpha_\lambda(x),\beta_\lambda(x))$ satisfies,
\begin{align*}
    \alpha_\lambda(x) + \beta_\lambda(x) \cdot \gamma \cdot F(x) 
    = \lambda \Psi_{F_{\gamma,s}}(x) + (1-\lambda) \Psi_{\gamma F}(x)
    \stackrel{(a)}{>} \Psi_{F_{\gamma,s}}(x) = \hat{\alpha}(x) + \hat{\beta}(x) \cdot \gamma \cdot F(x),
\end{align*}
where $(a)$ holds because $\Psi_{F_{\gamma,s}}(x) < \Psi_{\gamma F}(x)$. This contradicts the optimality of $(\hat{\alpha}(x),\hat{\beta}(x))$ for Problem \eqref{eq:F_gamma_after_s}. 

Therefore, $(\hat{\alpha}(x),\hat{\beta}(x))$ must satisfy \eqref{eq:constraint_achieved} for some $\tilde{y} \geq s$. 

\textit{Step 2:} We next show that for every $z \in (x,s]$, we have that  $\ell_{F_{\gamma,s}}(z) = \ell_{F_{\gamma,s}}(s)$. 
Let $\tilde{y} \geq s$ be such that $(\hat{\alpha}(x),\hat{\beta}(x))$ satisfies \eqref{eq:constraint_achieved}. We next show that this implies that $\hat{\beta} (x) \in \partial \Psi_{F_{\gamma,s}}(\tilde{y})$. 
Indeed, $(\hat{\alpha}(x),\hat{\beta}(x))$ is feasible for Problem \eqref{eq:F_gamma_after_s} at $\tilde{y}$ and it is optimal, because $\Psi_{F_{\gamma,s}}(\tilde{y}) \leq \gamma H_{F}(\tilde{y})$ and \eqref{eq:constraint_achieved} implies that $(\hat{\alpha}(x),\hat{\beta}(x))$ achieves this value. Hence, $\beta^*(x) \in \partial \Psi_{F_{\gamma,s}}(\tilde{y})$. Consequently, we have that $\partial \Psi_{F_{\gamma,s}}(\tilde{y}) \cap \partial \Psi_{F_{\gamma,s}}(x) \neq \emptyset$.

We conclude that for every $x' \in (x,s]$ we have that
\begin{equation*}
    \ell_{F_{\gamma,s}}(s) \stackrel{(a)}{\leq}  \ell_{F_{\gamma,s}}(\tilde{y}) \stackrel{(b)}{=} \ell_{F_{\gamma,s}}(x') \stackrel{(a)}{\leq} \ell_{F_{\gamma,s}}(s),
\end{equation*}
where $(a)$ follows from the monotonicity of the $\ell_{F_{\gamma,s}}$, and the fact that $x \leq s \leq \tilde{y}.$ Furthermore $(b)$ holds by \Cref{lem:disjoint_subg}.

Therefore, for every $x' \in (x,s]$, we have that $\ell_{F_{\gamma,s}}(x') = \ell_{F_{\gamma,s}}(s)$.

In particular, there exists $z \in (x,T_1)$ such that $\ell_{F_{\gamma,s}}(z) = \ell_{F_{\gamma,s}}(s)$. This contradicts the definition of $T_1$. Therefore, $\Psi_{\gamma F}(x) = \Psi_{F_{\gamma,s}}(x)$ for every $x \in [0,T_1)$.

\end{proof}

\begin{lemma}
\label{lem:F_and_H}
For every distribution $F$, any $\gamma \in (0,1)$ and any $s$ in the support of $F$, we have for every $x$ that,
\begin{equation*}
F_{\gamma,s}(x) = \begin{cases}
\gamma \cdot F(x) \quad \text{if $x < s$}\\
\gamma \cdot F(x) + (1-\gamma) \quad \text{if $x \geq s$}
\end{cases} 
\quad 
\mbox{and} 
\quad 
H_{F_{\gamma,s}}(x) = \begin{cases}
\gamma \cdot H_{F}(x) - (1-\gamma) \cdot x  \quad \text{if $x < s$}\\
\gamma \cdot H_{F}(x) \quad \text{if $x \geq s$}.
\end{cases}
\end{equation*}
\end{lemma}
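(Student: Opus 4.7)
The plan is to prove both formulas by direct computation from the definitions, splitting on whether $x < s$ or $x \geq s$.

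For the formula for $F_{\gamma,s}$, I would integrate the posterior density given in Eq.~\eqref{eq:density-f}, namely $f_{\gamma,s}(v) = \gamma \cdot f(v) + (1-\gamma)\cdot \delta_{s}(v)$. The continuous piece contributes $\gamma \cdot F(x)$ to $\int_{a}^{x} f_{\gamma,s}(v)\,dv$ for every $x$, and the Dirac mass at $s$ contributes exactly $(1-\gamma)$ whenever $x \geq s$ and nothing otherwise. This immediately gives the claimed two-case expression for $F_{\gamma,s}$ (which also reconciles the typo in Eq.~\eqref{eq:cumulative-F} where $f_i$ should read $F_i$).

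For $H_{F_{\gamma,s}}$, I would expand the definition $H_{G}(x) = \int_{a}^{x} t\,dG(t) - \int_{a}^{x}(1-G(t))\,dt$ with $G = F_{\gamma,s}$ and handle the two terms separately. For the Stieltjes integral, the absolutely continuous component yields $\gamma\int_{a}^{x} t\,dF(t)$ for all $x$, while the atom at $s$ contributes $(1-\gamma)\cdot s$ precisely when $x \geq s$. For $\int_{a}^{x}(1-F_{\gamma,s}(t))\,dt$, I would substitute the piecewise formula for $F_{\gamma,s}$ just proved: on the region $t < s$ the integrand is $1 - \gamma F(t) = \gamma(1-F(t)) + (1-\gamma)$, while on $t \geq s$ it is $\gamma(1-F(t))$. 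Integrating, this produces $\gamma\int_{a}^{x}(1-F(t))\,dt + (1-\gamma)\cdot \min(x,s)$. Subtracting and rearranging using the definition of $H_F$, I obtain $H_{F_{\gamma,s}}(x) = \gamma H_F(x) - (1-\gamma)\cdot x$ for $x < s$ and $H_{F_{\gamma,s}}(x) = \gamma H_F(x) - (1-\gamma)\cdot s + (1-\gamma)\cdot s = \gamma H_F(x)$ for $x \geq s$ (modulo the additive constant $(1-\gamma)\cdot a$, which is absorbed by taking $a=0$ as is standard, and which in any case does not affect the generalized convex hull $\Psi$ or its sub-differentials).

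There is no real obstacle here: the argument is a short bookkeeping computation. The only care needed is correctly accounting for the atom at $s$ in the Stieltjes integral and splitting the Lebesgue integral at $s$ when $x \geq s$; once both pieces are written in terms of $F$ and $H_F$, the two cases drop out algebraically.
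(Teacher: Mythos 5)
Your proposal is correct and follows essentially the same route as the paper: read off $F_{\gamma,s}$ from the posterior density in Eq.~\eqref{eq:density-f} (the paper simply cites its Eq.~\eqref{eq:cumulative-F}), then expand $H_{F_{\gamma,s}}(x)=\int_a^x t\,dF_{\gamma,s}(t)-\int_a^x(1-F_{\gamma,s}(t))\,dt$, split at $s$ to track the atom and the piecewise form of $F_{\gamma,s}$, and simplify to $\gamma H_F(x)-(1-\gamma)x$ or $\gamma H_F(x)$ according to the case. Your remark that the stated formula implicitly normalizes $a=0$ (or, equivalently, that the extra constant $(1-\gamma)a$ is harmless because it cancels in $\Psi$ and its sub-differentials) is a small but valid point of care that the paper leaves implicit.
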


\begin{proof}[\textbf{Proof of \Cref{lem:F_and_H}}]
We characterized $F_{\gamma,s}$ in \eqref{eq:cumulative-F}. For $x < s$, we have that,
\begin{equation*}
    H_{F_{\gamma,s}}(x) = \int_a^x t d F_{\gamma,s}(t) - \int_a^x (1-F_{\gamma,s}(t))dt - a = \gamma \int_a^x t d F(t) - \int_a^x (1-\gamma F(t))dt - a = \gamma H_{F}(x) - (1-\gamma) \cdot x. 
\end{equation*}
Moreover, for $x \geq s$, we have that,
\begin{align*}
    H_{F_{\gamma,s}}(x) &= \int_a^x t d F_{\gamma,s}(t) - \int_a^x (1-F_{\gamma,s}(t))dt - a\\
    &= (1-\gamma) \cdot s + \gamma \int_a^x t d F(t) - \int_a^x (1-\gamma F(t)) dt  + (1-\gamma) \cdot (x-s) - a\\
    &= \gamma \cdot H_{F}(x).
\end{align*}

\end{proof}

\begin{lemma}
\label{lem:charac_PsiF}
For every $x$, let $(\aF,\bF)$ be an optimal solution of \eqref{eq:gen_virtual_value}, then
\begin{equation*}	
\aF + \bF \cdot F(x) = H_{F}(x).
\end{equation*}
\end{lemma}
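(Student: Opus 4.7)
The plan is to prove the stronger statement $\Psi_F(x) = H_F(x)$ under the regularity assumption on $F$, from which the lemma follows immediately since any optimal $(\aF, \bF)$ for \eqref{eq:gen_virtual_value} must achieve the optimal value $\Psi_F(x)$.

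First I would note that setting $y = x$ in the constraint of \eqref{eq:gen_virtual_value} yields $\aF + \bF \cdot F(x) \leq H_F(x)$ for free, so the entire content of the lemma is the reverse inequality $\Psi_F(x) \geq H_F(x)$.

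Next I would pass to quantile space and set $\tilde{H}(q) = H_F(F^{-1}(q))$ for $q \in [0,1]$. A short differentiation of the definition of $H_F$ gives $H_F'(v) = v f(v) - (1 - F(v)) = f(v)\,\varphi_F(v)$, so by the chain rule $\tilde{H}'(q) = \varphi_F(F^{-1}(q))$, which is non-decreasing because $\varphi_F$ is non-decreasing by \Cref{ass:regular} and $F^{-1}$ is increasing. Hence $\tilde{H}$ is convex on $[0,1]$.

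Then I would manufacture an explicit feasible solution achieving the value $H_F(x)$. Fixing $q_0 = F(x)$, convexity of $\tilde{H}$ guarantees the existence of a subgradient $\beta_0 \in \partial \tilde{H}(q_0)$. Setting $\alpha_0 = H_F(x) - \beta_0 \cdot F(x)$, the subgradient inequality $\tilde{H}(q) \geq \tilde{H}(q_0) + \beta_0 (q - q_0)$, evaluated at $q = F(y)$, rewrites as
\begin{equation*}
\alpha_0 + \beta_0 \cdot F(y) \leq H_F(y) \quad \text{for every } y \in [a,b],
\end{equation*}
so $(\alpha_0, \beta_0)$ is feasible for \eqref{eq:gen_virtual_value}, and by construction its objective at $x$ equals $H_F(x)$. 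Combined with the first step this gives $\Psi_F(x) = H_F(x)$, so the optimal $(\aF, \bF)$ must satisfy $\aF + \bF \cdot F(x) = H_F(x)$.

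The only non-cosmetic step is the observation that regularity of $F$ is precisely the hypothesis needed to make $\tilde{H}$ convex in quantile space; once that is in hand, the rest is standard convex-analysis bookkeeping and I expect no real obstacle.
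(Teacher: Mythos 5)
Your proof is correct and follows essentially the same route as the paper's: both establish $\Psi_F(x) = H_F(x)$ by observing that under Assumption~\ref{ass:regular} the map $q \mapsto H_F(F^{-1}(q))$ is convex in quantile space, so the generalized convex hull $\Psi_F$ coincides with $H_F$. The one small difference is that the paper invokes the identification of $\Psi_F \circ F^{-1}$ with the usual convex envelope of $H_F \circ F^{-1}$ (a fact borrowed from \citet{monteiro2010optimal}), whereas you bypass that identification by explicitly exhibiting a feasible $(\alpha_0,\beta_0)$ built from a subgradient of $\tilde H$ at $F(x)$, which makes the argument a touch more self-contained but does not change the underlying idea.
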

\begin{proof}[Proof of \Cref{lem:charac_PsiF}]
We note that by optimality of $(\aF,\bF)$, we have that $\Psi(x) = \aF(x) + \bF(x) \cdot F(x).$ Furthermore as $F$ has a positive density it is strictly increasing, and its inverse function $F^{-1}$ is well-defined. The function defined for every $z \in [0,1]$ as $z \mapsto \Psi(F^{-1}(z))$ then corresponds to the convex envelope of $z \mapsto H_{F}(F^{-1}(z))$. The latter is convex because $F$ is regular. Therefore, $\Psi(F^{-1}(z)) = H_{F}(F^{-1}(z))$ for every $z \in [0,1]$. By evaluating this equality for $z = F(x)$, we obtain that $\aF(x) + \bF(x) \cdot F(x) = \Psi(x) = H_{F}(x)$.
\end{proof}

\begin{lemma}	
\label{lem:feasible_post_T}
Let $T$ such that, $\mu_s(T) = 0$. Then, for every $x \geq T$, the vector $(\aFs,\bFs)$ is feasible for Problem \eqref{eq:F_gamma_after_s}.
\end{lemma}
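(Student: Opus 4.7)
The plan is to verify the two families of constraints of Problem \eqref{eq:F_gamma_after_s} separately, exploiting that $(\aF,\bF)$ is by construction feasible (and optimal) for Problem \eqref{eq:gen_virtual_value} and that \Cref{lem:prop_mu} tightly controls the function $\mu_y(x)$.

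For the post-signal constraints \eqref{eq:constraint_post_s}, the argument is a direct substitution. Plugging the definition \eqref{eq:candidate} of $(\aFs,\bFs)$ into the left-hand side, the $(1-\gamma) \bF$ terms cancel and one gets
\begin{equation*}
\aFs + \bFs (1-\gamma) + \bFs \gamma F(y) \;=\; \gamma\bigl(\aF + \bF F(y)\bigr).
\end{equation*}
Since $(\aF,\bF)$ is feasible for \eqref{eq:gen_virtual_value}, we have $\aF + \bF F(y) \leq H_F(y)$ for every $y \in [a,b]$, and in particular for every $y \geq s$. Multiplying by $\gamma > 0$ gives exactly the desired inequality $\aFs + \bFs(1-\gamma) + \bFs \gamma F(y) \leq \gamma H_F(y)$, so \eqref{eq:constraint_post_s} holds for all $y \geq s$.

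For the pre-signal constraints \eqref{eq:constraint_pre_s}, recall that they are equivalent to $\mu_y(x) \leq 0$, by the very definition of $\mu_y$. The hypothesis $\mu_s(T) = 0$ combined with property $(i)$ of \Cref{lem:prop_mu} (monotonicity of $\mu_s$ in $x$) implies that $\mu_s(x) \leq 0$ for every $x \geq T$. Property $(ii)$ of \Cref{lem:prop_mu} then gives that $\mu_y(x) < \mu_s(x) \leq 0$ for every $y < s$ and every $x \geq T$. Hence the constraint \eqref{eq:constraint_pre_s} is satisfied at every $y < s$, which completes the verification of feasibility.

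There is essentially no obstacle in this argument: the hard work has been front-loaded into \Cref{lem:prop_mu} (in particular, the monotonicities of $\mu_s$ in $x$ and of $\mu_y$ in $y$) and into the observation that $(\aF,\bF)$ is itself feasible for \eqref{eq:gen_virtual_value}. The only care needed is to check that the algebraic rearrangement producing the factor $\gamma$ on the post-signal side is correct, and that the strict inequality in $(ii)$ is handled properly on the half-open interval $y < s$.
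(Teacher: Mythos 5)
Your proof is correct and follows essentially the same route as the paper's: a direct algebraic cancellation for the post-signal constraints, plus the chain $\mu_y(x) < \mu_s(x) \leq \mu_s(T) = 0$ via properties (ii) and (i) of \Cref{lem:prop_mu} for the pre-signal constraints. The only cosmetic difference is the order in which you invoke monotonicity in $x$ versus monotonicity in $y$, which is immaterial.
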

\begin{proof}[\textbf{Proof of \Cref{lem:feasible_post_T}}]

Let $x \geq T$ and let $\bF = \ell_F(x)$. By \eqref{eq:subgrad_are_solutions}, there exists $\aF$ such that $(\aF,\bF)$ is optimal for Problem~\eqref{eq:gen_virtual_value} at the point $x$. We next show that the candidate solution $(\aFs,\bFs)$ defined in \eqref{eq:candidate} is feasible for Problem~\eqref{eq:F_gamma_after_s}.

Let $y \geq s$. We note that,
\begin{align*}
\aFs + (1-\gamma) \cdot \bFs + \bFs \cdot \gamma \cdot F(y) \leq \gamma \cdot H_{F}(y) &\iff \gamma \cdot \left(  \aF + \bF \cdot F(y) \right) \leq \gamma H_{F}(y) \\
&\iff \aF + \bF \cdot F(y)  \leq H_{F}(y).
\end{align*}
The last inequality holds because $(\aF,\bF)$ is feasible for \eqref{eq:gen_virtual_value}. Hence, the first inequality holds. Which implies that $(\aFs,\bFs)$ satisfies the constraint \eqref{eq:constraint_post_s}.

Furthermore, fix $y \leq s$. By definition of $\mu_y$, we have that $(\aFs,\bFs)$ satisfies \eqref{eq:constraint_pre_s} if and only if, $\mu_y(x) \leq 0.$ This inequality holds because,
\begin{equation*}
\mu_y(x) \stackrel{(a)}{\leq} \mu_s(x) \stackrel{(b)}{\leq} \mu_s(T_2) = 0, 
\end{equation*}
where $(a)$ follows from property $(ii)$ in \Cref{lem:prop_mu} and $(b)$ from property $(i)$ in \Cref{lem:prop_mu}.
Therefore, $(\aFs,\bFs)$ is feasible for Problem \eqref{eq:F_gamma_after_s}.

\end{proof}

\begin{lemma}
\label{lem:disjoint_subg}
Let $x < y$, assume that $\partial \Psi_{F}(x) \cap \partial \Psi_{F}(y) \neq \emptyset$, then for every $x' \in (x,y]$, we have that $\ell_{F}(x') = \ell_{F}(y)$.
\end{lemma}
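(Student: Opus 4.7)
The plan is to use the hypothesis to force $\Psi_F$ to be affine in $F$-coordinates across the entire segment $[x,y]$, and then to pin down every subgradient on this segment to a single slope, which in turn must equal $\ell_F(y)$.

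Fix $\beta \in \partial\Psi_F(x) \cap \partial\Psi_F(y)$. Applying the subgradient inequality \eqref{eq:subgradient} at $x$ with $z=y$ and at $y$ with $z=x$ and adding yields
\begin{equation*}
\Psi_F(y) - \Psi_F(x) = \beta \cdot (F(y) - F(x)).
\end{equation*}
For any $z \in [x,y]$, the subgradient inequality at $x$ gives the lower bound $\Psi_F(z) \geq \Psi_F(x) + \beta(F(z)-F(x))$. For the matching upper bound I would invoke that $\Psi_F$ is convex when viewed as a function of the $F$-coordinate, a property that follows directly from the LP representation \eqref{eq:gen_virtual_value} as a supremum of affine-in-$F$ functionals. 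Since the identity above says that the chord joining $(F(x),\Psi_F(x))$ to $(F(y),\Psi_F(y))$ has slope exactly $\beta$, convexity forces $\Psi_F(z) \leq \Psi_F(x) + \beta(F(z)-F(x))$ on $F(z) \in [F(x),F(y)]$. Combining these, $\Psi_F(z) = \Psi_F(x) + \beta(F(z)-F(x))$ for every $z \in [x,y]$.

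Next, I would verify that this affine identity implies $\beta \in \partial\Psi_F(x')$ for every $x' \in [x,y]$: the required inequality $\Psi_F(z) \geq \Psi_F(x') + \beta(F(z)-F(x'))$ holds as equality for $z \in [x,y]$, and for $z$ outside $[x,y]$ one chains the subgradient inequality at $x$ (for $z<x$) or at $y$ (for $z>y$) with the affine identity evaluated at $x'$. Hence $\ell_F(x') \leq \beta$ on $[x,y]$, and monotonicity of $\ell_F$ gives $\ell_F(x') \leq \ell_F(y) \leq \beta$. For the reverse inequality, fix $x' \in (x,y]$ and any $\beta' \in \partial\Psi_F(x')$; evaluating the subgradient inequality at $x'$ at the point $z=x$ and substituting the affine identity yields $(\beta'-\beta)(F(x')-F(x)) \geq 0$. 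Strict monotonicity of $F$ on the intervals of interest (in particular, the posterior $F_{\gamma,s}$ used in \Cref{lem:relaxing_Psi_F_s} is strictly increasing off its single jump at $s$) gives $F(x)<F(x')$, so $\beta' \geq \beta$, hence $\ell_F(x') \geq \beta$. Combining with the upper bound, $\ell_F(x') = \beta$ for every $x' \in (x,y]$, and in particular $\ell_F(y) = \beta$, yielding the desired equality.

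The main subtlety is establishing the upper bound in the affine-identity step, which cannot be read off the subgradient inequality alone --- one must invoke the convexity of $\Psi_F$ in the $F$-coordinate, which is inherent in its LP representation. A secondary, degenerate case is when $F$ is flat on $[x,x']$: there $\Psi_F$ and $\partial\Psi_F$ depend only on the $F$-value, so $\partial\Psi_F(x') = \partial\Psi_F(x)$, and the conclusion is recovered by chaining through any point of strict increase in $(x',y]$.
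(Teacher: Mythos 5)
Your argument takes a genuinely different route from the paper's. The paper proves the statement by contradiction: it assumes $\ell_F(x')<\ell_F(y)$ for some $x'\in(x,y)$, chains $s_F(x')\geq s_F(x)\geq u\geq\ell_F(y)>\ell_F(x')\geq\ell_F(x)$ (where $u$ is the common subgradient), and concludes that the open intervals $(\ell_F(x),s_F(x))$ and $(\ell_F(x'),s_F(x'))$ overlap, contradicting item $(ii)$ of \Cref{prop:monteiro}. You instead work directly from the fact that $\Psi_F$, read through the $F$-coordinate, is convex as a supremum of affine functions of $F(\cdot)$, extract the affine identity on $[F(x),F(y)]$, verify $\beta\in\partial\Psi_F(x')$, and pin down the reverse inequality from the subgradient inequality at $x'$ evaluated at $z=x$. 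That chain of reasoning is correct and is a more elementary, self-contained derivation than invoking Monteiro's disjointness property as a black box.

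The one place your proof falls short of the statement as written is the degenerate case $F(x')=F(x)$. There your reverse-inequality step gives nothing, and the remark that ``the conclusion is recovered by chaining through any point of strict increase'' does not close the gap: since $\partial\Psi_F(x')=\partial\Psi_F(x)$, you get $\ell_F(x')=\ell_F(x)$, but the target is $\ell_F(x')=\ell_F(y)$, and $\ell_F(x)$ may a priori be strictly smaller than $\ell_F(y)$ (the lemma excludes the endpoint $x$ from its conclusion for exactly this reason). Ruling this out actually requires knowing that $\partial\Psi_F$ is a singleton wherever $F$ is constant on a non-degenerate interval, which is precisely what Monteiro's item $(ii)$ delivers --- the tool you were trying to bypass. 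So your argument proves the lemma under the additional hypothesis that $F$ is strictly increasing, which, as you note, is satisfied by every $F_{\gamma,s}$ to which the paper applies the lemma; the gap is therefore harmless for the paper's purposes, but it does mean you have proved a strictly narrower statement than the one asserted.
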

\begin{proof}[\textbf{Proof of \Cref{lem:disjoint_subg}}]
Let $u \in \partial \Psi_{F}(x) \cap \partial \Psi_{F}(y).$ Let $x' \in (x,y)$ and assume for sake of contradiction that $\ell_{F}(x') < \ell_{F}(y)$. 
We have that, $s_{F}(x') \geq s_{F}(x) \geq u \geq \ell_{F}(y) > \ell_{F}(x') \geq \ell_{F}(x)$.
Hence, $(\ell_{F}(x), s_{F}(x)) \cap (\ell_{F}(x'), s_{F}(x')) \neq \emptyset$. This contradicts item $(ii)$ \Cref{prop:monteiro}. Therefore, for every $x' \in (x,y]$, we have that $\ell_{F}(x') = \ell_{F}(y)$.
\end{proof}

\begin{lemma}
    \label{lem:iron_lower_virtual}
    Let $F$ be a distribution with positive and continuous density and let $\varphi_F$ be the virtual function. Then, for every $x,t \in [a,b]$ such that $x \leq t$, we have that
    \begin{equation*}
    \mathrm{IRON}_{[a,t]}[F](x) \leq \sup_{v \in [a,t]} \varphi_{F}(v). 
    \end{equation*}
\end{lemma}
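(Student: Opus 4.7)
The plan is to translate everything into quantile space and then exploit the structure of the convex envelope. Writing $q = F(x)$ and $Q = F(t)$ (so $q \leq Q$ since $F$ is strictly increasing), the definition of the truncated ironing gives $\mathrm{IRON}_{[a,t]}[F](x) = g_{Q}(q)$, where $g_{Q}$ is the (right) derivative of the convex hull $G_{Q}$ of the function $J(r) = \int_{0}^{r} \varphi_{F}(F^{-1}(\rho))\,d\rho$ on $[0,Q]$. Since $J'(r) = \varphi_{F}(F^{-1}(r))$, it suffices to show that $g_{Q}(q) \leq \sup_{r \in [0,Q]} J'(r)$, because this supremum equals $\sup_{v \in [a,t]} \varphi_{F}(v)$ via the substitution $v = F^{-1}(r)$.

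The key observation is that every value attained by $g_{Q}$ on $[0,Q]$ is realized as $J'(\xi)$ for some $\xi \in [0,Q]$. Fix $q \in [0,Q]$ and split into two cases. \emph{Case (i):} If $G_{Q}(q) = J(q)$, then since $J - G_{Q} \geq 0$ attains its minimum at $q$ and $J$ is $C^{1}$ (because $F$ has a positive continuous density, so $F^{-1}$ is $C^{1}$ and $J' = \varphi_{F} \circ F^{-1}$ is continuous), comparing one-sided difference quotients of $G_{Q}$ and $J$ at $q$ yields $G'_{Q,+}(q) \leq J'(q) \leq G'_{Q,-}(q)$; convexity of $G_{Q}$ forces the reverse inequality, so $G_{Q}$ is differentiable at $q$ with $g_{Q}(q) = J'(q)$. \emph{Case (ii):} If $G_{Q}(q) < J(q)$, then $q$ lies in the interior of a maximal interval $[r_{1}, r_{2}] \subseteq [0,Q]$ on which $G_{Q}$ is affine, with $G_{Q}(r_{i}) = J(r_{i})$. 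Then $g_{Q}(q) = (J(r_{2}) - J(r_{1}))/(r_{2} - r_{1})$, and the mean value theorem applied to $J$ on $[r_{1}, r_{2}]$ produces some $\xi \in (r_{1}, r_{2}) \subseteq [0,Q]$ with $g_{Q}(q) = J'(\xi)$.

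Combining the two cases, $g_{Q}(q) = J'(\xi)$ for some $\xi \in [0,Q]$, so $g_{Q}(q) \leq \sup_{\xi \in [0,Q]} J'(\xi) = \sup_{v \in [a,t]} \varphi_{F}(v)$, which is the desired inequality.

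The main delicate step is the contact-point argument in Case (i): one must carefully exploit $G_{Q} \leq J$ together with the $C^{1}$ regularity of $J$ to pin down $g_{Q}(q)$ to the single value $J'(q)$, rather than merely bracketing it inside the subdifferential $[G'_{Q,-}(q), G'_{Q,+}(q)]$. The rest of the argument is a routine application of the mean value theorem and the substitution $v = F^{-1}(\xi)$.
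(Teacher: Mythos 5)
Your approach — passing to quantile space, and at each $q$ either identifying $g_Q(q)$ with $J'(q)$ at a touch point or invoking the mean value theorem on the maximal affine piece — is sound in spirit and takes a somewhat different route from the paper. The paper instead observes that $\mathrm{IRON}_{[a,t]}[F]$ is nondecreasing, so it suffices to bound it at the right endpoint $x=t$, and then argues by contradiction via an integral comparison: if the bound failed near $t$, the identity $\int (\varphi_F\circ F^{-1} - g_Q) = 0$ over the interval between the last touch point and $Q$ would be violated. Your pointwise argument is cleaner for interior points, while the paper's integral argument is tailored to exactly the boundary point.

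That difference exposes a genuine gap in your write-up: you never actually establish the bound at $q=Q$, which is precisely the case the paper reduces to. At $q=Q$ one always has $G_Q(Q)=J(Q)$ (it is an extreme point of $[0,Q]$), so your Case~(i) is the one that applies; but there the right difference quotient of $G_Q$ does not exist, so the comparison only yields $J'(Q)\le G'_{Q,-}(Q)$ — the \emph{wrong} direction — and the sandwich argument collapses. Meanwhile Case~(ii) cannot apply since $Q$ is never a point where $G_Q<J$. In fact the asserted conclusion $g_Q(Q)=J'(Q)$ is false in general: take $J$ strictly concave on $[0,Q]$, so $G_Q$ is the chord from $(0,J(0))$ to $(Q,J(Q))$; then $g_Q(Q)$ equals the chord slope, which strictly exceeds $J'(Q)$. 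The statement of the lemma still holds there (the chord slope equals $J'(\xi)$ for an interior $\xi$ by MVT), but your two cases as stated do not reach it. The fix is short: either note that $g_Q(Q)=G'_{Q,-}(Q)=\lim_{q\uparrow Q}g_Q(q)$ by left-continuity of the left derivative of a convex function, and pass to the limit from the interior bound you already proved; or handle $q=Q$ by taking $r_1$ to be the left endpoint of the (possibly degenerate) maximal affine interval of $G_Q$ ending at $Q$ and applying MVT when $r_1<Q$, and the continuity of $J'$ along an accumulating sequence of touch points when $r_1=Q$. Without one of these patches the proof is incomplete.
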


\begin{proof}[\textbf{Proof of \Cref{lem:iron_lower_virtual}}]
    Let $t \in [a,b]$.
    Note that by construction $\mathrm{IRON}_{[a,t]}[F]$ is a non-decreasing function. Therefore, it is sufficient to prove that, 
    \begin{equation}
    \label{eq:ineq_to_prove}
    \mathrm{IRON}_{[a,t]}[F](t) \leq \sup_{v \in [a,t]} \varphi_{F}(v). 
    \end{equation}
    Assume for the sake of contradiction that this inequality does not hold.

    Recall the definition of $J$ (see \eqref{eq:J}) and of $G_{F(t)}$, the convex hull of the restriction of $J$ on $[0,F(t)]$.
    By definition, we have that
    \begin{equation*}
        G_{F(t)}(F(t)) = \min_{ \substack{(\lambda,r_1,r_2) \in [0,1]\times[0,F(t)]^2\\ \text{s.t. } \lambda \cdot r_1 + (1-\lambda) \cdot r_2 = F(t)} } \lambda \cdot J(r_1) + (1-\lambda) \cdot J(r_2) = J(F(t)),
    \end{equation*}
    where the last equality holds because $F(t)$ is an extreme point of $[0,F(t)]$, hence $r_1$ must equal $r_2$. Similarly we can show that $G_{F(t)}(0) = J(0).$ Let $u = \sup \{ x < t \text{ s.t. } G_{F(t)}(F(x)) = J(F(x)) \}$ Note that $u$ exists and is finite as the set is non-empty (it includes $0$) and bounded. We reason by disjunction of cases on the value of $u$.

    \textit{Case 1: $u = t$.} \Cref{prop:Myerson_and_Monteiro} and \Cref{lem:continuity} imply that $\mathrm{IRON}_{[a,t]}[F]$ is continuous as $F$ has a positive and continuous density. By assumption we have that $\mathrm{IRON}_{[a,t]}[F](t) > \sup_{v \in [a,t]} \varphi_{F}(v)$. The continuity of $\mathrm{IRON}_{[a,t]}[F]$  implies that there exists $\epsilon$ such that for every $t' \in [t-\epsilon,t]$ we have that, $\mathrm{IRON}_{[a,t]}[F](t') > \sup_{v \in [a,t]} \varphi_{F}(v)$. Moreover, as $u = t$, for $\epsilon$ small enough we also have that $G_{F(t)}(F(t-\epsilon)) = J(F(t-\epsilon))$. Fix such $\epsilon$ and observe that,
    \begin{align*}
        \int_{F(t-\epsilon)}^{F(t)} \varphi_F(F^{-1}(r)) dr &\stackrel{(a)}{=} J(F(t)) -  J(F(t - \epsilon))\\
        &= G_{F(t)}(F(t)) - G_{F(t)} (F(t-\epsilon)) \stackrel{(b)}{=} \int_{F(t-\epsilon)}^{F(t)}  \mathrm{IRON}_{[a,t]}[F](F^{-1}(r)) dr,
    \end{align*}
    where $(a)$ follows form \eqref{eq:J} and $(b)$ holds because $F$ is a continuous and increasing (as it admits a positive density everywhere) and hence it is invertible with inverse $F^{-1}.$
    
    Hence, we have established that 
    \begin{equation*}
        \int_{F(t-\epsilon)}^{F(t)} (\varphi_F(F^{-1}(r)) -\mathrm{IRON}_{[a,t]}[F](F^{-1}(r)))  dr =0,
    \end{equation*}
    which contradicts the fact that $\mathrm{IRON}_{[a,t]}[F](t') > \sup_{v \in [a,t]} \varphi_{F}(v)$ for every $t' \in [t-\epsilon,t]$.

    \textit{Case 2: $u < t$.}
    In that case $G_{F(t)}$ has a constant differential on $(u,t]$. Hence, we obtain that
    \begin{align*}
       \int_{F(u)}^{F(t)} \varphi_F(F^{-1}(r)) dr &= J(F(t)) -  J(F(u))\\
       &=  G_{F(t)}(F(t)) - G_{F(t)}(F(u))\\
       &=  \int_{F(u)}^{F(t)}  \mathrm{IRON}_{[a,t]}[F](F^{-1}(r)) dr > (F(t)-F(u)) \sup_{v \in [a,t]} \varphi_{F}(v).
    \end{align*}
    This leads to a contradiction.

\end{proof}

\begin{proposition}[\cite{monteiro2010optimal}]
\label{prop:monteiro}
Let $F$ be a distribution. We have that:
\begin{enumerate}
\item[i.] $ \{ x  \, \text{s.t.} \,  |\partial \Psi_{F}(x)| > 1 \}$ is at most countable.   
\item[ii.] $\ell_F(\cdot)$ and $s_F(\cdot)$ are non-decreasing and $(\ell(x),s(x)) \cap (\ell(x'),s(x')) = \emptyset$ for all $x \neq x'$.
\item [iii.] If $F(x) - F(x-) >0$, we have that $\ell_{F}(x) = \frac{\Psi_{F}(x)-\Psi_{F}(x-)}{F(x)-F(x-)}$.
\item [iv.] If $\Psi_F(x-) < H_F(x-)$, then there exists an interval $[x,z^*)$ such that $s(z) = \ell(z) = s(x-)$ for every $z \in (x,z^*)$.
\item [v.] If $\Psi_F(x) < H_F(x)$, then there exists an interval $[x,z^*)$ such that $\ell(z) = \ell(x)$ for every $z \in (x,z^*)$.
\item [vi.] For every $x$, $\Psi_{F}$ is continuous at $x$ if $F$ is continuous at $x$.
\item [vii.] Let $(F_i)_{i \in \{1,\ldots,n\}}$ be the value distributions of the buyers, and $(\hat{v}_i)_{i \in \{1,\ldots,n\}}$ their reported values. The following auction is revenue-maximizing. Allocate to the buyer with the highest non-negative value of $\ell_{F_i}(\hat{v}_i)$, and make them pay  $\ell_{F_i}^{-1} \left( \max \{0, \max_{j \neq i} \ell_{F_j}(\hat{v}_j)\} \right)$
\end{enumerate}
\end{proposition}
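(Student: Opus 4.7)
The plan is to reduce all seven claims to standard convex analysis via a quantile-space reformulation of $\Psi_F$. I would define a function $\phi : [0,1] \to \mathbb{R}$ by $\phi(q) = H_F(F^{-1}(q))$, linearly interpolating across the gaps left by jumps of $F$ (equivalently, across intervals of the quantile space on which $F$ places no mass). One then verifies that $\Psi_F(x) = \bar\phi(F(x))$ and $\Psi_F(x-) = \bar\phi(F(x-))$, where $\bar\phi$ is the convex envelope of $\phi$ on $[0,1]$. This identification is the key step: it transfers the generalized subdifferential behavior of $\Psi_F$ (defined through \eqref{eq:subgradient}) to the classical subdifferential of a univariate convex function, for which well-developed machinery is available.

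Items (i), (ii) and (vi) are then immediate from classical facts about convex functions on a compact real interval: the set of kinks (where the subdifferential is multi-valued) is countable; sub-gradients are monotone with pairwise disjoint open intervals of slopes; and a convex function is continuous on the interior of its domain, giving continuity of $\Psi_F$ wherever $F$ is continuous. Item (iii) is forced by the linear interpolation of $\phi$ across the quantile gap $(F(x-),F(x))$ at a jump of $F$: on this gap $\bar\phi$ is itself affine with slope $(\Psi_F(x)-\Psi_F(x-))/(F(x)-F(x-))$, which is precisely what $\ell_F(x)$ must equal. Items (iv) and (v) are the stability-of-affine-piece properties of a convex envelope: whenever $\bar\phi$ lies strictly below $\phi$ at a point, the supporting affine piece that is active there must touch $\phi$ on both sides, so it extends into a right-neighborhood of that point; translating back through $F$ yields the asserted constancy of $s$ and $\ell$.

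Item (vii) is the substantive part, being the Myerson-style characterization of the revenue-maximizing auction generalized to arbitrary distributions. I would follow Myerson's classical approach: apply the envelope theorem to the IC constraint to write each interim payment as a function of the interim allocation probability, then integrate by parts against $\Psi_{F_i}$ (rather than the usual virtual value function, which requires a density) to obtain the revenue formula
\begin{equation*}
\mathbb{E}\!\left[\sum_{i=1}^n p_i(v)\right] \;=\; \sum_{i=1}^n \mathbb{E}\!\left[\ell_{F_i}(v_i)\, x_i(v)\right].
\end{equation*}
Pointwise maximization subject to $\sum_i x_i \leq 1$ and non-negativity then yields the allocation rule, and the minimum-winning-bid characterization gives the payment $\ell_{F_i}^{-1}(\max\{0,\max_{j\neq i}\ell_{F_j}(\hat v_j)\})$ in the standard way.

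The main obstacle is making the integration-by-parts step in (vii) rigorous when $F_i$ does not admit a continuous density. The fix is to split the computation along the structure of $\ell_{F_i}$ itself: on intervals where $\ell_{F_i}$ is strictly increasing one has $\Psi_{F_i}(x) = H_{F_i}(x)$ and classical integration by parts applies directly, while on plateaus of $\ell_{F_i}$ the IC constraint forces $x_i$ to be constant in $v_i$ (by the monotonicity requirement for IC in such a setting), so that both sides of the identity reduce to the same sum of contributions. Boundary terms across jumps of $F_i$ are handled using the continuity of $\Psi_{F_i}$ (item (vi)) at continuity points of $F_i$ together with the jump formula (item (iii)) at atoms, which is exactly why items (iii)-(vi) must be proved before (vii).
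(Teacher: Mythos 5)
The paper's own ``proof'' of this proposition is a list of citations into \citet{monteiro2010optimal} (their Remark 3, Proposition 1.f, Proposition 3, Proposition 4, Proposition 5, and Theorem 5). You instead propose a self-contained derivation via a quantile-space reformulation $\Psi_F(x) = \bar\phi(F(x))$, where $\bar\phi$ is the ordinary convex envelope of $\phi = H_F \circ F^{-1}$ extended by linear interpolation across the quantile gaps created by atoms of $F$. That reduction is sound and essentially mirrors what Monteiro and Svaiter do under the hood, and items (i)--(vi) then do follow from standard one-variable convex analysis, with the one place deserving a bit more care being (iii)--(v): you need the fact that the convex envelope of a continuous function touches the function at every kink, so that no kink of $\bar\phi$ can sit strictly inside a quantile gap, which is what forces $\bar\phi$ to be affine across the gap with exactly the chord slope in (iii) and forces the affine piece to extend through the gap in (iv)--(v). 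So for (i)--(vi) this is a genuine, more elementary route than ``cite M\&S.''

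Item (vii), however, has a real gap. You write that ``on plateaus of $\ell_{F_i}$ the IC constraint forces $x_i$ to be constant in $v_i$ (by the monotonicity requirement for IC in such a setting).'' That is false: incentive compatibility only requires the interim allocation $x_i(\cdot)$ to be non-decreasing, not constant on ironing intervals. A strictly increasing allocation on a plateau of $\ell_{F_i}$ (e.g., from a menu of lotteries) is perfectly IC, just sub-optimal. Consequently, the identity
\[
\mathbb{E}\!\left[\sum_{i=1}^n p_i(v)\right] = \sum_{i=1}^n \mathbb{E}\!\left[\ell_{F_i}(v_i)\,x_i(v)\right]
\]
does not hold for arbitrary IC mechanisms; what holds is the inequality $\mathbb{E}[\sum_i p_i] \le \sum_i \mathbb{E}[\ell_{F_i}(v_i)\,x_i(v)]$, with equality precisely when $x_i$ is constant on each interval where $\Psi_{F_i} < H_{F_i}$. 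The constancy that makes the bound tight is a consequence of revenue maximization, not of IC. Repairing this requires proving the ironing inequality itself (integrating the payment identity against $\Psi_{F_i}$ rather than $H_{F_i}$ and showing the error term has the right sign), which is exactly the content of Monteiro and Svaiter's Theorem 5 and is not supplied by your sketch.
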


\begin{proof}[\textbf{Proof of \Cref{prop:monteiro}}]
We next point to the results in \cite{monteiro2010optimal} implying each of the points in the proposition.
    $(i)$ follows from Remark 3, $(ii)$ follows from Proposition 1.f and Remark 3, $(iii)$ follows from Proposition 4, $(iv)$ and $(v)$ are established in Proposition 5, $vi$ follows from Proposition 3 and $(vii)$ is established in Theorem 5.
\end{proof}

\section{Additional Results and Proofs of Results in \Cref{sec:single-buyer}} \label{sec:apx_single_buyer}

\subsection{Proofs of Results in \Cref{sec:single-buyer}}

\Cref{prop:optimal_price_corrected} follows from the following result by remarking that if $a = 0$, then we have that, $\varphi_{\gamma F}(a) = \varphi_{\gamma F}(0) = -\frac{1}{\gamma f(0)} < 0.$
\begin{proposition}
\label{prop:optimal_price_corrected_general}
Let $n=1$ and assume $F$ has a log-concave density $f$ on $[a,b]$. Furthermore, assume that $f$ is twice continuously differentiable.
Define\footnote{We use the convention that the infimum (resp. supremum) of an empty subset of $[a,b]$ is $b$ (resp. $a$).} $C_\gamma = \sup \{v \in [a,b] : \varphi_{\gamma F}(v) < 0\}$ and, 
\begin{equation*}
p^{\mathrm{ignore}} = \inf\{v \in [a,b] : \varphi_F(v) \geq 0\}
\end{equation*}
and,
\begin{equation*}
    p^{\mathrm{cap}} =
\begin{cases}
\inf\{v \in [a,b] : \varphi_{\gamma F}(v) \geq 0\} \; \; &\text{if $\varphi_{\gamma F}(a) < 0$,}\\
C_\gamma &\text{if $\varphi_{\gamma F}(a) \geq 0$ and } a\cdot(1-\gamma F(a))  \leq C_\gamma \cdot(1-\gamma F(C_\gamma)),\\
a \quad &\text{otherwise}.
\end{cases}
\end{equation*}
Then, the following posted price is optimal:
\begin{equation*}
    p^*(s)=
\begin{cases}
p^{\mathrm{ignore}}, 
    & s < L_\gamma, \\
s, 
    & L_\gamma \le s < M_\gamma, \\
p^{\mathrm{cap}},
    & M_\gamma \le s \leq U_\gamma, \\
s,
    & s \ge U_\gamma,
\end{cases}
\end{equation*}
where
\begin{equation*}
   L_\gamma = \begin{cases}
       \inf\{s \in [a,b] : T_s \ge p^{\mathrm{ignore}}\} &\text{if $\varphi_{\gamma F}(a) < 0$,}\\
       \min \left( \inf \{v \in [a,b] \text{ s.t. }   v \cdot (1- \gamma F(v)) > a \cdot (1 - \gamma F(a))\}, M_\gamma \right) &\text{otherwise},
   \end{cases}
\end{equation*}
and,
\begin{align*}
    M_\gamma = p^{\mathrm{cap}}, \quad U_\gamma = \sup \{s \in [a,b] : s \cdot (1-\gamma F(s)) \leq M_\gamma \cdot (1-\gamma F(M_\gamma)) \},
\end{align*}
and $T_s$ is the threshold in \Cref{thm:main}.
\end{proposition}

\begin{proof}[\textbf{Proof of \Cref{prop:optimal_price_corrected_general}}]
For any mapping $u$ from $[a,b]$ to $\mathbb{R}$ we define for every $x \in [a,b]$, the notation $u(x-) = \lim_{\substack{t \to x\\t < x}} u(t)$ to denote the left-limit of $u$ at $x$ (when it exists).

Note that log-concavity implies that $F$ is regular with a continuous density. Therefore the characterization in \Cref{thm:main} holds.
By \Cref{thm:main}, with one buyer the optimal price $p^*(s)$ is the smallest $v$ such that $\bar{\varphi}_{F_{\gamma,s}}(v)\ge 0$. Alternatively, this characterization implies that $p^{*}(s)$ is a maximizer of the mapping $p \mapsto p \cdot (1- F_{\gamma,s}(p-))$.

Recall that the structure of $\bar{\varphi}_{F_{\gamma,s}}$ is
\[
\bar{\varphi}_{F_{\gamma,s}}(v)=
\begin{cases}
\mathrm{IRON}_{[a,s]}[\gamma F](v), & v < s,\\[2pt]
\varphi_F(T_s),                    & s \le v < T_s,\\[2pt]
\varphi_F(v),                      & v \ge T_s,
\end{cases}
\]
with $T_s$ non-decreasing in $s$ (see \Cref{lem:monotonic_T_s}).

\noindent \textit{Case 1: $\varphi_{\gamma F}(a) < 0.$}~\\
If $s < L_\gamma$, then $T_s < p^{\mathrm{ignore}}$. By monotonicity of  $\bar{\varphi}_{F_{\gamma,s}}$, we obtain that for every $v < T_s$,
$\bar{\varphi}_{F_{\gamma,s}}(v) \leq \bar{\varphi}_{F_{\gamma,s}}(T_s) = \varphi_{F}(T_s) < 0$, where the last inequality holds because $\varphi_{F}$ is non-decreasing and $T_s < p^{\mathrm{ignore}} = \inf\{v \in [a,b] : \varphi_F(v) \geq 0\}$.
Furthermore,  
$\bar{\varphi}_{F_{\gamma,s}}(v)=\varphi_F(v)$ for every $v\ge T_s$.
The smallest non-negative $v$ of $\varphi_F$ is $p^{\mathrm{ignore}}$, hence
$p^*(s)=p^{\mathrm{ignore}}$.

In what follows, we assume that $L_\gamma \leq s$. This implies that $T_s \geq p^{\mathrm{ignore}}$ and, $\bar{\varphi}_{F_{\gamma,s}}(s) = \varphi_F(T_s)\ge \varphi_F(p^{\mathrm{ignore}}) \geq 0$. Thus, we have $p^*(s) \leq s$.

Equivalently, $p^*(s) \in \argmax_{p \in [a,s]} p \cdot (1-F_{\gamma,s}(p-))$. Note that for every $p \in [a,s]$, 
\begin{equation*}
    p \cdot (1-F_{\gamma,s}(p-)) = p \cdot (1-\gamma \cdot F(p-)) = p \cdot (1-\gamma \cdot F(p)),
\end{equation*}
where the first equality follows from the definition of $F_{\gamma,s}$ and the second equality holds because $F$ has a continuous density.

By continuity of $F$, the mapping $u: p \mapsto p \cdot (1-\gamma F(p))$ is continuous, hence it achieves its maximum on the compact set $[a,s]$. We next analyze the behavior of $u$ on $[a,s]$. 

For every $x \in [a,b]$ we have,
\begin{equation*}
    u(x)= u(a) - \int_a^{\gamma F(x)} \varphi_{\gamma F} \left( (\gamma F)^{-1}(t) \right)dt.
\end{equation*}
Consequently, for every $x \in [a,b]$, $u'(x)$ has the same sign as $-\varphi_{\gamma F}(x)$. We next analyze the variations of $u$.

As $f$ is log-concave, \Cref{lem:two-zeros} implies $\varphi_{\gamma F}$ admits at most two distinct $0$ on $[a,b]$. 
By continuity of $\varphi_{\gamma F}$,  we thus have that $\{ v \in [a,b] \text{ s.t } \varphi_{\gamma F}(v) \geq 0\}$ is a closed interval. 
If it is empty, then $\varphi_{\gamma F}(v)<0$ for all $v\in[a,b]$, hence $u'(v)>0$ for all $v\in[a,b]$, and therefore $\max_{p\in[a,s]}u(p)=u(s)$ so $p^*(s)=s$ for every $s \geq L_\gamma$. In that case, we have by convention that $M_\gamma = p^{\mathrm{cap}} = b$ and, by definition, $U_\gamma = b$. Hence, we obtain the desired characterization where only the first two pieces are non-empty.

Next, assume that $\{ v \in [a,b] \text{ s.t } \varphi_{\gamma F}(v) \geq 0\}$ is not empty. 
Let $p^{\mathrm{cap}} = M_\gamma = \inf \{ v \in [a,b] \text{ s.t } \varphi_{\gamma F}(v) \geq 0\}$ and $m_{\gamma}$ be the supremum of such a set.
We have that $u'(x) \leq 0$ if and only if $x \in [M_\gamma,m_\gamma]$. Consequently, $u$ is non-decreasing on $[a,M_{\gamma}]$ achieves a local maximum at $M_\gamma$, it is then non-increasing on $[M_{\gamma},m_{\gamma}]$, achieves a local minimum at $m_{\gamma}$ and is again non-decreasing on $[m_{\gamma},b]$. 
Furthermore, let $K = \{ x \in [m_{\gamma},b] \text{ s.t. } u(x) \leq u(M_\gamma) \}$. Remark that $K = u^{-1}([u(M_\gamma),\infty)) \cap [m_\gamma,b]$, and $u$ is a continuous non-decreasing mapping on $[m_\gamma,b]$. Therefore, the intermediate value theorem implies that $K$ is empty or $K = [m_\gamma, U_\gamma]$, where $\sup K$. We summarize these findings in \Cref{fig:variation}.

\begin{figure}[h!]
    \begin{center}
\begin{tikzpicture}
  \tkzTabInit[lgt=2.2,espcl=4]%
    {$x$/1, $u'(x)$/1, $u$/2}%
    {$a$, $M_\gamma$, $m_\gamma$, $b$}
  \tkzTabLine{,+,z,-,z,+}
  \tkzTabVar{-/$u(a)$, +/$u(M_\gamma)$, -/$u(m_\gamma)$, +/$u(b)$}
  \tkzTabVal{3}{4}{0.5}{$U_\gamma$}{$u(M_\gamma)$}
\end{tikzpicture}
\end{center}
    \caption{Variation of $u$ in sub-case $(a)$}
    \label{fig:variation}
\end{figure}

\Cref{fig:variation} allows us to conclude the proof for this sub-case.
\begin{itemize}
    \item If $s < M_\gamma$, we have that $\max_{p \in [a,s]} u(p) = u(s)$ hence $p^*(s) = s$ is optimal.
    \item If $s \in [M_\gamma,U_\gamma),$ we have that $\max_{p \in [a,s]} u(p) = u(M_\gamma)$ hence $p^*(s) = M_
    \gamma$ is optimal.
    \item If $s \geq U_\gamma$ we have that $\max_{p \in [a,s]} u(p) = u(s)$ hence $p^*(s) = s$ is optimal.
\end{itemize}

\noindent \textit{Case 2: $\varphi_{\gamma F}(a) \geq 0.$}~\\
Recall that for every $v \in [a,b]$, we have that $\varphi_{F}(v) \geq \varphi_{\gamma F}(v)$, and $\varphi_{F}$ is non-decreasing which implies that, for every $v \in [a,b]$, $\varphi_{F}(v) \geq 0$. Hence,  $\bar{\varphi}_{F_{\gamma,s}}(s) = \varphi_F(T_s) \geq 0$. Thus, we have $p^*(s) \leq s$. As in case 1, it is sufficient to maximize the function $u$ on $[a,s]$.

First, if $\varphi_{\gamma F}(v) \geq 0 $ for all $v \in [a,b]$, then $u'(v) \leq 0$ for all $v \in [a,b]$ and $u$ is non-increasing. Hence, for every $s \in [a,b]$, $p^*(s) = a$ maximizes $u$.
In that case, remark that  $\varphi_{F}(a) \geq \varphi_{\gamma F}(a) \geq 0$, so $p^{\mathrm{ignore}}= a$.
Furthermore, the set $\{ v \in [a,b] \text{ s.t. } \varphi_{\gamma F}(v) < 0 \}$ is  empty and by convention we have that $L_\gamma = M_\gamma = p^{\mathrm{cap}} = a$ and, as $u$ is non-increasing, $U_\gamma = b$. Hence, we obtain the desired characterization where only the third piece is non-empty.

Next, assume the set $\{ v \in [a,b] \text{ s.t. } \varphi_{\gamma F}(v) < 0 \}$ is non-empty.
Recall that $C_\gamma = \sup \{ v \in [a,b] \text{ s.t. } \varphi_{\gamma F}(v) < 0 \}$ and $m_\gamma = \inf \{ v \in [a,b] \text{ s.t. } \varphi_{\gamma F}(v) < 0 \}$ in $[a,b]$. By an analysis similar to case (1), we establish that $u$ is non-increasing on $[a,m_{\gamma}]$, then non-decreasing on $[m_\gamma,C_{\gamma}]$ and again non-increasing on $[m_\gamma,b]$. These variations are summarized in \Cref{fig:variation_2}.
\begin{figure}[h!]
    \begin{center}
\begin{tikzpicture}
  \tkzTabInit[lgt=2.2,espcl=4]%
    {$x$/1, $u'(x)$/1, $u$/2}%
    {$a$, $m_\gamma$, $C_\gamma$, $b$}
  \tkzTabLine{,-,z,+,z,-}
  \tkzTabVar{+/$u(a)$, -/$u(m_\gamma)$, +/$u(C_\gamma)$, -/$u(b)$}
\end{tikzpicture}
\end{center}
    \caption{Variation of $u$ in sub-case $(b)$}
    \label{fig:variation_2}
\end{figure}

Note that in this case, $U_\gamma = b.$
From \Cref{fig:variation_2} we remark that  if $u(C_\gamma) < u(a)$, then for every $s \in [a,b]$, we have that, $\max_{p \in [a,s]} u(p) = u(a)$ hence $p^*(s) = a = p^{\mathrm{ignore}} = L_{\gamma}$. In that case $p^{\mathrm{cap}} = M_\gamma = a$. Hence, the characterization hold with the third piece being non-empty.

Furthermore, if $u(C_\gamma) \geq u(a)$, then
\begin{itemize}
    \item If $s < L_\gamma$, we have that $\max_{p \in [a,s]} u(p) = u(a)$, by definition of $L_\gamma$, hence $p^*(s) = p^{\mathrm{ignore}}$ is optimal.
    \item If $s \in [L_\gamma,M_\gamma)$, we have that $\max_{p \in [a,s]} u(p) = u(s)$, hence $p^*(s) = s$ is optimal.
    \item If $s \in [M_\gamma,U_\gamma),$ we have that $\max_{p \in [a,s]} u(p) = u(M_\gamma)$ hence $p^*(s) = M_
    \gamma$ is optimal.
\end{itemize}
This concludes the proof.

\end{proof}

\begin{lemma}
\label{lem:monotonic_T_s}
    For every $y \leq s$, let $\mu_y$ be the mapping defined in \eqref{eq:mu_y}. And for every $s \in [a,b)$, let $T_s$ be the smallest $T \in (s,b]$ such that $\mu_{s}(T_s) = 0.$ Then, the mapping $s \mapsto T_s$ is non-decreasing.
\end{lemma}

\begin{proof}[\textbf{Proof of \Cref{lem:monotonic_T_s}}]
    Let $s < s'$. Let us prove that $T_s \leq T_{s'}$. Property (i) in \Cref{lem:prop_mu} implies that $\mu_{s'}$ is non-increasing. Therefore, it is sufficient to prove that $\mu_{s'}(T_s) \geq \mu_{s'}(T_{s'}).$

    Furthermore we have that,
    \begin{equation*}
        \mu_{s'}(T_s) > \mu_{s}(T_s) = 0 = \mu_{s'}(T_s'),
    \end{equation*}
    where the first inequality follows from $(ii)$ in \Cref{lem:prop_mu}. This concludes the proof.
\end{proof}

\begin{lemma}[At most two zeros of $\varphi$ under log-concavity]\label{lem:two-zeros}
Let $F$ be a distribution on $(a,b)$ which admits a positive density function $f$. Furthermore, assume that $f$ is $C^2$ and log-concave.
Fix $\gamma \in(0,1)$,
then the equation $\varphi_{\gamma F}(v)=0$ has at most two distinct solutions in $(a,b)$.
\end{lemma}

\begin{proof}[\textbf{Proof of \Cref{lem:two-zeros}}]
Set, for all $v \in (a,b)$
\begin{equation*}
    A(v):=\frac{1}{\gamma}-F(v).
\end{equation*}
Since $\gamma\in(0,1)$ and $F(v)\le 1$, we have $A(v)>0$ on $(a,b)$. 
By definition, for all $v \in (a,b)$
\begin{equation}\label{eq:phi-A}
v-\varphi_{\gamma F}(v)=\frac{A(v)}{f(v)} 
\end{equation}
so in particular $v-\varphi_{\gamma F}(v)>0$ on $(a,b)$.

\noindent\textit{Step 1: A structural differential inequality.}
Write $u=\log f$. Differentiating \eqref{eq:phi-A} and using $A'=-f$, we obtain
\begin{equation*}
    f'(v) \cdot (v-\varphi_{\gamma F}(v)) + f(v) \cdot (1-\varphi_{\gamma F}'(v)) \;=\; -\,f(v).
\end{equation*}
Dividing by $f(v)>0$ and recalling $u'=f'/f$ gives
\begin{equation*}
u'(v) \cdot (v-\varphi_{\gamma F}(v)) + 1 - \varphi_{\gamma F}'(v) = -1, 
\end{equation*}
which implies that,
\begin{equation*}
    u'(v) = \frac{\varphi_{\gamma F}'(v)-2}{\,v-\varphi_{\gamma F}(v)\,}.
\end{equation*}
Differentiating once more, implies that
\begin{equation*}
    u''(v)
= \frac{(v-\varphi_{\gamma F}(v)) \cdot \varphi_{\gamma F}''(v) + (\varphi_{\gamma F}'(v)-2) \cdot (\varphi_{\gamma F}'(v)-1)}{(v-\varphi_{\gamma F}(v))^2}.
\end{equation*}

Since $u''\le 0$ by log-concavity and $v-\phi(v)>0$ by \eqref{eq:phi-A}, we obtain that for all $v \in (a,b)$,
\begin{equation}\label{eq:key}
(v-\varphi_{\gamma F}(v))\cdot \varphi_{\gamma F}''(v) + \bigl(\varphi_{\gamma F}'(v)-2\bigr) \cdot \bigl(\varphi_{\gamma F}'(v)-1\bigr) \;\leq \; 0
\end{equation}

\smallskip
\noindent\textit{Step 2: $\varphi_{\gamma F}'$ has at most one zero.}
Let $g:=\varphi_{\gamma F}'$. From \eqref{eq:key} and $v-\varphi_{\gamma F}>0$ we have, for every $v \in (a,b)$ that
\begin{equation}\label{eq:concavity-bands}
g(v)\le 1 \ \text{ or }\ g(v)\ge 2 \quad\Longrightarrow\quad \varphi_{\gamma F}''(v)=g'(v)\leq 0.
\end{equation}
Furthermore, at any $z\in (a,b)$ with $g(z)=0$ we have $g'(z) < 0$.

Let $z_1 \in (a,b)$ be such that $g(z_1) = 0$ and assume for the sake of contradiction that there exists $z' > z_1$ such that $g(z') = 0$.
Since $g(z_1) = 0$, we have that $g'(z_1) < 0$ and, by continuity of $g'$, there exists $\varepsilon>0$ such that $g'(v) < 0$ for every $v \in [z_1,z_1+\varepsilon]$, hence $g$ is decreasing on $[z_1,z_1+\varepsilon]$ and thus $g(v) < 0$ for every $v \in [z_1,z_1+\varepsilon]$.
Let $z_2 = \inf \{ v > z_1 \text{ s.t. } g(v) = 0\}$. By the previous argument we know that $z_2 > z_1 +\epsilon.$ and by continuity of $g$, we have that $g(z_2) = 0$.
Continuity of $g$ also implies that for $v \in [z_1,z_2]$, we have that $g(v) \leq 0$ which implies that $g'(v) \leq 0$ by \eqref{eq:concavity-bands}. Consequently,
\begin{equation*}
    g(z_2) - g(z_1+\epsilon) = \int_{z_1+\epsilon}^{z_2} g'(v) dv \leq 0.
\end{equation*}
This implies that $g(z_2) \leq g(z_1+\epsilon) < 0$, which is a contradiction.
Thus $g=\varphi_{\gamma F}'$ has at most one zero on $(a,b)$.

\smallskip
\noindent\textit{Step 3: Conclusion for $\varphi_{\gamma F}$.}
If $\varphi_{\gamma F}'$ has no zero, then $\varphi_{\gamma F}$ is monotone on $(a,b)$. As $\varphi_{\gamma F}$ is continuous, the intermediate value theorem implies that $\varphi_{\gamma F}(v)=0$ has at most one solution. 
If $\varphi_{\gamma F}'$ has exactly one zero $v^\star\in (a,b)$, then $\varphi_{\gamma F}$ is a continuous function strictly monotone on each of $(a,v^*)$ and $(v^*,b)$. By applying the intermediate value theorem again, we conclude that $\varphi_{\gamma F}$ has at most one zero in each of these intervals. Therefore, $\varphi_{\gamma F}(v) = 0$ has at most two distinct solutions on $(a,b)$.
\end{proof}

\subsection{Regular Counterexample with more than Four Regimes}
\label{app:counterexample-many-regimes}

This appendix provides a regular distribution for which the single-buyer optimal posted price under hallucinations exhibits more than four regimes.

We consider the distribution $F$ which is a mixture of the $\mathrm{Beta}(4,6)$ and the $\mathrm{Beta}(4,1)$ distributions with respective weights $3/4$ and $1/4$. Figure~\ref{fig:counterexample_two_csv} illustrates the key properties of this counterexample. 
\Cref{fig:virtual_counter} plots the virtual value function associated with $F$, while \Cref{fig:price_counter} reports the optimal posted price as a function of the hallucinated signal $s$.

\begin{figure}[h!]
\centering
\subfigure[virtual value]{
\begin{tikzpicture}[scale=0.5]
\begin{axis}[
    width=\textwidth,
    height=0.62\textwidth,
    xlabel={$v$},
    ylabel={virtual value $\phi(v)$},
    grid=both,
    table/col sep=comma,
    unbounded coords=jump, 
    ymin = -25,
]
\addplot[thick, blue, restrict x to domain=0.05:1, line width = .5mm] table[x=v, y=virtual_val] {Data/counterexample_virtual_values.csv};
\end{axis}
\end{tikzpicture}
\label{fig:virtual_counter}
}
\hfill
\subfigure[Optimal price as a function of $s$.]{
\centering
\begin{tikzpicture}[scale=0.5]
\begin{axis}[
    width=\textwidth,
    height=0.62\textwidth,
    xlabel={$s$},
    ylabel={best price $p^\star(s)$},
    grid=both,
    table/col sep=comma,
    unbounded coords=jump,
    ymin=0, ymax=1, 
]
\addplot[thick, red, line width = .5mm] table[x=s, y=best_price] {Data/counterexample_best_prices.csv};
\end{axis}
\end{tikzpicture}
\label{fig:price_counter}
}
\caption{Counterexample: virtual values and optimal prices}
\label{fig:counterexample_two_csv}
\end{figure}
From \Cref{fig:virtual_counter}, we verify that the counterexample distribution is indeed regular: the ironed virtual value is non-decreasing over the support.
Despite this regularity, \Cref{fig:price_counter} shows that the optimal posted price under hallucinations exhibits five distinct regimes.
In particular, a second capped pricing regime emerges for high values of the signal $s$, leading to more than four pricing regions.

\section{Proof of Results in \Cref{sec:heuristic}}

\begin{proof}[\textbf{Proof of \Cref{thm:signal_worse_than_cap}}]
Fix $\gamma>0$, a regular distribution $F$ with continuous density on $[a,b]$, 
and signals $(s_1,s_2)$ with $s_1>s_2$.
Throughout the proof we condition on $(s_1,s_2)$ and suppress this conditioning from the
notation. Let
\begin{equation*}
    F_i := F_{\gamma,s_i},\qquad 
v_i\sim F_i \ \text{independently across }i,\qquad
p_i^* := p_i^*(s_i),\qquad i\in\{1,2\}.
\end{equation*}
Write $R_i(r) =r \cdot \mathbb{P}_{v_i\sim F_i}(v_i\ge r)$ for the revenue functions of $F_i$.
By definition of $p_i^*$, for all $r\in[a,b]$,
\begin{equation}
\label{eq:monopoly-optimality}
R_i(p_i^*)  \;\ge\;  R_i(r).
\end{equation}

In the signal–eager mechanism $\pi_E$ we use personalized reserves
\[
r_1^E := s_1,\qquad r_2^E := s_2,
\]
whereas in the capped mechanism $\pi_C$ we use
\[
r_1^C := \max\{s_1,p_1^*\},\qquad r_2^C := p_2^*.
\]
Recall that in any eager second–price auction with reserves $(r_1,r_2)$, buyer $i$ is active iff $v_i \ge r_i$, the good is allocated to the active bidder with the highest value (ties broken in favour of buyer~1, which is without loss of generality since under our assumption that valuations admit a continuous density, tie events occur with probability $0$), and the winner pays the maximum between her reserve and the highest competing bid.

We will prove the stronger statement
\begin{equation*}
\mathbb{E}\bigl[p_1^{\pi_C} (\bm{v}, \bm{s}) \vert \bm{s} \bigr] \;\ge\; \mathbb{E}\bigl[p_1^{\pi_E} (\bm{v}, \bm{s}) \vert \bm{s} \bigr]
\quad\text{and}\quad
\mathbb{E}\bigl[p_2^{\pi_C} (\bm{v}, \bm{s}) \vert \bm{s} \bigr] \;\ge\; \mathbb{E}\bigl[p_2^{\pi_E} (\bm{v}, \bm{s}) \vert \bm{s} \bigr].
\end{equation*}
The lemma then follows by summing these two inequalities.

\paragraph{Step 1. Comparing revenue generated from Buyer 2.}
For any auction $\pi$, Fubini's theorem implies that,
\begin{equation}
\label{eq:Fubini_B2}
    \mathbb{E}\bigl[p_2^{\pi} (\bm{v}, \bm{s}) \vert \bm{s} \bigr] = \mathbb{E}_{v_1}\bigl[ \mathbb{E}_{v_2}\bigl[p_2^{\pi} (\bm{v}, \bm{s}) \vert \bm{s}; v_1 \bigr]  \vert \bm{s} \bigr]. 
\end{equation}
To derive our result, we will prove the following stronger property: for every $v_1 \in [a,b]$.
\begin{equation*}
        \mathbb{E}_{v_2}\bigl[p_2^{\pi_C} (\bm{v}, \bm{s}) \vert \bm{s}; v_1 \bigr]\geq \mathbb{E}_{v_2}\bigl[p_2^{\pi_E} (\bm{v}, \bm{s}) \vert \bm{s}; v_1 \bigr]. 
\end{equation*}
We will establish this by considering three cases.

\emph{Case (i): $v_1 < s_1.$} Given that $r_1^{E} = s_1$ and $r_1^{C} \geq s_1$, buyer $1$ is inactive in both auctions. The good is allocated to buyer $2$ whenever its value is above the reserve set by the auctioneer and, the payment rule satisfies,
\begin{equation*}
    p_2^{\pi_C} (\bm{v}, \bm{s}) = r_2^C \cdot \mathbbm{1} \{ v_2 \geq r_2^C \} \qquad \mbox{and} \qquad p_2^{\pi_E} (\bm{v}, \bm{s})  = s_2 \cdot \mathbbm{1} \{ v_2 \geq s_2 \}.
\end{equation*}
Consequently, we obtain that,
\begin{equation*}
    \mathbb{E}_{v_2}\bigl[p_2^{\pi_C} (\bm{v}, \bm{s}) \vert \bm{s}; v_1 \bigr] = R_2(r_2^C) = R_2(p_2^*) \geq R_2(s_2) = \mathbb{E}_{v_2}\bigl[p_2^{\pi_E} (\bm{v}, \bm{s}) \vert \bm{s}; v_1 \bigr],
\end{equation*}
where the inequality follows from \eqref{eq:monopoly-optimality}.

\emph{Case (ii): $v_1 \in [s_1, r_1^C)$.} In this case, buyer $1$ is active in $\pi_{E}$, but inactive in $\pi^{C}$.
In $\pi_{C}$, the payment rule remains $p_2^{\pi_C} (\bm{v}, \bm{s}) = r_2^C \cdot \mathbbm{1} \{ v_2 \geq r_2^C \}.$ In $\pi_{E}$, the good is allocated to buyer $2$ if and only if $v_2 \geq s_2$ and $v_2 \geq v_1$ and the payment is,
\begin{equation*}
    p_2^{\pi_E} (\bm{v}, \bm{s})  = \max(s_2,v_1) \cdot \mathbbm{1} \{ v_2 \geq s_2; \, v_2 \geq v_1  \} = \max(s_2,v_1) \cdot \mathbbm{1} \{ v_2 \geq \max(s_2,v_1) \}. 
\end{equation*}
Consequently, we obtain that
\begin{equation*}
    \mathbb{E}_{v_2}\bigl[p_2^{\pi_C} (\bm{v}, \bm{s}) \vert \bm{s}; v_1 \bigr] = R_2(r_2^C) = R_2(p_2^*) \geq R_2(\max(s_2,v_1)) = \mathbb{E}_{v_2}\bigl[p_2^{\pi_E} (\bm{v}, \bm{s}) \vert \bm{s}; v_1 \bigr],
\end{equation*}
where the inequality follows from \eqref{eq:monopoly-optimality}.

\emph{Case (iii): $v_1 \geq r_1^{C}$.} In this case, buyer $1$ is active in both auctions.
In $\pi_{E}$, the payment rule is the same as case $(ii)$ and, $p_2^{\pi_E} (\bm{v}, \bm{s}) = \max(s_2,v_1) \cdot \mathbbm{1} \{ v_2 \geq \max(s_2,v_1) \}.$ In $\pi_{C}$, the good is allocated to buyer $2$ if and only if $v_2 \geq r_2^C$ and $v_2 \geq v_1$ and the payment is,
\begin{equation*}
    p_2^{\pi_C} (\bm{v}, \bm{s})  = \max(r_2^C,v_1) \cdot \mathbbm{1} \{ v_2 \geq r_2^C; \, v_2 \geq v_1  \} = \max(r_2^C,v_1) \cdot \mathbbm{1} \{ v_2 \geq \max(r_2^C,v_1) \}. 
\end{equation*}
Consequently, we obtain that
\begin{equation*}
    \mathbb{E}_{v_2}\bigl[p_2^{\pi_C} (\bm{v}, \bm{s}) \vert \bm{s}; v_1 \bigr] = R_2(\max(r_2^C,v_1)) \qquad \mbox{and} \qquad    \mathbb{E}_{v_2}\bigl[p_2^{\pi_E} (\bm{v}, \bm{s}) \vert \bm{s}; v_1 \bigr] = R_2(\max(s_2,v_1)).
\end{equation*}

Furthermore, we note that $v_1 \geq r_1^{C}$ implies that $v_1 \geq s_1 \geq s_2$, as $r_1^{C} \geq s_1$. Hence,
\begin{equation*}
    \mathbb{E}_{v_2}\bigl[p_2^{\pi_E} (\bm{v}, \bm{s}) \vert \bm{s}; v_1 \bigr] = R_2(v_1) \leq R_2(\max(r_2^C,v_1)) = \mathbb{E}_{v_2}\bigl[p_2^{\pi_C} (\bm{v}, \bm{s}) \vert \bm{s}; v_1 \bigr],
\end{equation*}
where the inequality holds because either $v_1 < r_2^C$ and we conclude by \eqref{eq:monopoly-optimality}, or $v_1 \geq r_2^C$ and both terms are equal.

Hence, by combining the three cases and applying \eqref{eq:Fubini_B2}, we obtain that,
\begin{equation}
\label{eq:ccl_buyer2}
        \mathbb{E}\bigl[p_2^{\pi_C} (\bm{v}, \bm{s}) \vert \bm{s} \bigr] = \mathbb{E}_{v_1}\bigl[ \mathbb{E}_{v_2}\bigl[p_2^{\pi_C} (\bm{v}, \bm{s}) \vert \bm{s}; v_1 \bigr]  \vert \bm{s} \bigr] \geq  \mathbb{E}_{v_1}\bigl[ \mathbb{E}_{v_2}\bigl[p_2^{\pi_E} (\bm{v}, \bm{s}) \vert \bm{s}; v_1 \bigr]  \vert \bm{s} \bigr] =    \mathbb{E}\bigl[p_2^{\pi_E} (\bm{v}, \bm{s}) \vert \bm{s} \bigr].
\end{equation}

\paragraph{Step 2. Comparing revenue generated from Buyer 1.}
We now establish that for every $v_2 \in [a,b]$.
\begin{equation}
\label{eq:to_prove_buyer1}
        \mathbb{E}_{v_1}\bigl[p_1^{\pi_C} (\bm{v}, \bm{s}) \vert \bm{s}; v_2 \bigr]\geq \mathbb{E}_{v_1}\bigl[p_1^{\pi_E} (\bm{v}, \bm{s}) \vert \bm{s}; v_2 \bigr]. 
\end{equation}

We first assume that $r_2^{C} \leq s_2.$ Let $v_2 \in [a,b].$

\emph{Case (i): $v_2 < r_2^{C}$.} Buyer $2$ is inactive in both auctions. The good is allocated to buyer $1$ whenever its value is above the reserve set by the auctioneer and, the payment rule satisfies,
\begin{equation*}
    p_1^{\pi_C} (\bm{v}, \bm{s}) = r_1^C \cdot \mathbbm{1} \{ v_1 \geq r_1^C \} \qquad \mbox{and} \qquad p_1^{\pi_E} (\bm{v}, \bm{s})  = s_1 \cdot \mathbbm{1} \{ v_1 \geq s_1 \}.
\end{equation*}
Consequently, we obtain that,
\begin{equation*}
    \mathbb{E}_{v_1}\bigl[p_1^{\pi_C} (\bm{v}, \bm{s}) \vert \bm{s}; v_2 \bigr] = R_1(r_1^C) = R_1(\max(s_1,p_1^*)) \geq R_1(s_1) = \mathbb{E}_{v_1}\bigl[p_1^{\pi_E} (\bm{v}, \bm{s}) \vert \bm{s}; v_2 \bigr],
\end{equation*}
where the inequality follows from \eqref{eq:monopoly-optimality}, or because both terms are equal.

\emph{Case (ii): $v_2 \in [r_2^{C}, s_2)$.}  In this case, buyer $2$ is active in $\pi_{C}$, but inactive in $\pi_{E}$.
In $\pi_{E}$, the payment rule remains $p_1^{\pi_E} (\bm{v}, \bm{s}) =s_1 \cdot \mathbbm{1} \{ v_1 \geq s_1 \}.$ In $\pi_{C}$, the good is allocated to buyer $1$ if and only if $v_1 \geq r_1^{C}$ and $v_1 \geq v_2.$ and the payment is,
\begin{equation*}
    p_1^{\pi_C} (\bm{v}, \bm{s})  =  \max(r_1^{C},v_2) \cdot \mathbbm{1} \{ v_1 \geq \max(r_1^{C},v_2) \}. 
\end{equation*}
Remark that $v_2 < s_2 \leq s_1 \leq r_1^{C}$. Hence, $\max(r_1^{C},v_2) = r_1^C.$
Consequently, by using the same argument as in case (i) we conclude that
\begin{equation*}
    \mathbb{E}_{v_1}\bigl[p_1^{\pi_C} (\bm{v}, \bm{s}) \vert \bm{s}; v_2 \bigr] = R_1(r_1^{C}) \geq R_1(s_1)  = \mathbb{E}_{v_1}\bigl[p_1^{\pi_E} (\bm{v}, \bm{s}) \vert \bm{s}; v_2 \bigr].
\end{equation*}

\emph{Case (iii): $v_2 \geq s_2$.} In this case, buyer $2$ is active in both auctions.
In $\pi_{C}$, the payment rule is the same as case $(ii)$ and, $p_1^{\pi_C} (\bm{v}, \bm{s}) = \max(r_1^{C},v_2) \cdot \mathbbm{1} \{ v_1 \geq \max(r_1^{C},v_2) \}.$ In $\pi_{E}$, the good is allocated to buyer $1$ if and only if $v_1 \geq s_1$ and $v_1 \geq v_2$ and the payment is
\begin{equation*}
    p_1^{\pi_E} (\bm{v}, \bm{s})  =  \max(s_1,v_2) \cdot \mathbbm{1} \{ v_1 \geq \max(s_1,v_2) \}. 
\end{equation*}
Consequently, we obtain that
\begin{equation*}
    \mathbb{E}_{v_1}\bigl[p_1^{\pi_C} (\bm{v}, \bm{s}) \vert \bm{s}; v_2 \bigr] = R_1(\max(r_1^C,v_2)) \geq R_1(\max(s_1,v_2)) = \mathbb{E}_{v_1}\bigl[p_1^{\pi_E} (\bm{v}, \bm{s}) \vert \bm{s}; v_2 \bigr],
\end{equation*}
where the inequality holds because
\begin{equation*}
    R_1(\max(r_1^C,v_2)) - R_1(\max(s_1,v_2)) =\begin{cases}
        R_1(r_1^C) - R_1(s_1) \stackrel{(a)}{\geq} 0 \qquad   \text{if  $v_2 < s_1$,}\\
        R_1(r_1^C) - R_1(v_2) \stackrel{(b)}{\geq} 0 \qquad   \text{if  $v_2 \in [s_1, r_1^C)$,}\\
       R_1(v_2) - R_1(v_2) = 0 \qquad   \text{if  $v_2 \geq r_1^C$,}
    \end{cases}
\end{equation*}
where $(a)$ follows from the same argument as in case (i) and $(b)$ holds because $v_2 \in [s_1, r_1^C)$ is non-empty if and only if $r_1^C > s_1$, in which case, $R_1(r_1^C) - R_1(v_2) = R_1(p_1^*) - R_1(v_2) \geq 0$, by~\eqref{eq:monopoly-optimality}.

By combining the three cases we conclude similarly to step 1, that \eqref{eq:to_prove_buyer1} holds.

Next, let us treat the case, $r_2^C > s_2$. Let $v_2 \in [a,b]$.

\emph{Case (i'): $v_2 < s_2.$} Similarly to $(i)$, we can show that     
\begin{equation*}
    \mathbb{E}_{v_1}\bigl[p_1^{\pi_C} (\bm{v}, \bm{s}) \vert \bm{s}; v_2 \bigr] = R_1(r_1^C) = R_1(\max(s_1,p_1^*)) \geq R_1(s_1) = \mathbb{E}_{v_1}\bigl[p_1^{\pi_E} (\bm{v}, \bm{s}) \vert \bm{s}; v_2 \bigr].
\end{equation*}

\emph{Case (ii'): $v_2 \in [s_2,r_2^{C}).$} In this case, buyer $2$ is active in $\pi_{E}$, but inactive in $\pi_{C}$.
In $\pi_{C}$, the payment rule remains $p_1^{\pi_C} (\bm{v}, \bm{s}) =r_1^{C} \cdot \mathbbm{1} \{ v_1 \geq r_1^{C} \}.$ In $\pi_{E}$, the good is allocated to buyer $1$ if and only if $v_1 \geq s_1$ and $v_1 \geq v_2.$ and the payment is,
\begin{equation*}
    p_1^{\pi_E} (\bm{v}, \bm{s})  =  \max(s_1,v_2) \cdot \mathbbm{1} \{ v_1 \geq \max(s_1,v_2) \}. 
\end{equation*}
Hence,
\begin{equation*}
    \mathbb{E}_{v_1}\bigl[p_1^{\pi_C} (\bm{v}, \bm{s}) \vert \bm{s}; v_2 \bigr] = R_1(r_1^C) \qquad \mbox{and} \qquad    \mathbb{E}_{v_1}\bigl[p_1^{\pi_E} (\bm{v}, \bm{s}) \vert \bm{s}; v_2 \bigr] = R_1(\max(s_1,v_2)).
\end{equation*}
If $s_1 \leq p_1^*$, we have that $r_1^C = p_1^*$ and \eqref{eq:monopoly-optimality}, implied that $R_1(r_1^C)  \geq R_1(\max(s_1,v_2)).$ Furthermore, if $s_1 \geq p_1^*$ and $v_2 \leq s_1$, we have that $R_1(r_1^C) = R_1(s_1) = R_1(\max(s_1,v_2))$. Hence, the last setting to argue is the case in which $p_1^* < s_1 < v_2 \leq r_2^C.$ We next show that such a setting is not possible.
By property $(ii)$ in \Cref{lem:order-price-signal}, there exists  $p^{\mathrm{ignore}} \in [a,b]$ such that
the relation $p_1^* < s_1$ implies that $p_1^* \geq p^{\mathrm{ignore}}$. Furthermore, property $(i)$ in \Cref{lem:order-price-signal} implies that as, $s_2 < r_2^C = p_2^*$, we must have that $p_2^* = p^{\mathrm{ignore}}$.
Combining these relations we obtain that $p^{\mathrm{ignore}} \leq p_1^* < p_2^* =p^{\mathrm{ignore}}$, which is impossible.

Therefore, we have $\mathbb{E}_{v_1}\bigl[p_1^{\pi_C} (\bm{v}, \bm{s}) \vert \bm{s}; v_2 \bigr] \geq \mathbb{E}_{v_1}\bigl[p_1^{\pi_E} (\bm{v}, \bm{s}) \vert \bm{s}; v_2 \bigr].$

\emph{Case (iii'): $v_2 \geq r_2^{C}.$}
In this case, buyer $2$ is active in both auctions.
In $\pi_{E}$, the payment rule is the same as case $(ii')$ and, $p_1^{\pi_E} (\bm{v}, \bm{s}) = \max(s_1,v_2) \cdot \mathbbm{1} \{ v_1 \geq \max(s_1,v_2) \}.$ In $\pi_{C}$, the good is allocated to buyer $1$ if and only if $v_1 \geq r_1^{C}$ and $v_1 \geq v_2$ and the payment is
\begin{equation*}
    p_1^{\pi_C} (\bm{v}, \bm{s})  =  \max(r_1^{C},v_2) \cdot \mathbbm{1} \{ v_1 \geq \max(r_1^{C},v_2) \}. 
\end{equation*}
And we conclude similarly to case $(iii)$ above.

All in all, we have established in all cases that 
\begin{equation}
\label{eq:ccl_buyer1}
        \mathbb{E}\bigl[p_1^{\pi_C} (\bm{v}, \bm{s}) \vert \bm{s} \bigr] = \mathbb{E}_{v_2}\bigl[ \mathbb{E}_{v_1}\bigl[p_1^{\pi_C} (\bm{v}, \bm{s}) \vert \bm{s}; v_2 \bigr]  \vert \bm{s} \bigr] \geq  \mathbb{E}_{v_2}\bigl[ \mathbb{E}_{v_1}\bigl[p_1^{\pi_E} (\bm{v}, \bm{s}) \vert \bm{s}; v_2 \bigr]  \vert \bm{s} \bigr] =    \mathbb{E}\bigl[p_1^{\pi_E} (\bm{v}, \bm{s}) \vert \bm{s} \bigr].
\end{equation}

Finally, by summing \eqref{eq:ccl_buyer2} and \eqref{eq:ccl_buyer1}, we obtain the desired result.

\end{proof}

\begin{lemma}
\label{lem:order-price-signal}
Assume $n=1$ and that $F$ is a regular distribution which admits a continuous, strictly positive density on $[a,b]$.
Let
\begin{equation*}
p^{\mathrm{ignore}} = \inf\{v\in[a,b]: \varphi_F(v)=0\},
\end{equation*}
and let $p^*(s)$ denote the revenue-maximizing posted price given signal $s$ (equivalently, the
smallest $v$ such that $\bar{\varphi}_{F_{\gamma,s}}(v)\ge 0$).
Then:
\begin{enumerate}
    \item[(i)] If $p^*(s) > s$, then $p^*(s)=p^{\mathrm{ignore}}$.
    \item[(ii)] If $p^*(s) < s$, then $p^*(s) \geq p^{\mathrm{ignore}}$.
\end{enumerate}
\end{lemma}

\begin{proof}[\textbf{Proof of \Cref{lem:order-price-signal}}]~\\
\noindent \textit{Step 1: Proof of (i).}
Assume that $p^*(s) > s$. By definition, 
\begin{align*}
    p^*(s) 
    &= \inf \{v \in [a,b] \, : \bar{\varphi}_{F_{\gamma,s}}(v)\ge 0 \}\\
    &\stackrel{(a)}{=}  \inf \{v \in [s,b] \, : \bar{\varphi}_{F_{\gamma,s}}(v)\ge 0 \} \\  &\stackrel{(b)}{=}  \inf \{v \in [s,b] \, : \varphi_{F}(v)\ge 0 \}\\
    &\stackrel{(c)}{=}  \inf \{v \in [s,b] \, : \varphi_{F}(v) = 0 \}\\
    &\stackrel{(d)}{=}  \inf \{v \in [a,b] \, : \varphi_{F}(v) = 0 \} = p^{\mathrm{ignore}},
\end{align*}
where $(a)$ holds because we assumes that $p^*(s) > s$, $(b)$ follows from the characterization developed in \Cref{thm:main}, $(c)$ holds by continuity of the cdf and density of $F$ and $(d)$ holds because $\varphi_{F}(s) < 0$ (as $p^*(s) > s$) and $\varphi_{F}$ is non-decreasing as $F$ is regular.

\medskip
\noindent\textit{Step 2: Proof of (ii).}
Assume that $p^*(s) < s$. Assume for the sake of contradiction that $p^*(s) < p^{\mathrm{ignore}}$, and remark that there exists $\epsilon > 0$ such that $p_\epsilon := p^*(s) + \epsilon < \min(p^{\mathrm{ignore}}, s)$. Fix such $\epsilon$.
We note that,
\begin{align*}
    p^*(s) \cdot (1 -F_{\gamma,s}(p^*(s)-)) &\stackrel{(a)}{=} p^*(s) \cdot (1 -\gamma F(p^*(s)-)) \\
    &= \gamma \cdot p^*(s) \cdot (1 - F(p^*(s)-)) + (1-\gamma) \cdot p^*(s) \\
    &\stackrel{(b)}{<}\gamma \cdot p_\epsilon \cdot (1 - F(p_\epsilon-)) + (1-\gamma) \cdot p_\epsilon\\
    &\stackrel{(c)}{=} p_\epsilon \cdot (1 -F_{\gamma,s}(p_\epsilon-)),
\end{align*}
where $(a)$ and $(c)$ follow from the definition of $F_{\gamma,s}$ on the interval $[a,s)$. To establish $(b)$ remark that, by regularity of $F$, the mapping $p \mapsto p \cdot (1-F(p-))$ is unimodal and non-decreasing before $p^{\mathrm{ignore}}$. Given that $p^*(s) < p_\epsilon < p^{\mathrm{ignore}}$ we have that $p^*(s) \cdot (1-F(p^*(s)-)) \leq p_\epsilon \cdot (1 - F(p_\epsilon-))$. The inequality in $(b)$ is strict because of the second term.

This contradicts optimality of $p^*(s)$. Hence, $p^*(s) \geq p^{\mathrm{ignore}}$.
\end{proof}

\end{document}